\newsavebox\tmpbox
\newtheorem{remark}{Remark}
\DeclarePairedDelimiter\ceil{\lceil}{\rceil}
\DeclarePairedDelimiter\floor{\lfloor}{\rfloor}
\DeclareMathOperator{\var}{Var}
\DeclareMathOperator{\mean}{E}
\DeclareMathOperator{\prb}{Pr}
\newcommand{\prob}[1]{\prb[#1]}
\newcommand{\xB}{\mathcal{B}}
\let\eps\epsilon
\DeclareMathOperator{\frc}{frac}
\newcommand{\BigCrunch}{\vspace*{-1.5em}}
\newcommand{\SmallCrunch}{\vspace*{-1ex}}
\newcommand{\eat}[1]{}
\newcommand{\longPaper}{\xspace} 
\begin{document}

\title[Temporally-Biased Sampling]{Temporally-Biased Sampling for Online Model Management}


\author{Brian Hentschel}
\authornote{Work performed at IBM Research -- Almaden}
\affiliation{%
  \institution{Harvard University}
}
\email{bhentschel@g.harvard.edu}

\author{Peter J. Haas\raisebox{0pt}{\small{*}}}
\affiliation{%
  \institution{University of Massachusetts}
}
\email{phaas@cs.umass.edu}

\author{Yuanyuan Tian}
\affiliation{%
  \institution{IBM Research -- Almaden}
}
\email{ytian@us.ibm.com}



\begin{abstract}
To maintain the accuracy of supervised learning models in the presence of evolving data streams, we provide temporally-biased sampling schemes that weight recent data most heavily, with inclusion probabilities for a given data item decaying exponentially over time. We then periodically retrain the models on the current sample. This approach speeds up the training process relative to training on all of the data. Moreover, time-biasing lets the models adapt to recent changes in the data while---unlike in a sliding-window approach---still keeping some old data to ensure  robustness in the face of temporary fluctuations and periodicities in the data values. In addition, the sampling-based approach allows existing analytic algorithms for static data to be applied to dynamic streaming data essentially without change. We provide and analyze both a simple sampling scheme (T-TBS) that probabilistically maintains a target sample size and a novel reservoir-based scheme (R-TBS) that is the first to provide both complete control over the decay rate and a guaranteed upper bound on the sample size, while maximizing both expected sample size and sample-size stability. The latter scheme rests on the notion of a ``fractional sample'' and, unlike T-TBS, allows for data arrival rates that are unknown and time varying. R-TBS and T-TBS are of independent interest, extending the known set of unequal-probability sampling schemes.
We discuss distributed implementation strategies; experiments in Spark illuminate the performance and scalability of the algorithms, and show that our approach can increase machine learning robustness in the face of evolving data. 

\end{abstract}

\maketitle


\section{Introduction}\label{sec:intro}

A key challenge for machine learning (ML) is to keep ML models from becoming stale in the presence of evolving data. In the context of the emerging Internet of Things (IoT), for example, the data comprises dynamically changing sensor streams~\cite{WhitmoreAX15}, and a failure to adapt to changing data can lead to a loss of predictive power.

One way to deal with this problem is to re-eng\-i\-neer existing static supervised learning algorithms to become adaptive. Some parametric algorithms such as SVM can indeed be re-engineered so that the parameters are time-varying, but for non-parametric algorithms such as kNN-based classification, it is not at all clear how re-engineering can be accomplished. We therefore consider alternative approaches in which we periodically retrain ML models, allowing static ML algorithms to be used in dynamic settings essentially as-is. There are several possible retraining approaches.

\textbf{Retraining on cumulative data:} Periodically retraining a model on all of the data that has arrived so far is clearly infeasible because of the huge volume of data involved. Moreover, recent data is swamped by the massive amount of past data, so the retrained model is not sufficiently adaptive.

\textbf{Sliding windows:} A simple sliding-window approach would be to, e.g., periodically retrain on the data from the last two hours. If the data arrival rate is high and there is no bound on memory, then one must deal with long retraining times caused by large amounts of data in the window. The simplest way to bound the window size is to retain the last $n$ items. Alternatively, one could try to subsample within the time-based window~\cite{GemullaL08}. The fundamental problem with all of these bounding approaches is that old data is completely forgotten; the problem is especially severe when the data arrival rate is high. This can undermine the robustness of an ML model in situations where old patterns can reassert themselves. For example, a singular event such as a holiday, stock market drop, or terrorist attack can temporarily disrupt normal data patterns, which will reestablish themselves once the effect of the event dies down. Periodic data patterns can lead to the same phenomenon. Another example, from \cite{XieTSBH15}, concerns influencers on Twitter: a prolific tweeter might temporarily stop tweeting due to travel, illness, or some other reason, and hence be completely forgotten in a sliding-window approach. Indeed, in real-world Twitter data, almost a quarter of top influencers were of this type, and were missed by a sliding window approach.

\eat{A simple sliding-window approach---e.g., periodically retrain on the data from the last two hours---has two drawbacks. First, if the data-arrival rate is high and memory is limited, then one must choose a very short window length, making the retrained model myopic; if one does not bound the amount of data in a window, then one must deal with long retraining times caused by large data volumes. If the arrival rate varies in an unpredictable manner, then choosing a window length is very challenging. Perhaps more importantly, old data is completely forgotten under a sliding window approach. This can undermine the robustness of an ML model in situations where old patterns can reassert themselves. For example, a singular event such as a holiday, stock market drop, or terrorist attack can temporarily disrupt normal data patterns, which will reestablish themselves once the effect of the event dies down. Periodic data patterns can lead to the same phenomenon. Another example, from \cite{XieTSBH15}, concerns influencers on Twitter: a prolific tweeter might temporarily stop tweeting because of travel, illness, or some other reason, and hence be completely forgotten in a sliding-window approach. Indeed, in real-world Twitter data, almost a quarter of top influencers were of this type, and were missed by a sliding window approach.}

\textbf{Temporally biased sampling:} An appealing alternative is a temporally biased sampling-based approach, i.e., maintaining a sample that heavily emphasizes recent data but also contains a small amount of older data, and periodically retraining a model on the sample. By using a time-biased sample, the retraining costs can be held to an acceptable level while not sacrificing robustness in the presence of recurrent patterns. This approach was proposed in \cite{XieTSBH15} in the setting of graph analysis algorithms, and has recently been adopted in the MacroBase system~\cite{BailisGMNRS17}. The orthogonal problem of choosing when to retrain a model is also an important question, and is related to, e.g.,  the literature on ``concept drift''~\cite{GamaZBPB14}; in this paper we focus on the problem of how to efficiently maintain a time-biased sample.

In more detail, our time-biased sampling algorithms ensure that the ``appearance probability'' for a given data item---i.e., the probability that the item appears in the current sample---decays over time at a controlled exponential rate. Specifically, we assume that items arrive in batches (see the next section for more details), and our goal is to ensure that (i) our sample is representative in that all items in a given batch are equally likely to be in the sample, and (ii) if items $i$ and $j$ belong to batches that have arrived at (wall clock) times $t'$ and $t''$ with $t'\le t''$, then for any time $t\ge t''$ our sample $S_t$ is such that
\begin{equation}\label{eq:expratio}
\prob{i\in S_t}/\prob{j\in S_t}=e^{-\lambda(t''-t')}.
\end{equation}
Thus items with a given timestamp are sampled uniformly, and items with different timestamps are handled in a carefully controlled manner. The criterion in \eqref{eq:expratio} is natural and appealing in applications and, importantly, is interpretable and understandable to users. As discussed in \cite{XieTSBH15}, the value of the decay rate $\lambda$ can be chosen to meet application-specific criteria. For example, by setting $\lambda=0.058$, around 10\% of the data items from 40 batches ago are included in the current analysis. As another example, suppose that, $k=150$ batches ago, an entity such as a person or city was represented by $n=1000$ data items  and we want to ensure that, with probability $q=0.01$,  at least one of these data items remains in the current sample. Then we would set $\lambda=-k^{-1}\ln\bigl(1-(1-q)^{1/n}\bigr)\approx 0.077$. If training data is available, $\lambda$ can also be chosen to maximize accuracy via cross validation.

The exponential form of the decay function has been adopted by the majority of time-biased-sampling applications in practice because otherwise one would typically need to track the arrival time of every data item---both in and outside of the sample---and decay each item individually at an update, which would make the sampling operation intolerably slow. (A ``forward decay" approach that avoids this difficulty, but with its own costs, has been proposed in \cite{CormodeSSX09}; we plan to investigate forward decay in future work.) Exponential decay functions make update operations fast and simple.

For the case in which the item-arrival rate is high, the main issue is to keep the sample size from becoming too large. On the other hand, when the incoming batches become very small or widely spaced, the sample sizes for all of the time-biased algorithms that we discuss (as well as for sliding-window schemes based on wall-clock time) can become small. This is a natural consequence of treating recent items as more important, and is characteristic of any sampling scheme that satisfies \eqref{eq:expratio}. We emphasize that---as shown in our experiments---a smaller, but carefully time-biased sample typically yields greater prediction accuracy than a sample that is larger due to overloading with too much recent data or too much old data. I.e., more sample data is not always better. Indeed, with respect to model management, this decay property can be viewed as a feature in that, if the data stream dries up and the sample decays to a very small size, then this is a signal that there is not enough new data to reliably retrain the model, and that the current version should be kept for now.

It is surprisingly hard to both enforce \eqref{eq:expratio} and to bound the sample size. As discussed in detail in Section~\ref{sec:relwork}, prior algorithms that bound the sample size either cannot consistently enforce \eqref{eq:expratio} or cannot handle wall-clock time. Examples of the former include algorithms based on the A-Res scheme of Efraimidis and Spirakis~\cite{efraimidisS06}, and Chao's algorithm~\cite{Chao82}. A-Res enforces conditions on the \emph{acceptance} probabilities of items; this leads to appearance probabilities which, unlike \eqref{eq:expratio}, are both hard to compute and not intuitive. A similar example is provided by Chao's algorithm~\cite{Chao82}. In Appendix~\ref{sec:chao}\longPaper we demonstrate how the algorithm can be specialized to the case of exponential decay and modified to handle batch arrivals. We then show that the resulting algorithm fails to enforce \eqref{eq:expratio} either when initially filling up an empty sample or in the presence of data that arrives slowly relative to the decay rate, and hence fails if the data rate fluctuates too much. The second type of algorithm, due to Aggarwal~\cite{Aggarwal06} can only control appearance probabilities based on the indices of the data items. For example, after $n$ items arrive, one could require that, with 95\% probability, the $(n-k)$th item should still be in the sample for some specified $k<n$. If the data arrival rate is constant, then this might correspond to a constraint of the form ``with 95\% probability a data item that arrived 10 hours ago is still in the sample'', which is often more natural in applications. For varying arrival rates, however, it is impossible to enforce the latter type of constraint, and a large batch of arriving data can prematurely flush out older data. Thus our new sampling schemes are interesting in their own right, significantly expanding the set of unequal-probability sampling techniques.

\eat{The literature on time-biased sampling, which is relatively sparse, is reviewed in Section~\ref{sec:relwork}.  In brief, prior algorithms that provide precise control over the decay rate either cannot handle wall-clock time or impose stringent assumptions on the data arrival process. The first type of algorithm, due to Aggarwal~\cite{Aggarwal06} can only control inclusion probabilities based on the indices of the data items. For example, after $n$ items arrive, one could require that, with 95\% probability, the $(n-k)$th item should still be in the sample for some specified $k<n$. If the data arrival rate is constant, then this might correspond to a constraint of the form ``with 95\% probability a data item that arrived 10 hours ago is still in the sample'', which is often more natural in applications. For varying arrival rates, however, it is impossible to enforce the latter type of constraint, and a large batch of arriving data can prematurely flush out older data. The other type of algorithm, due to Chao~\cite{Chao82}, cannot precisely enforce exponentially decaying sample inclusion probabilities either when initially filling up an empty sample or in the presence of data that arrives slowly relative to the decay rate. Thus our new sampling schemes are interesting in their own right, significantly expanding the set of unequal-probability sampling techniques.}

\textbf{T-TBS:} We first provide and analyze Target\-ed-Size Time-Biased Sampling (T-TBS), a simple algorithm that generalizes the sampling scheme in \cite{XieTSBH15}. T-TBS allows complete control over the decay rate (expressed in wall-clock time) and probabilistically maintains a target sample size. That is, the expected and average sample sizes converge to the target and the probability of large deviations from the target decreases exponentially or faster in both the target size and the deviation size. T-TBS is simple and highly scalable when applicable, but only works under the strong restriction that the mean data arrival rate is known and constant. There are scenarios where T-TBS might be a good choice (see Section~\ref{sec:ttbs}), but many applications have non-constant, unknown mean arrival rates or cannot tolerate sample overflows.

\textbf{R-TBS:} We then provide a novel algorithm, Reservoir-Based Time-Biased Sampling (R-TBS), that is the first to simultaneously enforce \eqref{eq:expratio} at all times, provide a guaranteed upper bound on the sample size, and allow unknown, varying data arrival rates. Guaranteed bounds are desirable because they avoid memory management issues associated with sample overflows, especially when large numbers of samples are being maintained---so that the probability of \emph{some} sample overflowing is high---or when sampling is being performed in a limited memory setting such as at the ``edge'' of the IoT. Also, bounded samples reduce variability in retraining times and do not impose upper limits on the incoming data flow.

The idea behind R-TBS is to adapt the classic reservoir sampling algorithm, which bounds the sample size but does not allow time biasing. Our approach rests on the notion of a ``fractional'' sample whose nonnegative size is real-valued in an appropriate sense. We show that, over all sampling algorithms having exponential decay, R-TBS maximizes the expected sample size whenever the data arrival rate is low and also minimizes the sample-size variability. 

\textbf{Distributed implementation:} Both T-TBS and R-TBS can be parallelized. Whereas T-TBS is relatively straightforward to implement, an efficient distributed implementation of R-TBS is nontrivial. We exploit various implementation strategies to reduce I/O relative to other approaches, avoid unnecessary concurrency control, and make decentralized decisions about which items to insert into, or delete from, the reservoir.


\textbf{Organization:} The rest of the paper is organized as follows. In Section~\ref{sec:background} we formally describe our batch-arrival problem setting and discuss two prior simple sampling schemes: a simple Bernoulli scheme as in~\cite{XieTSBH15} and the classical reservoir sampling scheme, modified for batch arrivals.  These methods either bound the sample size but do not control the decay rate, or control the decay rate but not the sample size. We next present and analyze the T-TBS and R-TBS algorithms in Section~\ref{sec:ttbs} and Section~\ref{sec:tbsamp}. We describe the distributed implementation in Section~\ref{sec:imp}, and Section~\ref{sec:exp} contains experimental results. We review the related literature in Section~\ref{sec:relwork} and conclude in Section~\ref{sec:concl}.
\SmallCrunch

\section{Setting and Prior Schemes}\label{sec:background}

After introducing our problem setting, we discuss two prior sampling schemes that provide context for our current work: simple Bernoulli time-biased sampling (B-TBS) with no sample-size control and the classical reservoir sampling algorithm (with no time biasing), modified for batch arrivals (B-RS).  

\textbf{Setting:} 
Items arrive in \emph{batches} $\xB_1,\xB_2,\ldots$, at time points $t=1,2,\ldots$, where each batch contains 0 or more items.  This simple integer batch sequence often arises from the discretization of time~\cite{QianHSWZZZYZ13,ZahariaDLSS13}. Specifically, the continuous time domain is partitioned into intervals of length $\Delta$, and the items are observed only at times $\{k\Delta:k=0,1,2,\ldots\}$. All items that arrive in an interval $\bigl[k\Delta,(k+1)\Delta\bigr)$ are treated as if they arrived at time $k\Delta$, i.e., at the start of the interval, so that all items in batch $\xB_i$ have time stamp~$i\Delta$, or simply time stamp~$i$ if time is measured in units of length $\Delta$. As discussed below, our results can straightforwardly be extended to arbitrary real-valued batch-arrival times.

Our goal is to generate a sequence $\{S_t\}_{t\ge 0}$, where $S_t$ is a sample of the items that have arrived at or prior to time~$t$, i.e., a sample of the items in $U_t=S_0\cup\bigl(\bigcup_{i=1}^t \xB_i\bigr)$. Here we allow the initial sample $S_0$ to start out nonempty. These samples should be biased towards recent items so as to enforce \eqref{eq:expratio} for $i\in\xB_{t'}$ and $j\in\xB_{t''}$ while keeping the sample size as close as possible to (and preferably never exceeding) a specified target~$n$. 

\eat{By ``precisely controlled manner'', we mean that, for any items $i\in \xB_{t'}$ and $j\in \xB_{t''}$ with $t''\ge t'\ge 1$,
\begin{equation}\label{eq:expratio}
\prob{i\in S_t}/\prob{j\in S_t}=e^{-\lambda(t''-t')}
\end{equation}
for $t\ge t''$. That is, our goal is to precisely control the relative appearance probabilities via a decay rate $\lambda$. This requirement, expressed in terms of wall-clock time, is simple and very natural in many applications. The exponential form of the decay function has been adopted by the majority of time-biased-sampling applications in practice because otherwise we would typically need to track the arrival time of every data item---both in and outside of the sample---and decay each item individually at an update, which would make the sampling operation intolerably slow. (An approach that avoids this difficulty has been proposed in \cite{CormodeSSX09}, but has several practical drawbacks as discussed in Section~\ref{sec:relwork}.) Use of an exponential decay function makes update operations fast and simple.

As discussed in \cite{XieTSBH15}, the specific value chosen for $\lambda$ depends on the application of interest. For example, by setting $p=e^{-\lambda}=0.8$, only around 0.1\% of the data items from 30 batches ago are included in the current analysis. As another example, suppose that, $k=60$ batches ago, an entity such as a person or city was represented by $n=1000$ data items  and we want to ensure that, with probability $q=0.01$,  these data items are still represented in the current sample, where ``represented'' means having at least one data item remaining. Then we would set $p=[1-(1-q)^{1/n}]^{1/k}\approx 0.825$. 

For the case in which the item-arrival rate is high, the main issue is to keep the sample size from becoming too large. On the other hand, when the incoming batches become very small or widely spaced, the sample sizes for all of the time-biased algorithms that we discuss (as well as for sliding-window schemes based on wall-clock time) fall below the target. This is a natural consequence of treating recent items as more important, and is characteristic of any sampling scheme in that satisfies \eqref{eq:expratio}. We emphasize that---as shown in our experiments---a smaller, but carefully time-biased sample typically yields greater prediction accuracy than a sample that is larger due to overloading with too much recent data or too much old data. I.e., more sample data is not always better. Within the class of sampling algorithms that support exponential decay, our new R-TBS algorithm maximizes the expected sample size whenever the sample is not saturated.}

Our assumption that batches arrive at integer time points can easily be dropped. In all of our algorithms, inclusion probabilities---and, as discussed later, closely related item ``weights''---are updated at a batch arrival time $t'$ with respect to their values at the previous time~$t=t'-1$ via multiplication by $e^{-\lambda}$. To extend our algorithms to handle arbitrary successive batch arrival times $t$ and $t'$, we simply multiply instead by $e^{-\lambda(t'-t)}$. Thus our results can be applied to arbitrary sequences of real-valued batch arrival times, and hence to an arbitrary sequences of item arrivals (since batches can comprise single items). 


\textbf{Bernoulli Time-Biased Sampling (B-TBS):} In the simplest sampling scheme, at each time~$t$, we accept each incoming item $x\in \xB_t$ into the sample with probability~1. At each subsequent time $t'>t$, we flip a coin independently for each item currently in the sample: an item is retained in the sample with probability~$p=e^{-\lambda}$ and removed with probability~$1-p$. It is straightforward to adapt the algorithm to batch arrivals; see Appendix~\ref{sec:BTBS}\longPaper, where we show that $\prob{x\in S_{t'}}= e^{-\lambda(t'-t)}$ for $x\in\xB_t$, implying \eqref{eq:expratio}. This is essentially the algorithm used, e.g., in \cite{XieTSBH15} to implement time-biased edge sampling in dynamic graphs. The user, however, cannot independently control the expected sample size, which is completely determined by $\lambda$ and the sizes of the incoming batches. In particular, if the batch sizes systematically grow over time, then sample size will grow without bound. Arguments in \cite{XieTSBH15} show that if $\sup_t |\xB_t|<\infty$, then the sample size can be bounded, but only probabilistically. See Remark~\ref{rem:alg1} below for extensions and refinements of these results.

\textbf{Batched Reservoir Sampling (B-RS):} The classic reservoir sampling algorithm can be modified to handle batch arrivals; see Appendix~\ref{sec:BRS}\longPaper. Although B-RS guarantees an upper bound on the sample size, it does not support time biasing. The R-TBS algorithm (Section~\ref{sec:tbsamp}) maintains a bounded reservoir as in B-RS while simultaneously allowing time-biased sampling.

\section{Targeted-Size TBS}\label{sec:ttbs}

As a first step towards time-biased sampling with a controlled sample size, we describe the simple T-TBS scheme, which improves upon the simple Bernoulli sampling scheme B-TBS by ensuring the inclusion property in \eqref{eq:expratio} while providing probabilistic guarantees on the sample size. We require that the mean batch size equals a constant $b$ that is both known in advance and ``large enough'' in that $b\ge n(1-e^{-\lambda})$, where $n$ is the target sample size and $\lambda$ is the decay rate as before. The requirement on $b$ ensures that, at the target sample size, items arrive on average at least as fast as they decay.

\begin{algorithm}[ht]
\caption{Targeted-size TBS (T-TBS)}\label{alg:targsamp}
{\footnotesize
$\lambda$: decay factor ($\ge 0$)\;
$n$: target sample size\;
$b$: assumed mean batch size such that $b\ge n(1-e^{-\lambda})$\;
\BlankLine
Initialize: $S\gets S_0$; $p \gets e^{-\lambda}$; $q\gets n(1-e^{-\lambda})/b$\;
\For{$t\gets1,2,\ldots$}{
$m \gets \textsc{Binomial}(|S|,p)$\Comment*[r]{simulate $|S|$ trials}\label{ln:binoma}
$S \gets \textsc{Sample}(S,m)$\Comment*[r]{retain $m$ random elements}
$k \gets \textsc{Binomial}(|\xB_t|,q)$\;\label{ln:binomb}
$B'_t \gets \textsc{Sample}(\xB_t,k)$\Comment*[r]{down-sample new batch}
$S\gets S\cup B'_t$\;\label{ln:taccept}
output $S$}
}
\end{algorithm}

The pseudocode is given as Algorithm~\ref{alg:targsamp}. T-TBS is similar to B-TBS in that we downsample by performing a coin flip for each item with retention probability $p$. Unlike B-TBS, we downsample the incoming batches at rate $q=n(1-e^{-\lambda})/b$, which ensures that $n$ becomes the ``equilibrium'' sample size. Specifically, when the sample size equals $n$, the expected number $n(1-e^{-\lambda})$ of current items deleted at an update equals the expected number $qb$ of inserted new items, which causes the sample size to drift towards $n$. Arguing similarly to Appendix~\ref{sec:BTBS}\longPaper, we have for $t'\ge t\ge 1$ and $x\in \xB_t$ that $\prob{x\in S_{t'}} = qe^{-\lambda(t'-t)}$, so that the key relative appearance property in \eqref{eq:expratio} holds.

For efficiency, the algorithm exploits the fact that for $k$ independent trials, each having success probability~$r$, the total number of successes has a binomial distribution with parameters $k$ and $r$. Thus, in lines~\ref{ln:binoma} and \ref{ln:binomb}, the algorithm simulates the coin tosses by directly generating the number of successes $m$ or $k$---which can be done using standard algorithms \cite{binomgen88}---and then retaining $m$ or $k$ randomly chosen items. So the function $\textsc{Binomial}(j,r)$ returns a random sample from the binomial distribution with $j$ independent trials and success probability $r$ per trial, and the function $\textsc{Sample}(A,m)$ returns a uniform random sample, without replacement, containing $\min(m,|A|)$ elements of the set $A$; note that the function call $\textsc{Sample}(A,0)$ returns an empty sample for any empty or nonempty $A$.


\begin{figure*}[tbh]
 \centering
	\subfigure[Growing Batch Size]{
	   \label{fig:grow}\includegraphics[width=0.23\linewidth]{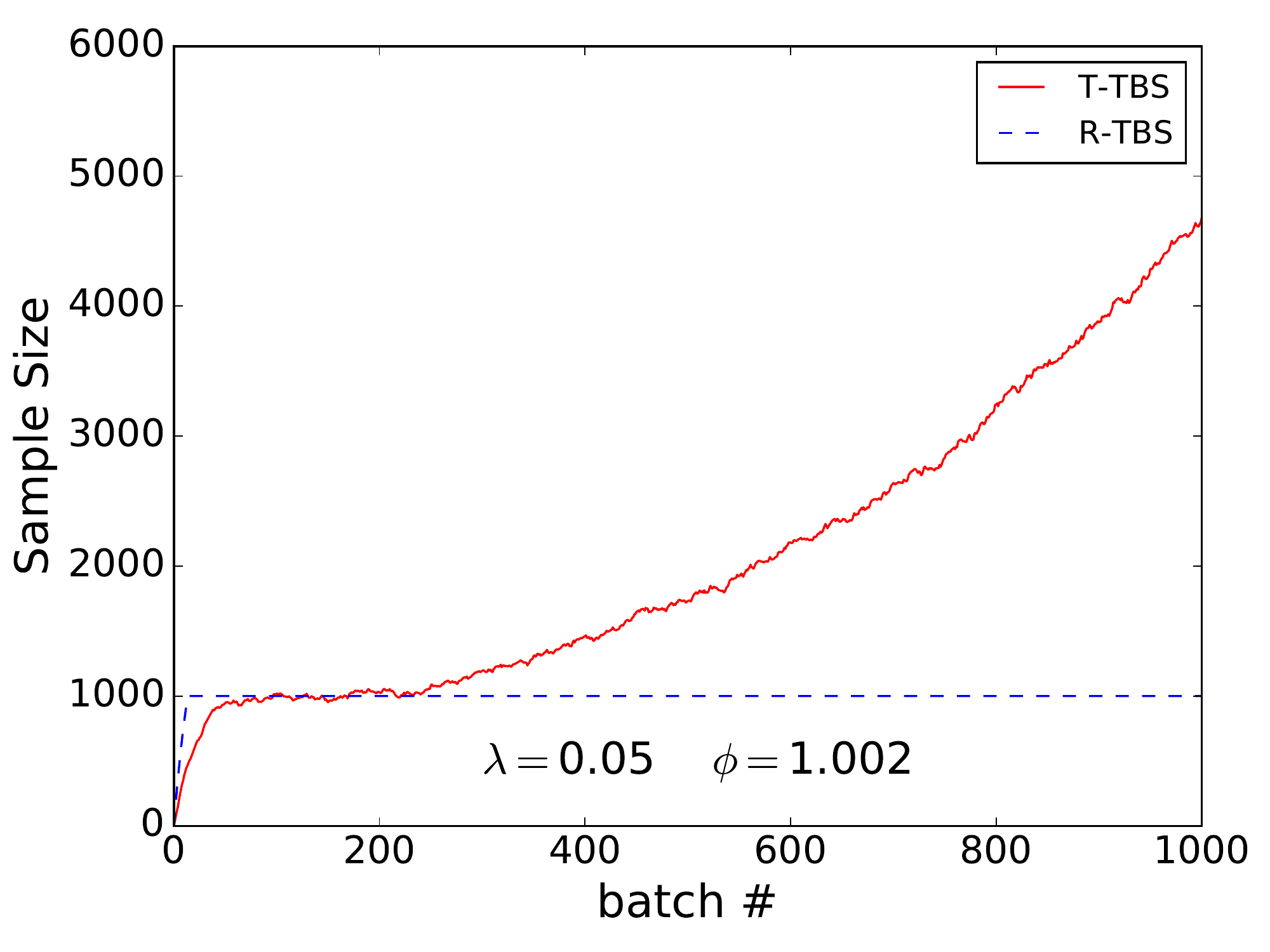}} 
	\subfigure[Stable Batch Size (Det.)]{
	   \label{fig:stableD}\includegraphics[width=0.23\linewidth]{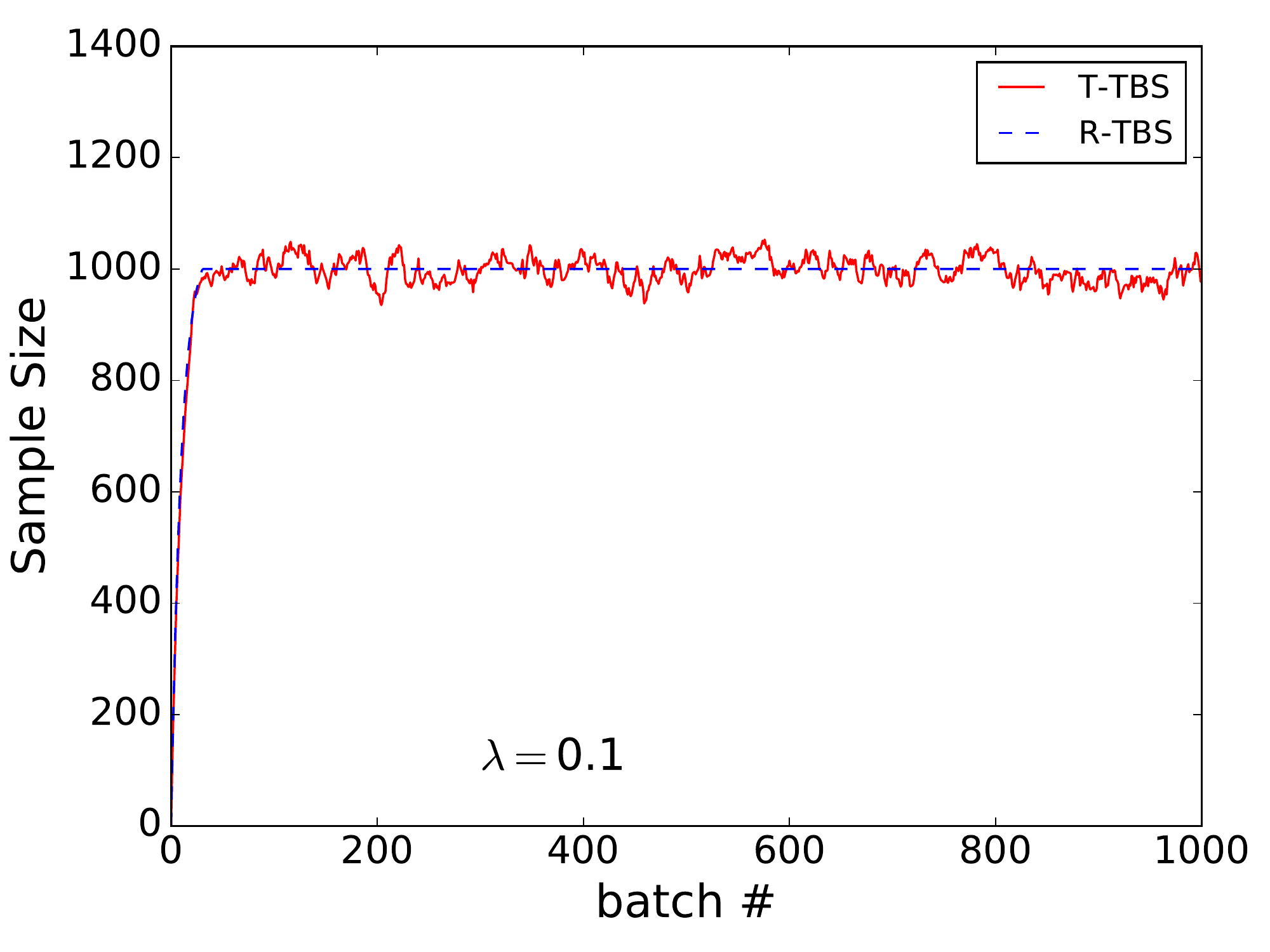}} 
	 \subfigure[Stable Batch Size (Unif.)]{
	   \label{fig:stableU}\includegraphics[width=0.23\linewidth]{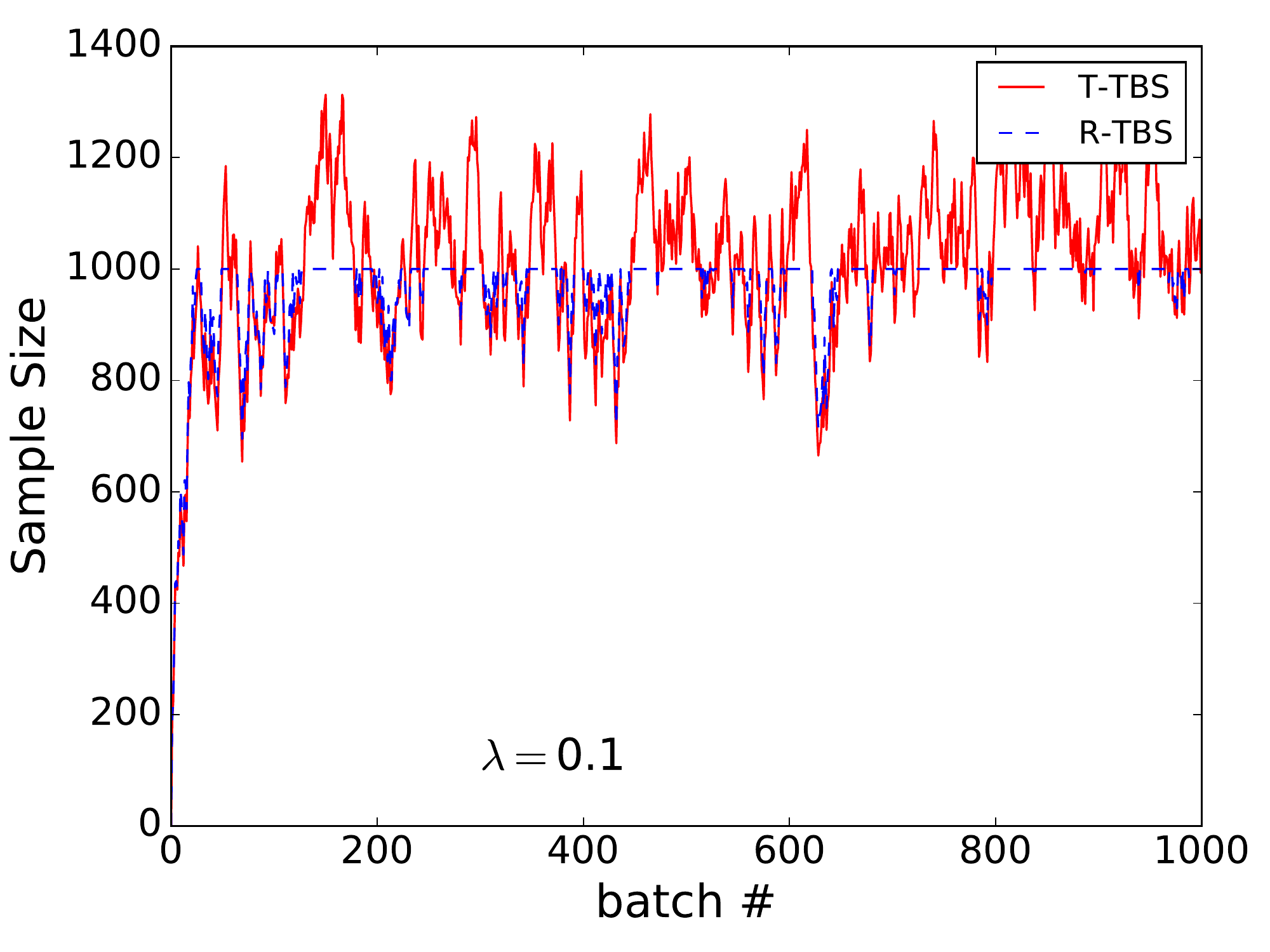}}
	 \subfigure[Decaying Batch Size]{
	   \label{fig:decay2}\includegraphics[width=0.23\linewidth]{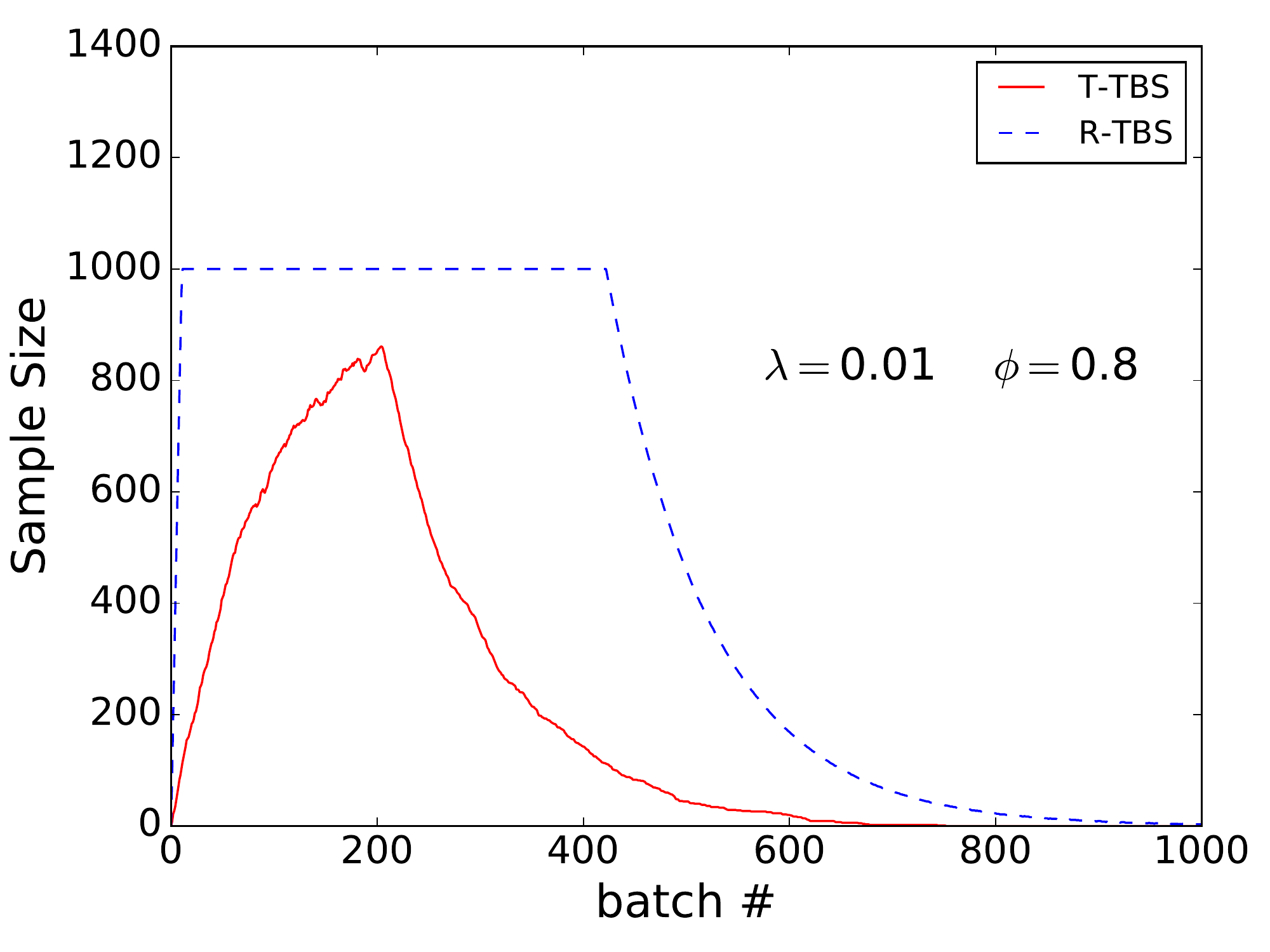}}
\BigCrunch
\caption{\label{fig:breakdown}Targeted TBS: Sample Size Behavior, $\lambda=$ decay rate and $\phi=$ batch size multiplier.}
\BigCrunch
\end{figure*}

Theorem~\ref{th:recurr} below precisely describes the behavior of the sample size; the proof---along with the proofs of most other results in the paper---is given in Appendix~\ref{sec:proofs}\longPaper. Denote by $B_t=|\xB_t|$ the (possibly random) size of $\xB_t$ for $t\ge 1$ and by $C_t=|S_t|$ the sample size at time~$t$ for $t\ge 0$; assume that $C_0$ is a finite deterministic constant. Define the \emph{upper-support ratio} for a random batch size $B$  as $r=b^*/b\ge 1$, where $b=\mean[B]$ and $b^*$ is the smallest positive number such that $P[B\le b^*]=1$; set $r=\infty$ if $B$ can be arbitrarily large. For $r\in[1,\infty)$, set
\[
\nu^+_{\eps,r}=(1+\eps)\ln\bigl((1+\eps)/r\bigr)-(1+\eps-r).
\]
for $\eps>0$ and
\[
\nu^-_{\eps,r} = (1-\eps)\ln\bigl((1-\eps)/r\bigr)-(1-\eps-r)
\]
for $\eps\in(0,1)$. Note that $\nu^+_{\eps,r}>0$ and is strictly increasing in $\eps$ for $\eps>r-1$, and that $\nu^-_{\eps,r}$ increases from $r-1-\ln r$ to $r$ as $\eps$ increases from 0 to 1. Write ``i.o.'' to denote that an event occurs ``infinitely often'', i.e., for infinitely many values of $t$, and write ``w.p.1'' for ``with probability~1''. 
\begin{theorem}\label{th:recurr}
Suppose that the batch sizes $\{B_t\}_{t\ge 1}$  are i.i.d\ with common mean $b\ge n(1-e^{-\lambda})$, finite variance, and upper support ratio $r$. Then, for any $p=e^{-\lambda}<1$,
\begin{enumerate}\parskip=0pt
\item[(i)] for all $m\ge 0$, we have $\prob{C_t=m\text{ i.o.}}=1$;
\item[(ii)] $\mean[C_t]=n+p^t(C_0-n)$ for $t>0$;
\item[(iii)] $\lim_{t\to\infty}(1/t)\sum_{i=0}^t C_i=n$ w.p.1;
\item[(iv)] if $C_0=n$ and $r<\infty$, then
\begin{enumerate}
\item[(a)] $\prob{C_t\ge (1+\eps)n}\le e^{-n\nu^+_{\eps,r}}\bigl(1+O(n\eps p^t)\bigr)$
and
\item[(b)] $\prob{C_t \le (1-\eps)n}\le e^{-n\nu^-_{\eps,r}}\Bigl(1+O\bigl(n(1-\eps )p^t\bigr)\Bigr)$
\end{enumerate}
for (a) $\eps,t>0$ and (b) $\eps\in(0,1)$ and $t\ge \ln\eps/\ln p$.
\end{enumerate}
\end{theorem}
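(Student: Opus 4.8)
The plan is to treat $\{C_t\}_{t\ge0}$ as a time-homogeneous Markov chain driven by the one-step recursion implicit in Algorithm~\ref{alg:targsamp}: conditionally on $C_{t-1}$ and the batch size $B_t$,
\[
C_t \stackrel{d}{=} \text{Bin}(C_{t-1},p) + \text{Bin}(B_t,q),
\]
the two binomials independent, where the first counts surviving sample items and the second counts accepted new items. Part~(ii) then follows immediately: taking conditional expectations gives $\mean[C_t\mid \xF_{t-1}] = pC_{t-1} + qb = pC_{t-1} + n(1-p)$, using $qb = n(1-e^{-\lambda}) = n(1-p)$. Taking total expectations yields the linear recursion $\mean[C_t] = p\,\mean[C_{t-1}] + n(1-p)$ with fixed point $n$; writing $a_t = \mean[C_t]-n$ gives $a_t = p\,a_{t-1}$, hence $a_t = p^t(C_0-n)$, which is~(ii).

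For~(i) and~(iii) I would invoke Markov-chain theory. The chain is irreducible on $\mathbb{Z}_{\ge0}$: downward moves are available because, from any state $c$, the chain can retain exactly any $m\le c$ items (probability $\binom{c}{m}p^m(1-p)^{c-m}>0$) and accept none, while upward moves accumulate over several steps since batches are nonempty with positive probability. The recursion that drives~(ii) also supplies a Foster--Lyapunov drift: with $V(c)=c$,
\[
\mean[V(C_t)\mid C_{t-1}=c]-V(c) = (1-p)(n-c),
\]
which is strictly negative once $c>n$, so outside a finite set the drift is bounded away from $0$; the finite variance of $B_t$ bounds the one-step fluctuation. Foster's criterion then gives positive recurrence, and recurrence of an irreducible chain means every state is hit infinitely often w.p.1, which is~(i). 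For~(iii), positive recurrence yields a unique stationary law $\pi$; stationarity forces $\mu = \mean_\pi[C] = p\mu + n(1-p)$, i.e.\ $\mu=n$, and finite batch variance gives a finite stationary second moment, so the identity map is $\pi$-integrable. The Markov-chain ergodic theorem then gives $(1/t)\sum_{i=0}^t C_i \to \mean_\pi[C] = n$ w.p.1, which is~(iii).

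The substance of the theorem is the large-deviation bound~(iv), which I would attack with a Chernoff argument. Conditioning on $C_0=n$ and on the batch sizes $B_1,\dots,B_t$, each of the $n$ initial items lies in $S_t$ independently with probability $p^t$ and each item of batch~$s$ lies in $S_t$ independently with probability $qp^{t-s}$ (accepted, then surviving $t-s$ steps); hence $C_t$ is a sum of independent Bernoulli variables and its conditional MGF factorizes. Averaging over the i.i.d.\ batches,
\[
\mean\bigl[e^{\theta C_t}\bigr] = \bigl(1+p^t(e^\theta-1)\bigr)^n \prod_{s=1}^{t}\mean\Bigl[\bigl(1+qp^{t-s}(e^\theta-1)\bigr)^{B}\Bigr].
\]
For the upper tail I would take $\theta>0$ and bound each batch factor using convexity of $\beta\mapsto(1+x)^\beta$ together with the support constraint $0\le B\le b^*=rb$: the maximum of $\mean[(1+x)^B]$ over distributions with mean $b$ supported in $[0,b^*]$ is attained by the two-point law on $\{0,b^*\}$, which is exactly what injects the ratio $r$ and produces a Poisson$(r)$-type rate. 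Feeding this into $\prob{C_t\ge(1+\eps)n}\le e^{-\theta(1+\eps)n}\mean[e^{\theta C_t}]$, summing the geometric series $\sum_s p^{t-s}$, and optimizing at $\theta=\ln\bigl((1+\eps)/r\bigr)$ should collapse the exponent to $-n\nu^+_{\eps,r}$; the initial-item factor contributes $\ln\bigl(1+p^t(e^\theta-1)\bigr)^n = O(n\eps p^t)$, which exponentiates to the stated $\bigl(1+O(n\eps p^t)\bigr)$ correction. The lower tail~(b) is symmetric, taken with $\theta<0$, the restriction $t\ge\ln\eps/\ln p$ (equivalently $p^t\le\eps$) being what keeps the surviving initial items from obstructing a downward deviation and controls the correction term.

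The main obstacle I anticipate is this last step: handling the batch-size randomness inside the product MGF so that the bound is simultaneously tight enough to yield the exact constants $\nu^\pm_{\eps,r}$ (rather than some cruder rate) yet loose enough to admit a closed form. The extremal two-point argument and the subsequent exact optimization over $\theta$, while keeping careful track of the geometric sum and of the $O(n\eps p^t)$ transient from the initial condition, is the delicate computation; parts~(i)--(iii) are comparatively routine consequences of the mean-reversion structure.
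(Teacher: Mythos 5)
Your architecture coincides with the paper's at every stage. Part (ii) is proved exactly as you do it. For (i) and (iii) the paper likewise establishes irreducibility and applies Foster's theorem, but with the quadratic Lyapunov function $V(i)=(i-n)^2$ rather than your linear one; the quadratic choice is what lets it also derive $\sup_t\mean[C_t^2]<\infty$ and hence the uniform integrability needed to justify passing from $C_t\Rightarrow C_\infty$ to $\mean[C_\infty]=n$ --- the one point you assert ("finite batch variance gives a finite stationary second moment") without an argument. For (iv), your conditional-independence factorization
\[
\mean\bigl[e^{\theta C_t}\bigr]=\bigl(1+p^t(e^\theta-1)\bigr)^n\prod_{s=1}^{t}\mean\Bigl[\bigl(1+qp^{t-s}(e^\theta-1)\bigr)^{B}\Bigr]
\]
is precisely the identity the paper obtains (it gets there by iterating the one-step recursion $m_t(s)=G(qs+1-q)\,m_{t-1}(ps+1-p)$ rather than by per-item Bernoulli independence, but these are equivalent), and your optimizer $e^\theta=(1+\eps)/r$ is the right one.

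The soft spot is exactly where you predicted: the treatment of the batch factors. The two-point extremal law on $\{0,b^*\}$ gives, per factor, $1+\tfrac1r\bigl((1+x_j)^{rb}-1\bigr)$ with $x_j=qp^{t-s}(e^\theta-1)$; the logarithm of this is a Bernoulli-type cumulant, not linear in $x_j$, so "summing the geometric series" does not act on it directly, and the product does not collapse to $e^{-n\nu^+_{\eps,r}}$ as written. It can be rescued by the further relaxation $1+\tfrac{b}{b^*}\bigl((1+x_j)^{b^*}-1\bigr)\le(1+x_j)^{b^*}\le e^{b^*x_j}$ --- but at that point you have discarded the mean information the extremal law was invoked to exploit, and you are left with the plain support bound $\mean[u^{B}]\le u^{b^*}$. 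That is in fact all that is needed: $\sum_j\ln G(1+x_j)\le b^*\sum_j\ln(1+x_j)\le b^*\sum_j x_j\le rn(e^\theta-1)$, a Poisson-type cumulant whose Legendre transform at $1+\eps$ is exactly $\nu^+_{\eps,r}$, with the initial-item factor contributing the $O(n\eps p^t)$ correction as you say. The paper routes this step differently and more generally: a superadditivity lemma for the cgf (convexity plus $K_B(0)=0$) compresses $\prod_jG(a_j)$ into $G\bigl(\prod_ja_j\bigr)$, yielding bounds in terms of $K_B\bigl(n(s-1)/b\bigr)$ that hold for unbounded batch sizes, and only then specializes to finite support via $K_B(\alpha)\le b^*\alpha$. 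So: same skeleton throughout; replace the two-point extremal argument with the pointwise support bound (or the cgf superadditivity) and your proof of (iv) closes.
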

In Appendix~\ref{sec:proofs}\longPaper, we actually prove a stronger version of the theorem in which the assumption in (iv) that $r<\infty$ is dropped.

Thus, from (ii),  $\lim_{t\to\infty}\mean[C_t]=n$ so that the expected sample size converges to the target size $n$  as $t$ becomes large; indeed, if $C_0=n$ then the expected sample size equals $n$ for all $t>0$. By (iii), an even stronger property holds in that, w.p.1, the average sample size---averaged over the first~$t$ batch-arrival times---converges to~$n$ as $t$~becomes large. For typical batch-size distributions, the assertions in (iv) imply that, at any given time $t$, the probability that the sample size deviates from $n$ by more than $100\eps\%$ decreases exponentially with $n$ and---in the case of a positive deviation as in (iv)(a)---super-exponentially in $\eps$. However, the assertion in (i) implies that any sample size~$m$, no matter how large, will be exceeded infinitely often w.p.1; indeed, it follows from the proof that the mean times between successive exceedances are not only finite, but are uniformly bounded over time. In summary, the sample size is generally stable and close to~$n$ on average, but is subject to infrequent, but unboundedly large spikes in the sample size, so that sample-size control is incomplete.

Indeed, when batch sizes fluctuate in a non-predicable way, as often happens in practice, T-TBS can break down; see Figure~\ref{fig:breakdown}, in which we plot sample sizes for T-TBS and, for comparison, R-TBS. The problem is that the value of the mean batch size $b$ must be specified in advance, so that the algorithm cannot handle dynamic changes in $b$ without losing control of either the decay rate or the sample size.

In Figure~\ref{fig:grow}, for example, the (deterministic) batch size is initially fixed and the algorithm is tuned to a target sample size of 1000, with a decay rate of $\lambda=0.05$. At $t=200$, the batch size starts to increase (with $B_{t+1}=\phi B_t$ where $\phi=1.002$), leading to an overflowing sample, whereas R-TBS maintains a constant sample size.

Even in a stable batch-size regime with constant batch sizes (or, more generally, small variations in batch size), R-TBS can maintain a constant sample size whereas the sample size under T-TBS fluctuates in accordance with Theorem~\ref{th:recurr}; see Figure~\ref{fig:stableD} for the case of  a constant batch size $B_t\equiv100$ with $\lambda=0.1$.

Large variations in the batch size lead to large fluctuations in the sample size for T-TBS; in this case the sample size for R-TBS is bounded above by design, but large drops in the batch size can cause drops in the sample size for both algorithms; see Figure~\ref{fig:stableU} for the case of $\lambda=0.1$ and i.i.d.\ uniformly distributed batch sizes on $[0,200]$ so that $\mean[B_t]\equiv 100$. Similarly, as shown in Figure~\ref{fig:decay2}, systematically decreasing batch sizes will cause the sample size to shrink for both T-TBS and R-TBS. Here, $\lambda=0.01$ and, as with Figure~\ref{fig:grow}, the batch size is initially fixed and then starts to change at time $t=200$, with $\phi=0.8$ in this case. This experiment---and others, not reported here, with varying values of $\lambda$ and $\phi$---indicate that R-TBS is more robust to sample underflows than T-TBS.

Overall, however, T-TBS is of interest because, when the mean batch size is known and constant over time, and when some sample overflows are tolerable, T-TBS is simple to implement and parallelize, and is very fast (see Section~\ref{sec:exp}). For example, if the data comes from periodic polling of a set of robust sensors, the data arrival rate will be known a priori and will be relatively constant, except for the occasional sensor failure, and hence T-TBS might be appropriate. On the other hand, if data is coming from, e.g., a social network, then batch sizes may be hard to predict.

\begin{remark}\label{rem:alg1}\normalfont
When $q=1$, Theorem~\ref{th:recurr} provides a description of sample-size behavior for  B-TBS. Under the conditions of the theorem, the expected sample size converges to $n=b/(1-e^{-\lambda})$, which illustrates that the sample size and decay rate cannot be controlled independently. The actual sample size fluctuates around this value, with large deviations above or below being exponentially or super-ex\-po\-nen\-tially rare. Thus Theorem~\ref{th:recurr} both complements and refines the analysis in \cite{XieTSBH15}.
\end{remark}


\section{Reservoir-Based TBS}\label{sec:tbsamp}

Targeted time-biased sampling (T-TBS) controls the decay rate but only partially controls the sample size, whereas batched reservoir sampling (B-RS) bounds the sample size but does not allow time biasing. Our new reservoir-based time-biased sampling algorithm (R-TBS) combines the best features of both, controlling the decay rate while ensuring that the sample never overflows and has optimal sample size and stability properties. Importantly, unlike T-TBS, the R-TBS algorithm can handle any sequence of batch sizes.


\subsection{The R-TBS Algorithm}

\begin{algorithm}[t]
\caption{Reservoir-based TBS (R-TBS)}\label{alg:rtbs}
{\footnotesize
$\lambda$: decay factor ($\ge 0$)\;
$n$: maximum sample size\;
\BlankLine
Initialize: $A\gets A_0$; $W\gets C\gets |A_0|$;  $\pi\gets\emptyset$\Comment*[r]{$|A_0|\le n$}
\For{$t\gets1,2,\ldots$}{
\eIf(\Comment*[f]{has been unsaturated}){$W< n$\label{ln:ifunsat}}{
    $W\gets e^{-\lambda}W$\Comment*[r]{decay current items}\label{ln:decayW}
    \If{$W>0$}{$(A,\pi,C)\gets\textsc{Dsample}\bigl((A,\pi,C),W\bigr)$}\label{ln:dsample}
    $A\gets A\cup\xB_t$\Comment*[r]{accept all items in $\xB_t$}\label{ln:uUpdateA}
    $W\gets W+|\xB_t|$\Comment*[r]{update total weight}\label{ln:uUpdateW}
    \If(\Comment*[f]{sample is now saturated}){$W> n$}{
    \Comment*[l]{adjust for overshoot}
    $(A,\pi,C)\gets\textsc{Dsample}\bigl((A,\pi,W),n\bigr)$}\label{ln:overshoot}
    }(\Comment*[f]{has been saturated}){
    $W\gets e^{-\lambda}W+|\xB_t|$\Comment*[r]{new total weight}\label{ln:newWeight}
    \eIf(\Comment*[f]{still saturated}){$W\ge n$}{
        $m\gets\textsc{StochRound}(|\xB_t|n/W)$\;\label{ln:bround}
        \Comment{replace $m$ $A$-items with $m$ $\xB_t$-items}
        $A\gets A\setminus\textsc{Sample}(A,m)\cup\textsc{Sample}(\xB_t,m)$\;\label{ln:sUpdateA}
        }(\Comment*[f]{now unsaturated}){
        \Comment*[l]{adjust for undershoot}
        $(A,\pi,C)\gets\textsc{Dsample}\bigl((A,\pi,n),W-|\xB_t|\bigr)$\;\label{ln:undershoot}
        $A\gets A\cup\xB_t$\Comment*[r]{all batch items are full}\label{ln:fillup}
        }
    }
$S\gets\textsc{getSample}(A,\pi,C)$\;
output $S$
}
}
\end{algorithm}

To maintain a bounded sample, R-TBS combines the use of a reservoir with the notion of item \emph{weights}. In R-TBS, the weight of an item initially equals~1 but then decays at rate $\lambda$, i.e., the weight of an item $i\in\xB_t$ at time $t'\ge t$ is $w_{t'}(i)=e^{-\lambda(t'-t)}$. All items arriving at the same time have the same weight, so that the \emph{total weight} of all items seen up through time~$t$ is $W_t=\sum_{j=1}^t B_je^{-\lambda(t-j)}$, where, as before, $B_j=|\xB_j|$ is the size of the $j$th batch.

\begin{figure}[tbh]
       \centering
       \includegraphics[width=0.4\linewidth]{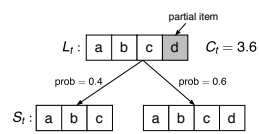}
       \SmallCrunch
       \caption{Latent sample $L_t$ (sample weight $C_t=3.6$) and possible realized samples.}
       \label{fig:latentsamp}
       \BigCrunch
\end{figure}

R-TBS generates a sequence of latent ``fractional samples'' $\{L_t\}_{t\ge 0}$ such that (i) the ``size'' of each $L_t$ equals the \emph{sample weight} $C_t$, defined as $C_t=\min(n,W_t)$, and (ii) $L_t$ contains $\floor{C_t}$ ``full'' items and at most one ``partial'' item. For example, a latent sample of size $C_t=3.6$ contains three ``full'' items that belong to the actual sample $S_t$ with probability~1 and one partial item that belongs to $S_t$ with probability~0.6. Thus $S_t$ is obtained by including each full item and then including the partial item according to its associated probability, so that $C_t$ represents the expected size of $S_t$. E.g., in our example, the sample $S_t$ will contain either three or four items with respective probabilities 0.4 and 0.6, so that the expected sample size is 3.6; see Figure~\ref{fig:latentsamp}. Note that if $C_t=k$ for some $k\in\{0,1,\ldots,n\}$, then with probability~1 the sample contains precisely $k$ items, and $C_t$ is the actual size of $S_t$, rather than just the expected size. Since each $C_t$ by definition never exceeds $n$, no sample $S_t$ ever contains more than $n$ items.

More precisely, given a set $U$ of items, a \emph{latent sample} of $U$ with sample weight $C$ is a triple $L=(A,\pi,C)$, where $A\subseteq U$ is a set of $\floor{C}$ \emph{full} items and $\pi\subseteq U$ is a (possibly empty) set containing at most one \emph{partial} item.
At each time~$t$, we randomly generate $S_t$ from $L_t=(A_t,\pi_t,C_t)$ by sampling such that
\begin{equation}\label{eq:getsample}
S_t=
\begin{cases}
A_t\cup\pi&\text{with probability $\frc(C_t)$};\\
A_t&\text{with probability $1-\frc(C_t)$},
\end{cases}
\end{equation}
where $\frc(x)=x-\floor{x}$. That is, each full item is included with probability~1 and the partial item is included with probability $\frc(C_t)$. Thus
\begin{equation}\label{eq:meansize}
\begin{split}
&\mean[|S_t|]=\ceil{C_t}\frc(C_t)+\floor{C_t}\bigl(1-\frc(C_t)\bigr)\\
&\quad=(\ceil{C_t}-\floor{C_t})\frc(C_t)+\floor{C_t}\\
&\quad=\frc(C_t)+\floor{C_t}=C_t
\end{split}
\end{equation}
as previously asserted. By allowing at most one partial item, we minimize the latent sample's footprint: $|A_t\cup\pi_t|\le \floor{C_t}+1$.

\begin{figure*}[tbh]
        \centering
	\subfigure[Unsat. $\rightarrow$ Unsat.]{
	   \label{fig:rtbsScenario1}\includegraphics[width=0.22\linewidth]{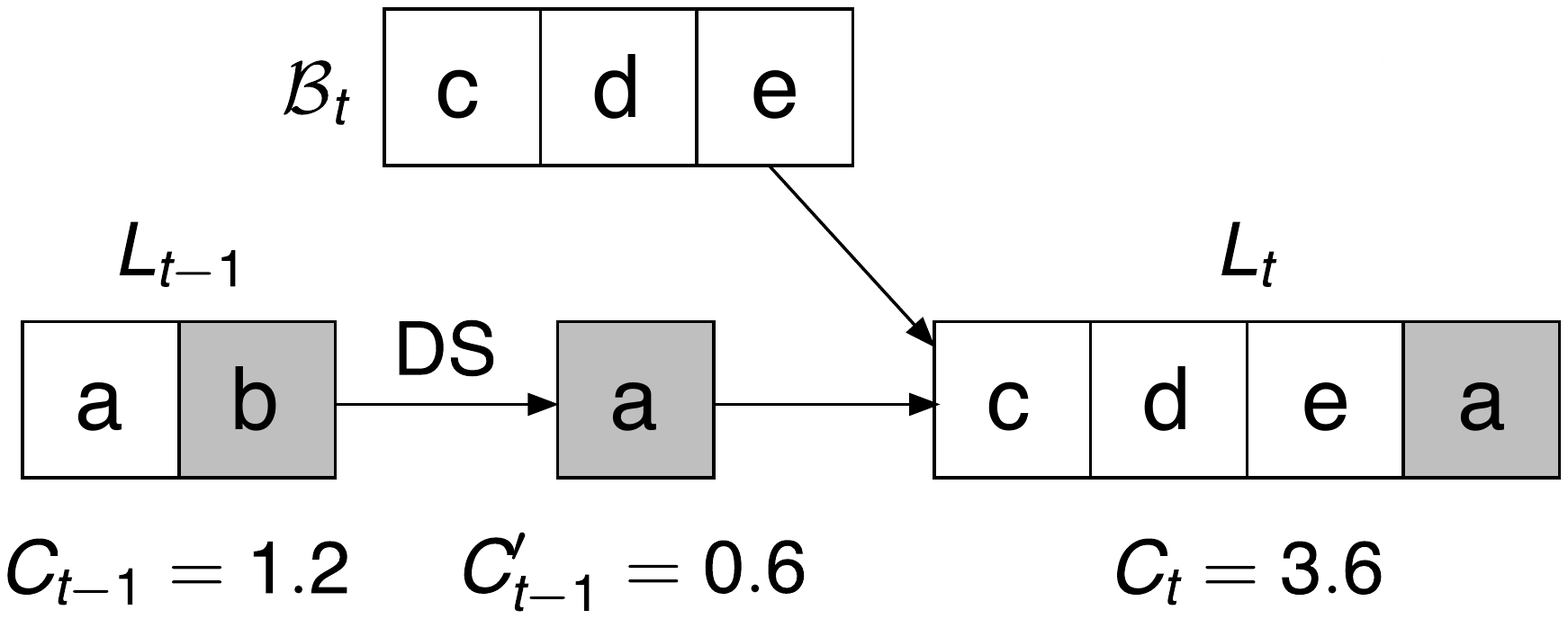}} \hspace{0.00\textwidth}
	\subfigure[Unsat. $\rightarrow$ Sat.]{
	   \label{fig:rtbsScenario2}\includegraphics[width=0.26\linewidth]{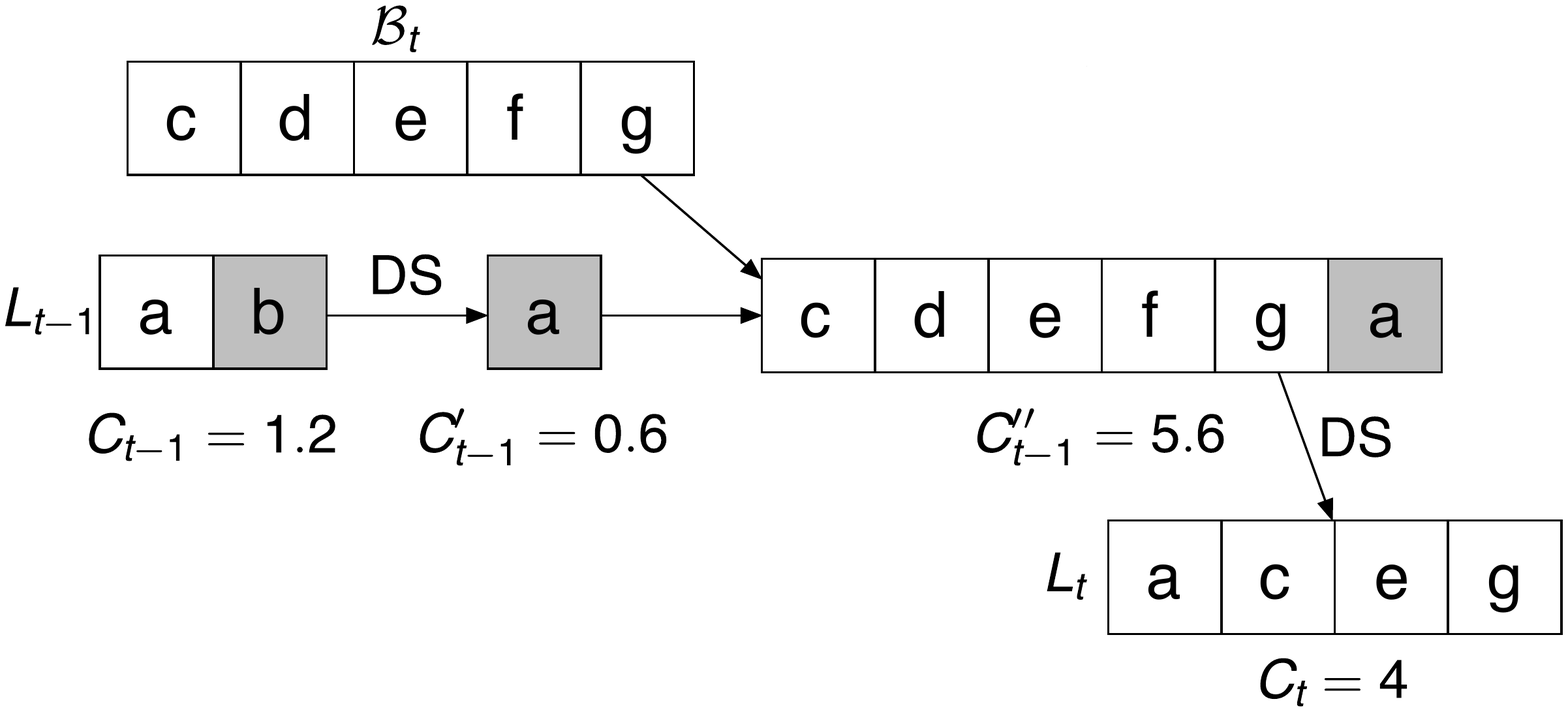}} \hspace{0.00\textwidth}
	\subfigure[Sat. $\rightarrow$ Unsat.]{
	   \label{fig:rtbsScenario3}\includegraphics[width=0.26\linewidth]{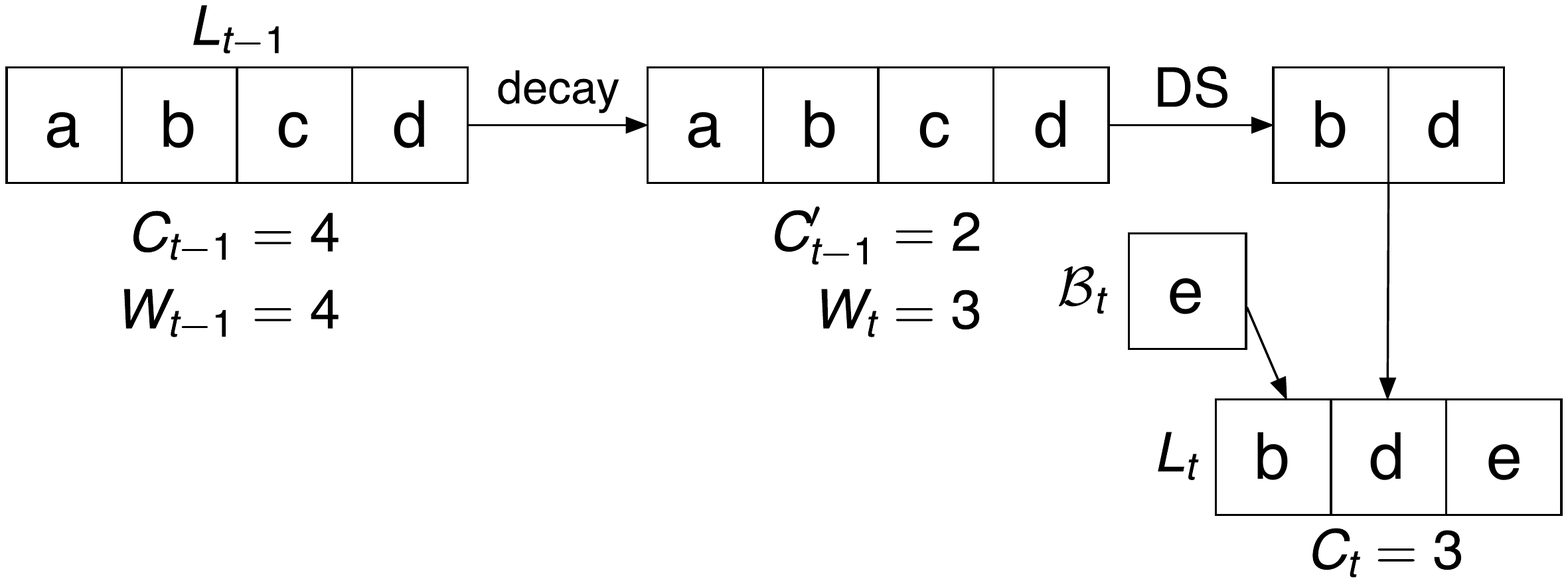}} \hspace{0.00\textwidth}
	\subfigure[Sat. $\rightarrow$ Sat.]{
	   \label{fig:rtbsScenario4}\includegraphics[width=0.21\linewidth]{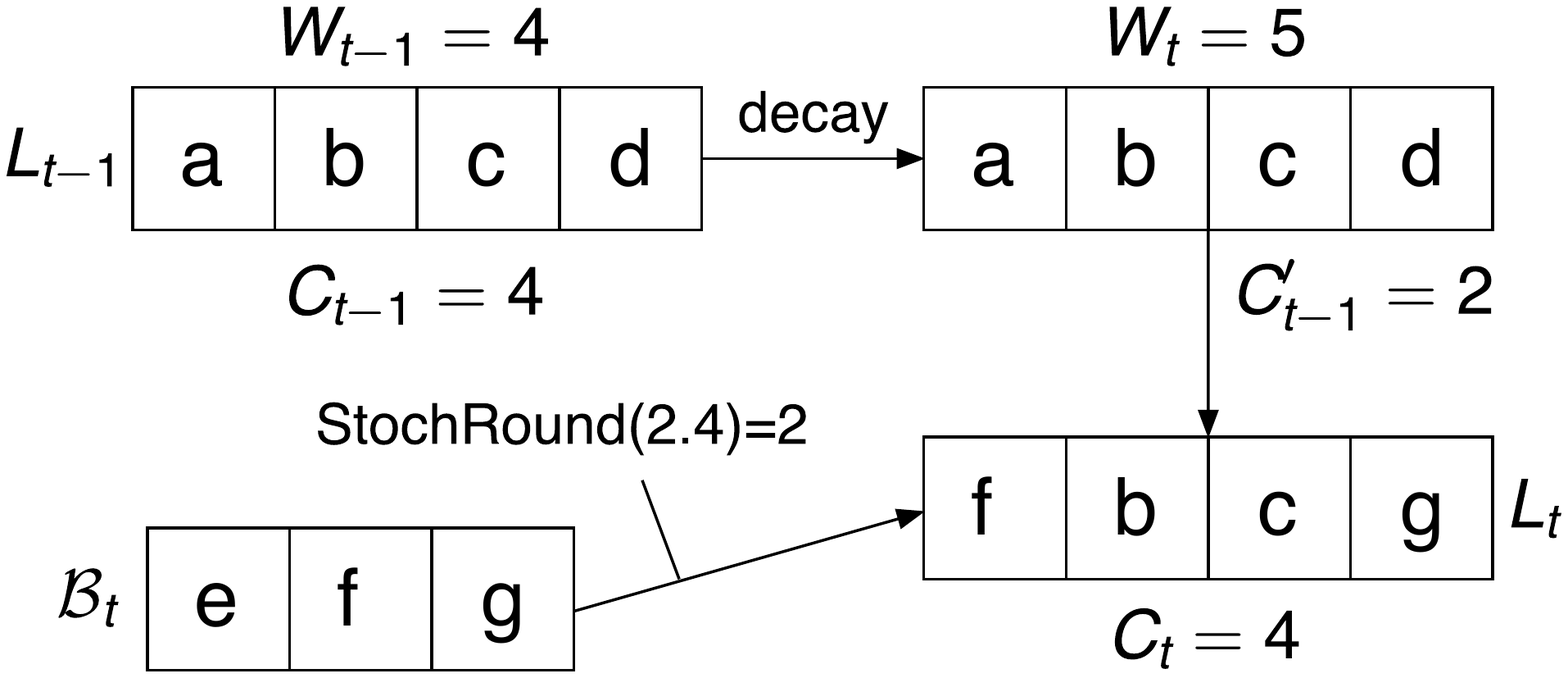}} 
\BigCrunch
\caption{\label{fig:rtbsScenarios}R-TBS scenarios for $n=4$ and $e^{-\lambda}=0.5$. For simplicity, we take $W_{t-1}=C_{t-1}$. ``DS'' denotes downsampling.}
\BigCrunch
\end{figure*}

The key goal of R-TBS is to maintain the invariant
\begin{equation}\label{eq:inclRTBS}
\prob{i\in S_t}=\bigl(C_t/W_t\bigr)w_{t}(i)
\end{equation}
for each $t\ge 0$ and each item $i\in  U_t$, where, as before, $U_t$ denotes the set of all items that arrive up through time $t$, so that the appearance probability for an item~$i$ at time~$t$ is proportional to its weight $w_{t}(i)$. This immediately implies the desired relative-inclusion property \eqref{eq:expratio}. Since $w_t(i)=1$ for an arriving item $i\in \xB_t$, the equality in \eqref{eq:inclRTBS} implies that the initial acceptance probability for this item is
\begin{equation}\label{eq:inclInitRTBS}
\prob{i\in S_t}=C_t/W_t.
\end{equation}

The pseudocode for R-TBS is given as Algorithm~\ref{alg:rtbs}. Suppose the sample is \emph{unsaturated} at time~$t-1$ in that $W_{t-1}<n$ and hence $C_{t-1}=W_{t-1}$ (line~\ref{ln:ifunsat}). The decay process first reduces the total weight (and hence the sample weight) to $W'_{t-1}=C'_{t-1}=e^{-\lambda}W_{t-1}$ (line~\ref{ln:decayW}). R-TBS then \emph{downsamples} $L_{t-1}$ (line~\ref{ln:dsample}) to reflect this decay and maintain a minimal sample footprint; the downsampling method, described in Section~\ref{sec:latent}, is designed to maintain the invariant in \eqref{eq:inclRTBS}. 
If the weight of the arriving batch does not cause the sample to overflow, i.e., $C'_{t-1}+|\xB_t|<n$, then $C_t=C'_{t-1}+|\xB_t|=W'_{t-1}+|\xB_t|=W_t$. The relation in \eqref{eq:inclInitRTBS} then implies that all newly arrived items are accepted into the sample with probability~1 (line~\ref{ln:uUpdateA}); see Figure~\ref{fig:rtbsScenario1} for an example of this scenario. The situation is more complicated if the weight of the arriving batch would cause the sample to overflow. It turns out that the simplest way to deal with this scenario is to initially accept all incoming items as in line~\ref{ln:uUpdateA}, and then run an additional round of downsampling to reduce the sample weight to $n$ (line~\ref{ln:overshoot}), so that the sample is now saturated; see Figure~\ref{fig:rtbsScenario2}. Note that these two steps can be executed without ever causing the sample footprint to exceed $n$.

Now suppose that the sample is \emph{saturated} at time~$t-1$, so that $W_{t-1}\ge n$ and hence $C_{t-1}=|S_{t-1}|=n$. The new total weight is $W_t=W'_{t-1}+|\xB_t|$ as before (line~\ref{ln:newWeight}). If $W_t\ge n$, then the weight of the arriving batch exceeds the weight loss due to decay, and the sample remains saturated. Then \eqref{eq:inclInitRTBS} implies that each item in $\xB_t$ is accepted into the sample with probability $p=n/W_t$. Letting $I_j=1$ if item $j\in\xB$ is accepted and $I_j=0$ otherwise, we see that the expected number of accepted items is
\[
m=\mean\Bigl[\sum_{j\in\xB_t}I_j\Bigr]=\sum_{j\in\xB_t}\mean[I_j]=\sum_{j\in\xB_t}\prob{I_j=1}=B_t n/W_t.
\]
There are a number of possible ways to carry out  this acceptance operation, e.g., via independent coin flips. To minimize the variability of the sample size (and hence the likelihood of severely small samples), R-TBS uses \emph{stochastic rounding} in line~\ref{ln:bround} and accepts a random number of items $M$ such that $M=\floor{m}$ with probability~$\ceil{m}-m$ and $M=\ceil{m}$ with probability~$m-\floor{m}$, so that $\mean[M]=m$ by an argument essentially the same as in \eqref{eq:meansize}. To maintain the bound on the sample size, the $M$ accepted items replace $M$ randomly selected ``victims'' in the current sample (line~\ref{ln:sUpdateA}). If $W_t<n$, then the sample weight decays to $W'_{t-1}$ and the weight of the arriving batch is not enough to fill the sample back up. Moreover, \eqref{eq:inclInitRTBS} implies that all arriving items are accepted with probability~1. Thus we downsample to the decayed weight of $W'_{t-1}=W_t-|\xB_t|$ in line~\ref{ln:undershoot} and then insert the arriving items in line~\ref{ln:fillup}.

\subsection{Downsampling}\label{sec:latent}

Before describing Algorithm~\ref{alg:downsamp}, the downsampling algorithm, we intuitively motivate a key property that any such procedure must have. For any item $i\in L$, the relation in \eqref{eq:inclRTBS} implies that we must have $\prob{i\in S} = (C/W)w_i$ and $\prob{i\in S'} = (C'/W')w'_i$, where $W$ and $w_i$ represent the total and item weight before decay and downsampling, and $W'$ and $w'_i$ represent the weights afterwards. Since decay affects all items equally, we have $w/W=w'/W'$, and it follows that 
\begin{equation}\label{eq:downsample}
\prob{i\in S'}=(C'/C)\prob{i\in S}.
\end{equation}
That is, the inclusion probabilities for all items must be scaled down by the same fraction, namely $C'/C$. Theorem~\ref{th:downsamp} (later in this section) asserts that Algorithm~\ref{alg:downsamp} satisfies this property. 

\begin{algorithm}[t]
\caption{Downsampling}\label{alg:downsamp}
{\footnotesize
$L=(A,\pi,C)$: input latent sample\;
$C'$: input target weight with $0<C'<C$\;
$L'=(A',\pi',C')$: output latent sample\;
\BlankLine
$U\gets \textsc{Uniform}()$\;
\uIf(\Comment*[f]{no full items retained}){$\floor{C'}=0$\label{ln:no_full}}{
    \If{$U>\frc(C)/C$}{$(A',\pi')\gets\textsc{Swap1}(A,\pi)$}\label{ln:swap0}
    $A'\gets\emptyset$\;\label{ln:killA}
}
\uElseIf(\Comment*[f]{no items deleted}){$0<\floor{C'}=\floor{C}$\label{ln:no_delete}}{
    \If{$U>\bigl(1-(C'/C)\frc(C)\bigr)/\bigl(1-\frc(C')\bigr)$\label{ln:no_swap}}{
        $(A',\pi') \gets\textsc{Swap1}(A,\pi)$\;\label{ln:convert}
        }
}
\Else(\Comment*[f]{items deleted: $0<\floor{C'}<\floor{C}$}){
    \eIf{$U\le (C'/C)\frc(C)$\label{ln:normal}}{
        $A' \gets \textsc{Sample}(A,\floor{C'})$\;
        $(A',\pi')\gets\textsc{Swap1}(A',\pi)$\;\label{ln:swap}
        }{
        $A' \gets \textsc{Sample}(A,\floor{C'}+1)$\;\label{ln:smpl}
        $(A',\pi')\gets\textsc{Move1}(A',\pi)$\;\label{ln:move}
        }      
}
\If(\Comment*[f]{no fractional item}){$C'=\floor{C'}$}{
    $\pi'\gets\emptyset$\;
    }
}
\end{algorithm}

In the pseudocode for Algorithm~\ref{alg:downsamp},  the function $\textsc{Uniform}()$ generates a random number uniformly distributed on $[0,1]$. The subroutine $\textsc{Swap1}(A,\pi)$ moves a randomly selected item from $A$ to $\pi$ and moves the current item in $\pi$ (if any) to $A$. Similarly, $\textsc{Move1}(A,\pi)$ moves a randomly selected item from $A$ to $\pi$, replacing the current item in $\pi$ (if any). More precisely, $\textsc{Swap1}(A,\pi)$ executes the operations $I\gets \textsc{Sample}(A,1)$, $A\gets (A\setminus I)\cup\pi$, and $\pi\gets I$,
and $\textsc{Move1}(A,\pi)$ executes the operations $I\gets \textsc{Sample}(A,1)$, $A\gets A\setminus I$, and $\pi\gets I$.

\begin{figure*}[tbh]
        \centering
	\subfigure[From $C_t=3$ to $C'_t=1.5$.]{
	   \label{fig:downsamp1}\includegraphics[width=0.25\linewidth]{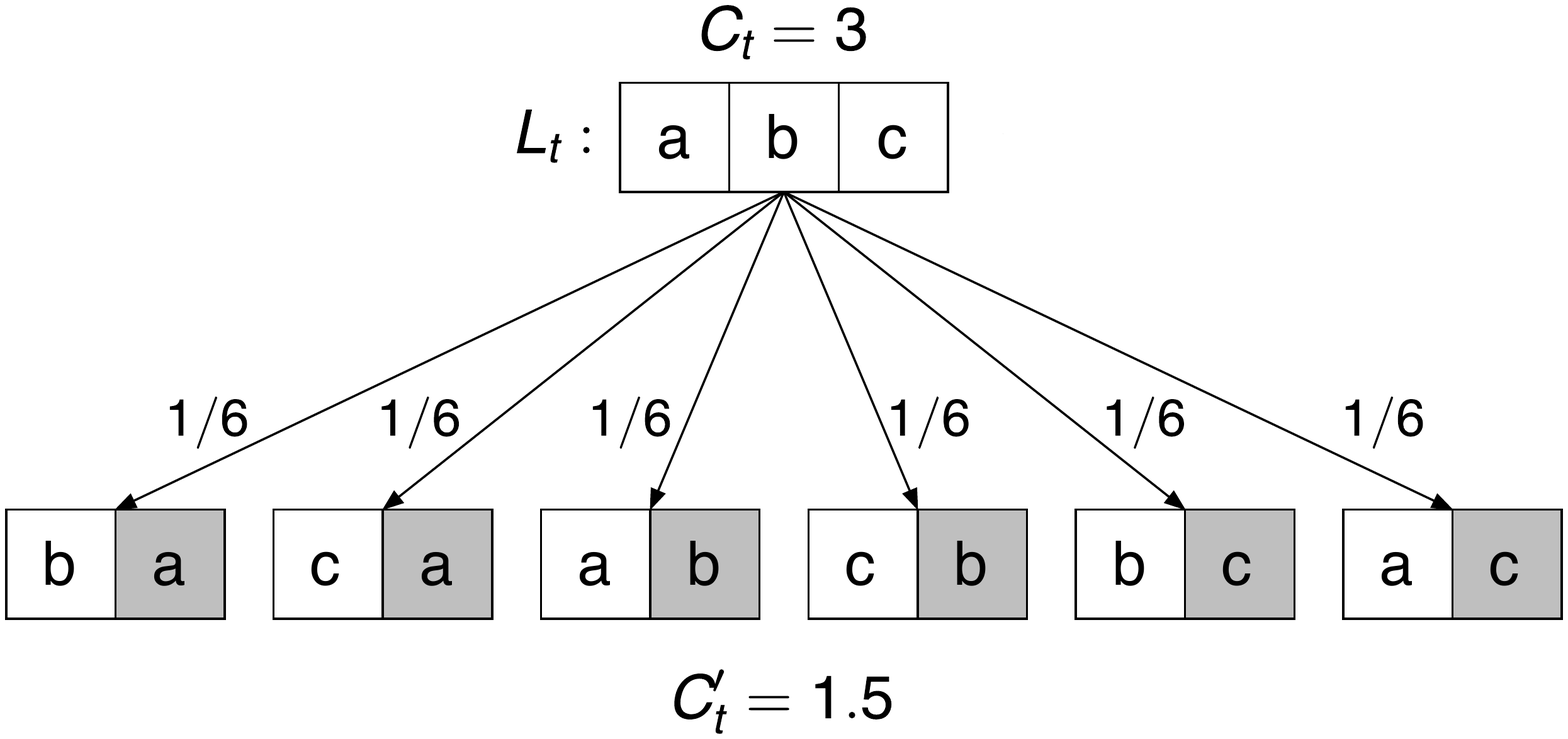}} \hspace{0.02\textwidth}
	\subfigure[From $C_t=3.2$ to $C'_t=1.6$.]{
	   \label{fig:downsamp2}\includegraphics[width=0.25\linewidth]{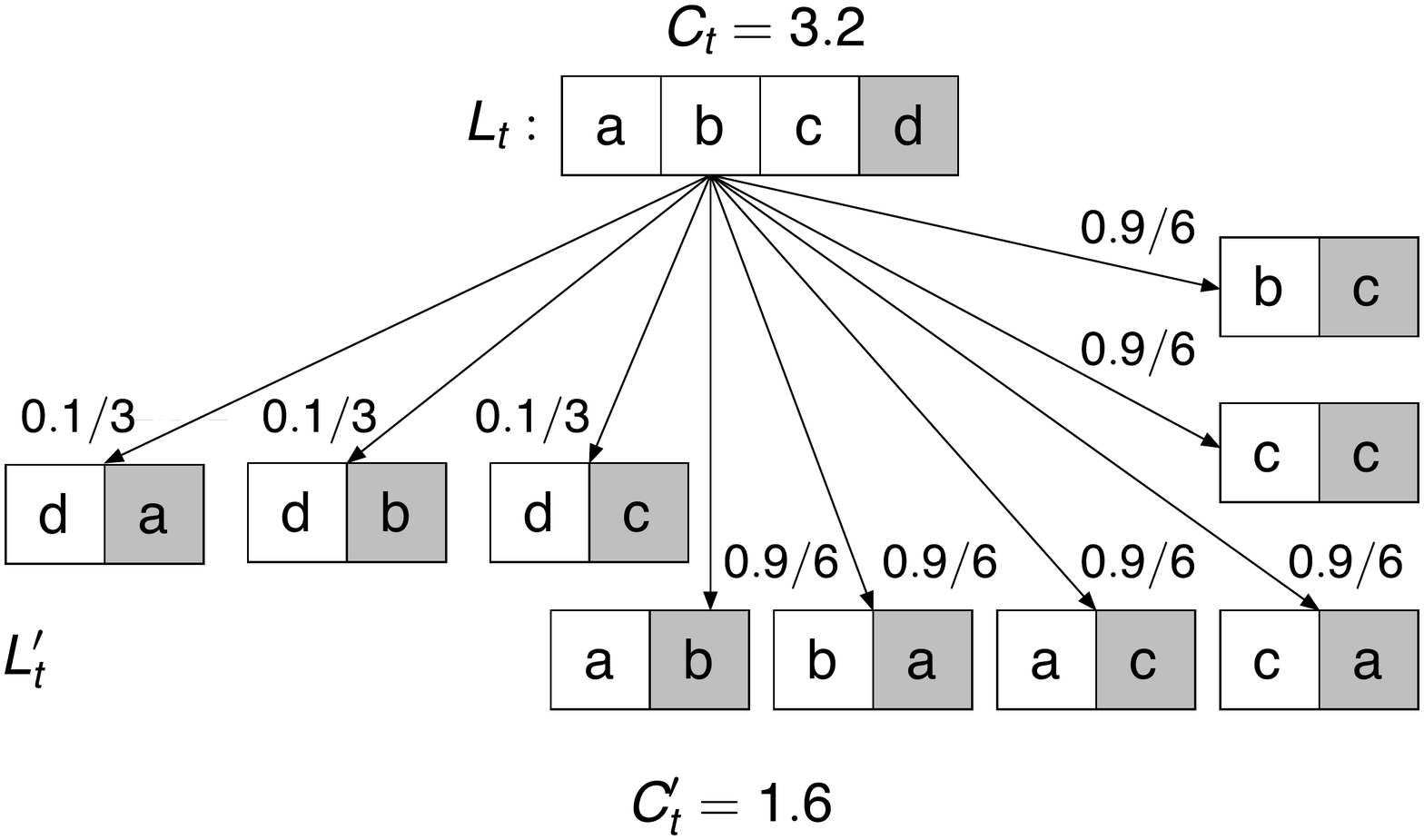}} \hspace{0.02\textwidth}
	\subfigure[From $C_t=2.4$ to $C'_t=0.4$.]{
	   \label{fig:downsamp3}\includegraphics[width=0.15\linewidth]{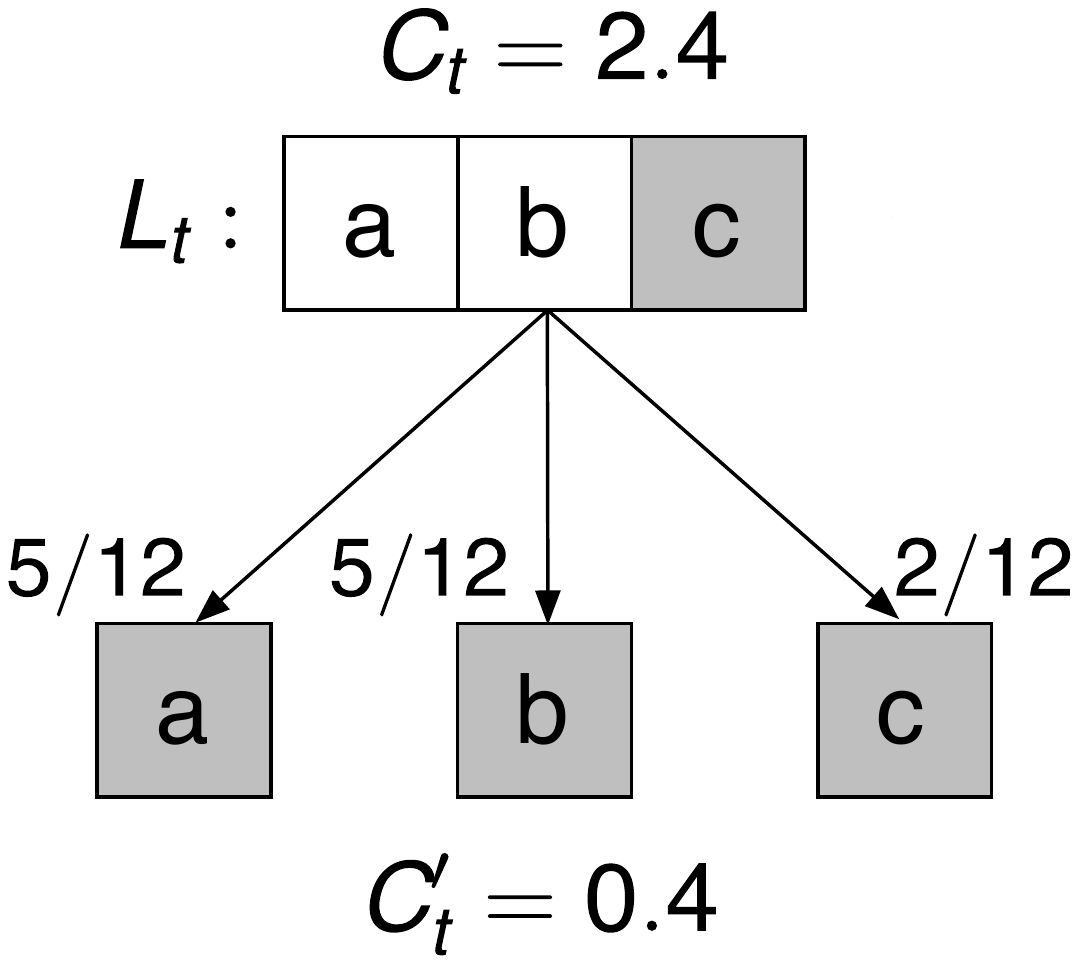}} \hspace{0.02\textwidth}
	\subfigure[From $C_t=2.4$ to $C'_t=2.1$.]{
	   \label{fig:downsamp4}\includegraphics[width=0.22\linewidth]{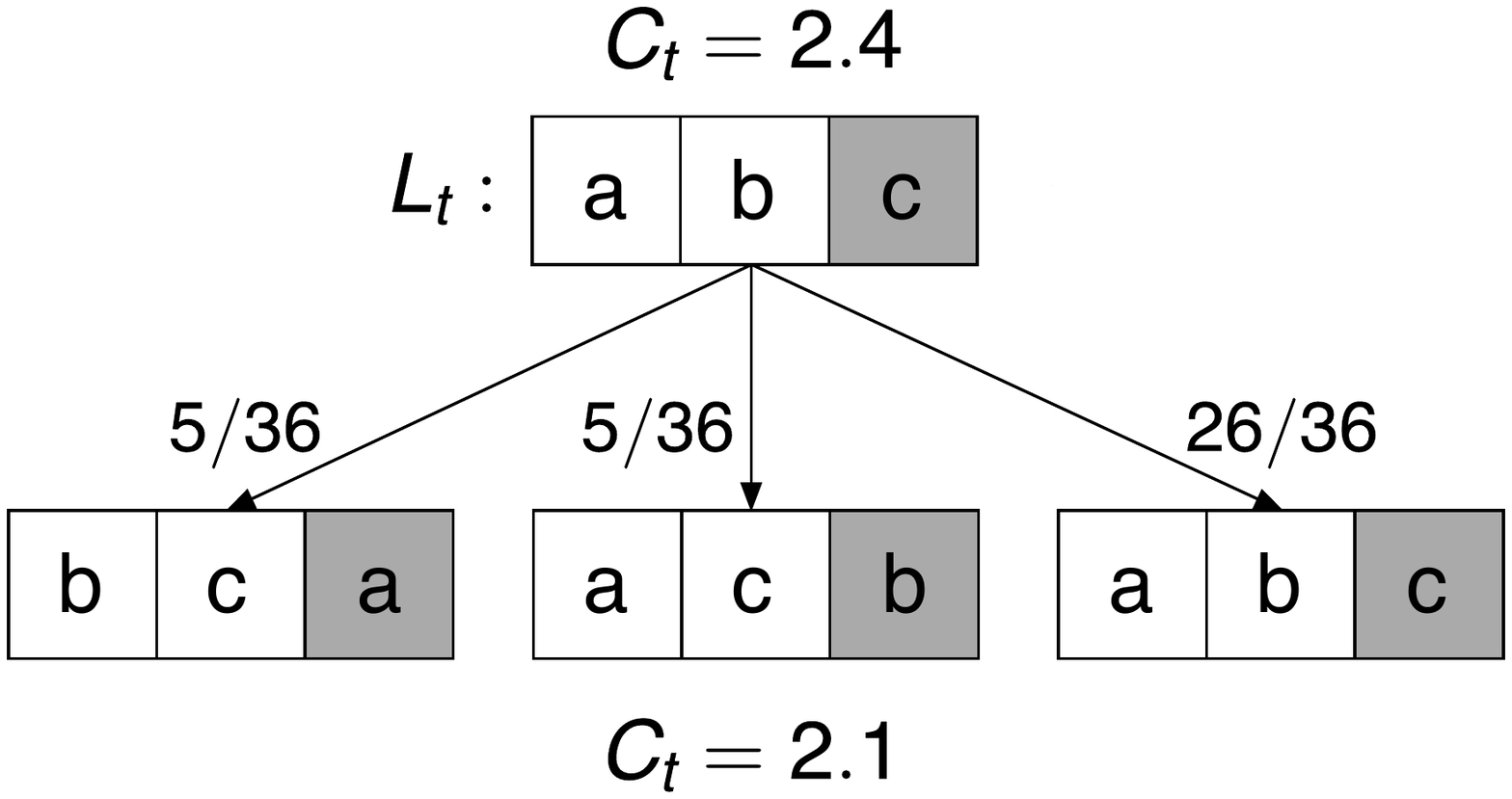}} 
\BigCrunch
\caption{\label{fig:downsampling}Downsampling examples ($t=0$).}
\BigCrunch
\end{figure*}

%
%
%

To gain some intuition for why the algorithm works, consider a simple special case, where the goal is to form a fractional sample $L' = (A',\pi', C')$ from a fractional sample $L=(A,\pi,C)$ of integral size $C > C'$; that is, $L$ comprises exactly $C$ full items. Assume that $C'$ is non-integral, so that $L'$ contains a partial item. In this case, we simply select an item at random (from $A$) to be the partial item in $L'$ and then select $\floor{C'}$ of the remaining $C -1$ items at random to be the full items in $L'$; see Figure~\ref{fig:downsamp1}. By symmetry, each item $i\in L$ is equally likely to be included in $S'$, so that the inclusion probabilities for the items in $L$ are all scaled down by the same fraction, as required for \eqref{eq:downsample}. For example, taking $t=0$ in Figure~\ref{fig:downsamp1}, item $a$ appears in $S_t$ with probability~1 since it is a full item. In $S'_t$, where the weights have been reduced by 50\%, item $a$ (either as a full or partial item, depending on the random outcome) appears with probability $2\cdot(1/6)+2\cdot(1/6)\cdot 0.5=0.5$, as expected.
This scenario corresponds to lines~\ref{ln:smpl} and \ref{ln:move} in the algorithm, where we carry out the above selections by randomly sampling $\floor{C'}+1$ items from $A$ to form $A'$ and then choosing a random item in $A'$ as the partial item by moving it to $\pi$.

In the case where $L$ contains a partial item~$i^*$ that appears in $S$ with probability $\frc(C)$, it follows from \eqref{eq:downsample} that $i^*$ should appear in $S'$ with probability $p=(C'/C)P[i^*\in S]=(C'/C)\frc(C)$. Thus, with probability~$p$, lines~\ref{ln:normal}--\ref{ln:swap} retain $i^*$ and convert it to a full item so that it appears in $S'$. Otherwise, in lines~\ref{ln:smpl} and \ref{ln:move}, $i^*$ is removed from the sample when it is overwritten by a random item from $A'$; see Figure~\ref{fig:downsamp2}. Again, a new partial item is chosen from $A$ in a random manner to uniformly scale down the inclusion probabilities. For instance, in Figure~\ref{fig:downsamp2}, item $d$ appears in $S_t$ with probability 0.2 (because it is a partial item) and in $S'_t$, appears with probability $3\cdot (0.1/3)=0.1$. Similarly, item $a$ appears in $S_t$ with probability~1 and in $S'_t$ with probability $(1.8)/6+0.6\cdot (1.8/6)+0.6\cdot(0.1/3)=0.5$.

The if-statement in line~\ref{ln:no_full} corresponds to the corner case in which $L'$ does not contain a full item. The partial item $i^*\in L$ either becomes full or is swapped into $A'$ and then immediately ejected; see Figure~\ref{fig:downsamp3}.

The if-statement in line~\ref{ln:no_delete} corresponds to the case in which no items are deleted from the latent sample, e.g., when $C=4.7$ and $C'=4.2$. In this case, $i^*$ either becomes full by being swapped into $A'$ or remains as the partial item for $L'$. Denoting by $\rho$ the probability of \emph{not} swapping, we have $P[i^*\in S'] = \rho\cdot\frc(C') + (1-\rho)\cdot 1$. On the other hand, \eqref{eq:downsample} implies that $P[i^*\in S']=(C'/C)\frc(C)$. Equating these expression shows that $\rho$ must equal the expression on the right side of the inequality on line~\ref{ln:no_swap}; see Figure~\ref{fig:downsamp4}.

Formally, we have the following result.
\begin{theorem}\label{th:downsamp}
For $0<C'<C$, let $L'=(A',\pi',C')$ be the latent sample produced from a latent sample~$L=(A,\pi,C)$ via Algorithm~\ref{alg:downsamp}, and let $S'$ and $S$ be samples produced from $L'$ and $L$ via \eqref{eq:getsample}. Then $\prob{i\in S'}=(C'/C)\prob{i\in S}$ for all $i\in L$.
\end{theorem}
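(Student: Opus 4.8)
The plan is to reduce the assertion to two scalar identities and then verify them by a case analysis mirroring the three branches of Algorithm~\ref{alg:downsamp}. First I would apply the sampling rule~\eqref{eq:getsample}: conditioning on whether an item lands in $A'$ or in $\pi'$ (disjoint events) gives $\prob{i\in S'}=\prob{i\in A'}+\frc(C')\prob{i\in\pi'}$, since a full item of $L'$ is always included in $S'$ while the partial item is included with probability $\frc(C')$. By~\eqref{eq:getsample} the right-hand targets are $\prob{i\in S}=1$ for a full item $i\in A$ and $\prob{i^*\in S}=\frc(C)$ for the partial item $i^*\in\pi$. Because $\textsc{Sample}$, $\textsc{Swap1}$, and $\textsc{Move1}$ only ever draw uniformly from $A$ and demote a uniformly chosen item, all full items are exchangeable; hence it suffices to prove $\prob{i\in S'}=C'/C$ for a generic full item and $\prob{i^*\in S'}=(C'/C)\frc(C)$ for the partial item.

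I would take the branch $0<\floor{C'}<\floor{C}$ (items deleted) as the representative computation. Here the algorithm splits on $U\le(C'/C)\frc(C)$: in the first subcase it draws $\floor{C'}$ items from $A$ and runs $\textsc{Swap1}$, reinstating $i^*$ as a full item and demoting a uniformly chosen sampled item to $\pi'$; in the second it draws $\floor{C'}+1$ items and runs $\textsc{Move1}$, discarding $i^*$ and demoting one sampled item. For a generic full item I would, in each subcase, multiply its sampling probability by the conditional chance of ending in $A'$ versus $\pi'$, weight the latter contribution by $\frc(C')$, and then combine the subcases with weights $(C'/C)\frc(C)$ and its complement; using $\floor{C'}+\frc(C')=C'$ and $C-\frc(C)=\floor{C}$ the expression collapses to exactly $C'/C$. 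For $i^*$, observe that it survives (as a full item) precisely in the first subcase and is discarded in the second, so $\prob{i^*\in S'}=(C'/C)\frc(C)$ follows immediately.

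The remaining branches go the same way. When $\floor{C'}=0$ the output has only a partial slot, and the coin $U>\frc(C)/C$ decides whether that slot holds a freshly chosen full item or the surviving $i^*$; substituting $\frc(C')=C'$ reproduces both identities. The branch $0<\floor{C'}=\floor{C}$ (no deletions) is the delicate one and, I expect, the main obstacle. Here the only randomness is whether $\textsc{Swap1}$ fires, and the non-swap probability $\rho$ is tuned precisely so that $\rho\,\frc(C')+(1-\rho)=(C'/C)\frc(C)$, which forces $\rho$ to equal the threshold on line~\ref{ln:no_swap} and settles the partial item. For a generic full item I would then check $1-(1-\rho)\bigl(1-\frc(C')\bigr)/\floor{C}=C'/C$, which, since $\floor{C}=\floor{C'}$, reduces to the elementary identity $C'\frc(C)-C\frc(C')=\floor{C}\bigl(\frc(C)-\frc(C')\bigr)$. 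Finally, the trailing test $C'=\floor{C'}$ sets $\pi'\gets\emptyset$; since then $\frc(C')=0$ the deleted partial item contributed nothing to any inclusion probability, so every identity is preserved. The real difficulty throughout is bookkeeping: tracking, through the composition of $\textsc{Sample}$, $\textsc{Swap1}$, and $\textsc{Move1}$, the chance that a given item ends up full, partial, or deleted, while keeping separate the two independent sources of randomness---the placement into $A'$ or $\pi'$ and the subsequent Bernoulli inclusion of the partial item.
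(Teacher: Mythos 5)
Your proposal is correct, and its treatment of the partial item $i^*$---computing $\prob{i^*\in S'}$ branch by branch from the swap/move probabilities and checking it equals $(C'/C)\frc(C)$---is essentially identical to the paper's. Where you diverge is in handling the full items. You propose to compute, in each of the three branches and their subcases, the probability that a generic full item lands in $A'$ versus $\pi'$ under the composition of $\textsc{Sample}$, $\textsc{Swap1}$, and $\textsc{Move1}$, and then verify algebraically that everything collapses to $C'/C$; your sketched computations (e.g.\ the combination with weights $(C'/C)\frc(C)$ and its complement in the deletion branch, and the identity $C'\frc(C)-C\frc(C')=\floor{C}\bigl(\frc(C)-\frc(C')\bigr)$ in the no-deletion branch) do check out. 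The paper instead dispatches all full items at once with a counting argument: by exchangeability $\prob{j\in S'}\equiv p$ for $j\in A$, and since $L'$ is by construction a latent sample of weight $C'$, \eqref{eq:meansize} gives $\mean[|S'|]=C'=\floor{C}\,p+p_{i^*}$; substituting the already-established $p_{i^*}=(C'/C)\frc(C)$ yields $p=C'/C$ with no further case analysis (and the $\pi=\emptyset$ case falls out of the same identity with $p_{i^*}$ absent). Your route buys nothing extra but costs the bookkeeping you yourself flag as the main difficulty; the paper's expectation trick eliminates exactly that bookkeeping, so you may want to adopt it for the full-item half while keeping your partial-item analysis as is.
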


\subsection{Properties of R-TBS}\label{sec:RTBSalg}

Theorem~\ref{th:rtbsIncl} below asserts that R-TBS satisfies \eqref{eq:inclRTBS} and hence \eqref{eq:expratio}, thereby maintaining the correct inclusion probabilities; see Appendix~\ref{sec:proofs}\longPaper for the proof. Theorems~\ref{th:maxMean} and \ref{th:minVar} assert that, among all sampling algorithms with exponential time biasing, R-TBS both maximizes the expected sample size in unsaturated scenarios and minimizes sample-size variability. Thus R-TBS tends to yield more accurate results (from more training data) and greater stability in both result quality and retraining costs. 

\begin{theorem}\label{th:rtbsIncl}
The relation $\prob{i\in S_t}=(C_t/W_t)w_t(i)$ holds for all $t\ge 1$ and $i\in U_t$.
\end{theorem}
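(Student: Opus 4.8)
The plan is to establish the invariant \eqref{eq:inclRTBS}, namely $\prob{i\in S_t}=(C_t/W_t)w_t(i)$, by induction on $t$; taking the ratio for $i\in\xB_{t'}$ and $j\in\xB_{t''}$ then yields \eqref{eq:expratio} immediately, since the $C_t/W_t$ factors cancel. The base case $t=0$ is direct: initialization sets $A=A_0$ with $C_0=W_0=|A_0|$ and $\pi=\emptyset$, every item of $A_0$ is full and so lies in $S_0$ with probability~1, and since $w_0(i)=1$ the right-hand side $(C_0/W_0)w_0(i)=1$ matches. For the inductive step I would assume the invariant at time $t-1$ and carry it across a single batch arrival using three elementary facts: the weight recursion $W_t=e^{-\lambda}W_{t-1}+B_t$; the per-item decay $w_t(i)=e^{-\lambda}w_{t-1}(i)$ for $i\in U_{t-1}$ together with $w_t(i)=1$ for $i\in\xB_t$; and the definition $C_t=\min(n,W_t)$. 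The argument then splits into the four scenarios of Figure~\ref{fig:rtbsScenarios}, according to whether the sample is saturated before and after the update.

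For the two scenarios that begin unsaturated ($C_{t-1}=W_{t-1}$), the reasoning runs entirely through Theorem~\ref{th:downsamp}. The decay step targets weight $e^{-\lambda}W_{t-1}$, and the ensuing $\textsc{Dsample}$ call scales every inclusion probability by $e^{-\lambda}W_{t-1}/C_{t-1}=e^{-\lambda}$; by induction each old item $i$ then satisfies $\prob{i\in S}=e^{-\lambda}w_{t-1}(i)=w_t(i)$, and accepting the whole batch with probability~1 gives the new items the matching value $w_t(i)=1$. At this intermediate point the latent weight equals $W_t$ and every item has inclusion probability exactly $w_t(i)$. If $W_t<n$ the sample stays unsaturated, $C_t=W_t$, and we are done; if $W_t>n$ the overshoot correction $\textsc{Dsample}((A,\pi,W_t),n)$ scales all probabilities by $n/W_t=C_t/W_t$, again giving the claim since $C_t=n$.

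The two scenarios that begin saturated ($C_{t-1}=n$, $W_{t-1}\ge n$, and $\pi=\emptyset$ by integrality) are where the work lies, and I expect the saturated$\to$saturated case driven by $\textsc{StochRound}$ in lines~\ref{ln:bround}--\ref{ln:sUpdateA} to be the main obstacle, since Theorem~\ref{th:downsamp} does not apply there and old and new items must be handled separately. Writing $m^*=B_tn/W_t$, the stochastic-rounding property gives $\mean[M]=m^*$; a new item $j\in\xB_t$ survives iff it lands in $\textsc{Sample}(\xB_t,M)$, so conditioning on $M$ and using symmetry, $\prob{j\in S_t}=\mean[M]/B_t=n/W_t$, which matches $(C_t/W_t)w_t(j)$ as $C_t=n$ and $w_t(j)=1$. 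An old item $i$ lies in $A$ with probability $(n/W_{t-1})w_{t-1}(i)$ by induction and survives unless chosen among the $M$ victims, so $\prob{i\in S_t}=(n/W_{t-1})w_{t-1}(i)\bigl(1-\mean[M]/n\bigr)=(n/W_{t-1})w_{t-1}(i)(1-B_t/W_t)$. The required equality with $(n/W_t)e^{-\lambda}w_{t-1}(i)$ then reduces to the identity $(W_t-B_t)/W_{t-1}=e^{-\lambda}$, which holds because $W_t-B_t=e^{-\lambda}W_{t-1}$. Finally, the saturated$\to$unsaturated case returns to Theorem~\ref{th:downsamp}: the call $\textsc{Dsample}((A,\pi,n),W_t-B_t)$ scales old-item probabilities by $(W_t-B_t)/n=e^{-\lambda}W_{t-1}/n$, yielding $w_t(i)$ after cancellation, after which accepting the batch at probability~1 handles the new items with $C_t=W_t$ closing the argument.

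The delicate points I would watch are: confirming that in every saturated state the integrality $C_{t-1}=n$ forces $\pi=\emptyset$, so that $S_{t-1}=A$ deterministically and the ``$i\in A$'' bookkeeping above is legitimate; verifying that the expectation over the latent-sample randomness composes correctly with the fresh randomness of $\textsc{StochRound}$ and $\textsc{Sample}$ via the tower property; and checking the boundary conditions guarding each $\textsc{Dsample}$ call (e.g.\ $W>0$ and $0<C'<C$) so that Theorem~\ref{th:downsamp} genuinely applies in each invocation.
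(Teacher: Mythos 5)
Your proposal is correct and follows essentially the same route as the paper's own proof: induction on $t$ with a case split on saturation before and after the batch arrival, invoking Theorem~\ref{th:downsamp} to scale inclusion probabilities in the downsampling steps and a direct mean-preserving computation for the stochastic-rounding (saturated$\to$saturated) case, with all the key identities (e.g.\ $(W_t-B_t)/W_{t-1}=e^{-\lambda}$) matching. The only cosmetic difference is that you anchor the induction at the $t=0$ initialization rather than at the first batch arrival $t=1$ as the paper does.
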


\begin{theorem}\label{th:maxMean}
Let $H$ be any sampling algorithm that satisfies~\eqref{eq:expratio} and denote by $S_t$ and $S^{H}_t$ the samples produced at time~$t$ by R-TBS and H. If the total weight at some time $t\ge 1$ satisfies $W_t<n$, then $\mean[|S^H_t|]\le\mean[|S_t|]$.
\end{theorem}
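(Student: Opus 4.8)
The plan is to collapse each admissible algorithm's behaviour at time~$t$ into a single scalar---the common ratio between an item's appearance probability and its current weight---and then compare this scalar for $H$ against the one realized by R-TBS. \emph{Step 1 (proportionality).} For items $i\in\xB_{t'}$ and $j\in\xB_{t''}$ with $t',t''\le t$, observe that $e^{-\lambda(t''-t')}=w_t(i)/w_t(j)$, so the defining relation \eqref{eq:expratio} is equivalent to the statement that the ratio $\prob{i\in S^H_t}/w_t(i)$ takes one and the same value $\gamma^H_t$ for every item $i\in U_t$. Hence $\prob{i\in S^H_t}=\gamma^H_t\,w_t(i)$ for all such $i$; that is, \eqref{eq:expratio} really asserts global proportionality of appearance probabilities to weights, leaving only the overall scale factor $\gamma^H_t$ free.

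\emph{Step 2 (expected size as a multiple of total weight).} Writing $|S^H_t|=\sum_{i\in U_t}\mathbf{1}\{i\in S^H_t\}$ and taking expectations, linearity gives
\[
\mean[|S^H_t|]=\sum_{i\in U_t}\prob{i\in S^H_t}=\gamma^H_t\sum_{i\in U_t}w_t(i)=\gamma^H_t\,W_t,
\]
since the item weights sum to the total weight $W_t$. The same computation applied to R-TBS, combined with Theorem~\ref{th:rtbsIncl} (which identifies its scale factor as $C_t/W_t$), yields $\mean[|S_t|]=C_t$; and because $W_t<n$ we have $C_t=\min(n,W_t)=W_t$, so R-TBS attains $\gamma_t=1$ and $\mean[|S_t|]=W_t$.

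\emph{Step 3 (bounding the scale factor and concluding).} It remains to show $\gamma^H_t\le 1$. The heaviest items alive at time~$t$ are those of the current batch $\xB_t$, each of weight $w_t(i)=e^{-\lambda\cdot 0}=1$; for such an item the identity $\prob{i\in S^H_t}=\gamma^H_t$ together with the trivial bound $\prob{i\in S^H_t}\le 1$ force $\gamma^H_t\le 1$. Combining with Step~2 gives $\mean[|S^H_t|]=\gamma^H_t W_t\le W_t=\mean[|S_t|]$, which is the claim.

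The crux---and the step I expect to be the main obstacle---is the bound $\gamma^H_t\le 1$ in Step~3. The subtlety is that \eqref{eq:expratio} constrains appearance probabilities only up to the overall scale $\gamma^H_t$, and this scale is pinned down solely by the requirement that no probability exceed~$1$, a requirement that binds precisely at the newest, weight-one items. Consequently the argument is clean exactly when the current batch is nonempty. The one case needing separate care is $\xB_t=\emptyset$, where the heaviest surviving item has weight $w_{\max}<1$ and the probability bound only yields $\gamma^H_t\le 1/w_{\max}>1$; I would handle this by restricting attention to batch-arrival times with $\xB_t\neq\emptyset$, which is the only regime in which maximizing the sample size is a meaningful question.
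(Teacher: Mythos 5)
Your proof is correct and follows essentially the same route as the paper's: both arguments reduce \eqref{eq:expratio} to proportionality $\prob{i\in S^H_t}=r_t\,w_t(i)$, pin down $r_t\le 1$ by applying the trivial probability bound to the weight-one items of the current batch, and identify R-TBS's factor as $C_t/W_t=1$ in the unsaturated case. Your explicit flagging of the $\xB_t=\emptyset$ corner case is a fair observation (the paper's step ``taking $j=t$'' silently assumes the current batch is nonempty), but it does not change the substance of the argument.
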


\begin{proof}
Since $H$ satisfies \eqref{eq:expratio}, it follows that, for each time $j\le t$ and $i\in\xB_j$, the inclusion probability $\prob{i\in S^H_t}$ must be of the form $r_te^{-\lambda(t-j)}$ for some function $r_t$ independent of $j$. Taking $j=t$, we see that $r_t\le 1$. For R-TBS in an unsaturated state, \eqref{eq:inclRTBS} implies that $r_t=C_t/W_t=1$, so that $\prob{i\in S^H_t}\le\prob{i\in S_t}$ , and the desired result follows directly.
\end{proof}

\begin{theorem}\label{th:minVar}
Let $H$ be any sampling algorithm that satisfies~\eqref{eq:expratio} and has maximal expected sample size $C_t$ and denote by $S_t$ and $S^{H}_t$ the samples produced at time~$t$ by R-TBS and H. Then $\var[|S^H_t|]\ge\var[|S_t|]$ for any time $t\ge 1$.
\end{theorem}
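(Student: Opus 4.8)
The plan is to reduce the statement to a one–dimensional fact about integer-valued random variables, after conditioning on the arrival process. First I would condition on the arrival history $\mathcal{H}=(B_1,\ldots,B_t)$, which fixes the total weight $W_t$ and hence the sample weight $C_t=\min(n,W_t)$. Given $\mathcal{H}$, the cardinality of any realized sample is a nonnegative \emph{integer}-valued random variable, and by the hypotheses on $H$ together with Theorem~\ref{th:maxMean} both algorithms attain the maximal conditional expected size, i.e. $\mean[\,|S_t|\mid\mathcal{H}\,]=\mean[\,|S^H_t|\mid\mathcal{H}\,]=C_t$. For R-TBS, \eqref{eq:getsample} shows that $|S_t|$ takes only the two values $\floor{C_t}$ and $\ceil{C_t}$, with probabilities $1-\frc(C_t)$ and $\frc(C_t)$; a direct Bernoulli-variance computation then gives $\var[\,|S_t|\mid\mathcal{H}\,]=\frc(C_t)\bigl(1-\frc(C_t)\bigr)$.

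The heart of the argument is the following elementary lemma: if $X$ is any integer-valued random variable with $\mean[X]=\mu$, then $\var[X]\ge\frc(\mu)\bigl(1-\frc(\mu)\bigr)$, with equality iff $X$ is supported on $\{\floor{\mu},\ceil{\mu}\}$. To prove it, set $m=\floor{\mu}$ and observe that for every integer value $k$ of $X$ the quantity $(k-m)(k-m-1)$ is a product of two consecutive integers, hence nonnegative, and vanishes exactly when $k\in\{m,m+1\}$. Taking expectations gives $\mean[X^2]\ge(2m+1)\mu-m(m+1)$, and subtracting $\mu^2$ and simplifying with $\mu=m+\frc(\mu)$ collapses the right-hand side to precisely $\frc(\mu)\bigl(1-\frc(\mu)\bigr)$.

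Applying this lemma conditionally with $\mu=C_t$ to the integer variable $|S^H_t|$ yields $\var[\,|S^H_t|\mid\mathcal{H}\,]\ge\frc(C_t)\bigl(1-\frc(C_t)\bigr)=\var[\,|S_t|\mid\mathcal{H}\,]$ for every realization of $\mathcal{H}$, so that R-TBS is exactly the equality case and attains the minimum. Finally I would lift this to the unconditional variances via the law of total variance: since $\mean[\,|S_t|\mid\mathcal{H}\,]$ and $\mean[\,|S^H_t|\mid\mathcal{H}\,]$ both equal $C_t$, the ``between'' terms (each equal to $\var[C_t]$) cancel, leaving $\var[|S^H_t|]-\var[|S_t|]=\mean\bigl[\var[\,|S^H_t|\mid\mathcal{H}\,]-\var[\,|S_t|\mid\mathcal{H}\,]\bigr]\ge 0$.

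The only genuinely technical ingredient is the integer-variance lemma, which is short; the main conceptual care needed is in reading ``maximal expected sample size $C_t$'' as equality of the \emph{conditional} means, which is what makes the total-variance decomposition collapse cleanly and is exactly what \eqref{eq:expratio} and Theorem~\ref{th:maxMean} supply. It is worth noting that \eqref{eq:expratio} itself plays no further role in the variance bound beyond pinning down this common mean, since the lower bound $\frc(C_t)\bigl(1-\frc(C_t)\bigr)$ holds for \emph{any} integer-valued sample size with mean $C_t$.
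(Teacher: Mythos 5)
Your proposal is correct and follows essentially the same route as the paper: the paper's proof simply asserts that, among integer-valued sample-size distributions with mean $C_t$, variance is minimized by concentrating all mass on $\floor{C_t}$ and $\ceil{C_t}$, which is exactly the stochastic-rounding distribution realized by R-TBS. Your write-up supplies the two details the paper leaves implicit---the elementary lemma $\var[X]\ge\frc(\mu)\bigl(1-\frc(\mu)\bigr)$ via $\mean[(X-m)(X-m-1)]\ge 0$, and the conditioning on the arrival history followed by the law of total variance---both of which check out.
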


\begin{proof}
Considering all possible distributions over the sample size having a mean value equal to $C_t$, it is straightforward to show that variance is minimized by concentrating all of the probability mass onto $\floor{C_t}$ and $\ceil{C_t}$. There is precisely one such distribution, namely the stochastic-rounding distribution, and this is precisely the sample-size distribution attained by R-TBS.
\end{proof}


\section{Distributed TBS Algorithms}\label{sec:imp}

In this section, we describe how to implement distributed versions of T-TBS and R-TBS to handle large volumes of data. 




\subsection{Overview of Distributed Algorithms}

The distributed T-TBS and R-TBS algorithms, denoted as D-T-TBS and D-R-TBS respectively, 
need to distribute large data sets across the cluster and parallelize the computation on them. 

\textbf{Overview of D-T-TBS:} The implementation of the D-T-TBS algorithm is very similar to the simple distributed Bernoulli time-biased sampling algorithm in~\cite{XieTSBH15}. It is embarrassingly parallel, requiring no coordination. At each time point~$t$, each worker in the cluster subsamples its partition of the sample with probability $p$, subsamples its partition of $\xB_t$ with probability $q$, and then takes a union of the resulting data sets. 

\textbf{Overview of D-R-TBS:} This algorithm, unlike D-T-TBS, maintains a bounded sample, and hence cannot be embarrassingly parallel. D-R-TBS first needs to aggregate local batch sizes to compute the incoming batch size $|\xB_t|$ to maintain the total weight $W$. Then, based on $|\xB_t|$ and the previous total weight $W$, D-R-TBS determines whether the reservoir was previously saturated and whether it will be saturated after processing $\xB_t$. For each possible situation, D-R-TBS chooses the items in the reservoir to delete through downsampling and the items in $\xB_t$ to insert into the reservoir. This process requires the master to coordinate among the workers. In Section~\ref{sec:updates}, we introduce two alternative approaches to determine the deleted and inserted items. Finally, the algorithm applies the deletes and inserts to form the new reservoir, and computes the new total weight $W$. 

Both D-T-TBS and D-R-TBS periodically checkpoint the sample as well as other system state variables to ensure fault tolerance. The implementation details for D-T-TBS are mostly subsumed by those for D-R-TBS, so we focus on the latter.

\subsection{Distributed Data Structures}\label{sec:reservoir}

There are two important data structures in the D-R-TBS algorithm: the incoming batch and the reservoir. Conceptually, we view an incoming batch $\xB_t$ as an array of slots numbered from 1 through $|\xB_t|$, and the reservoir as an array of slots numbered from 1 through $\floor{C}$ containing full items plus a special slot for the partial item. For both data structures, data items need to be distributed into partitions due to the large data volumes. Therefore, the slot number of an item maps to a specific partition ID and a position inside the partition.

The incoming batch usually comes from a distributed streaming system, such as Spark Streaming; the actual data structure is specific to the streaming system (e.g. an incoming batch is stored as an RDD in Spark Streaming). As a result, the partitioning strategy of the incoming batch is opaque to the D-R-TBS algorithm. Unlike the incoming batch, which is read-only and discarded at the end of each time period, the reservoir data structure must be continually updated. An effective strategy for storing and operating on the reservoir is thus crucial for good performance. We now explore alternative approaches to implementing the reservoir.


\begin{figure}[t]
	\centering
	\includegraphics[width=2.6 in]{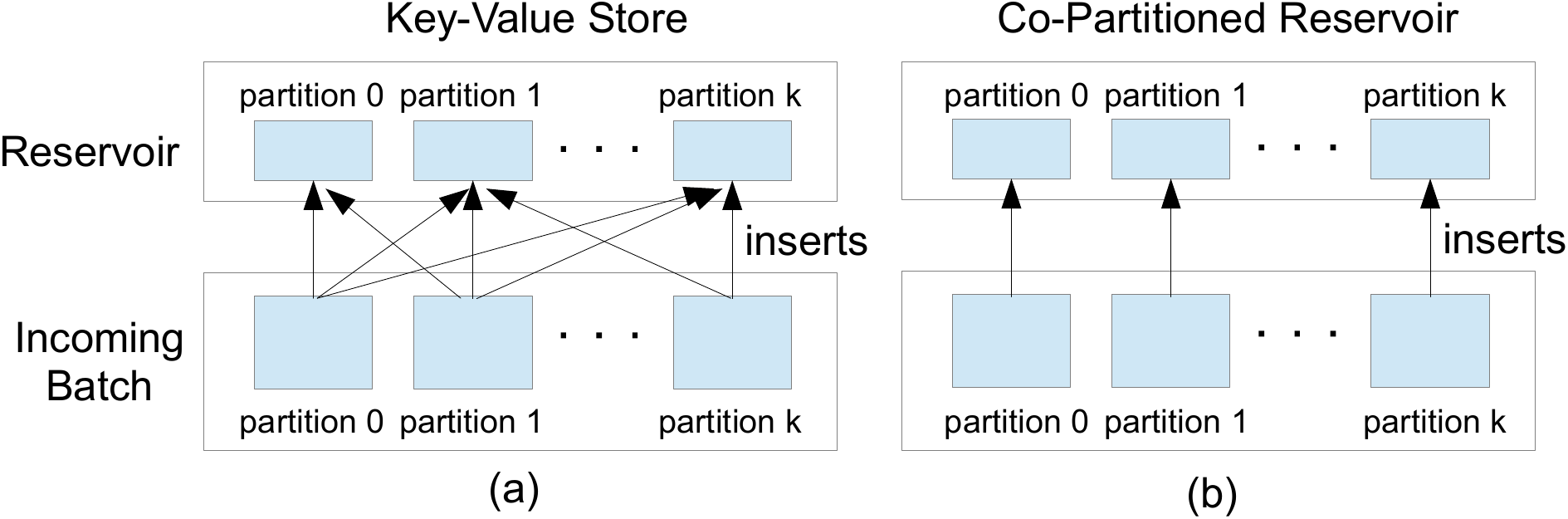}
	\SmallCrunch
	\caption{Design choices for implementing the reservoir}
	\label{fig:reservoir}
	\SmallCrunch
\end{figure}
\begin{figure}[t]
	\centering
	\includegraphics[width=2.6 in]{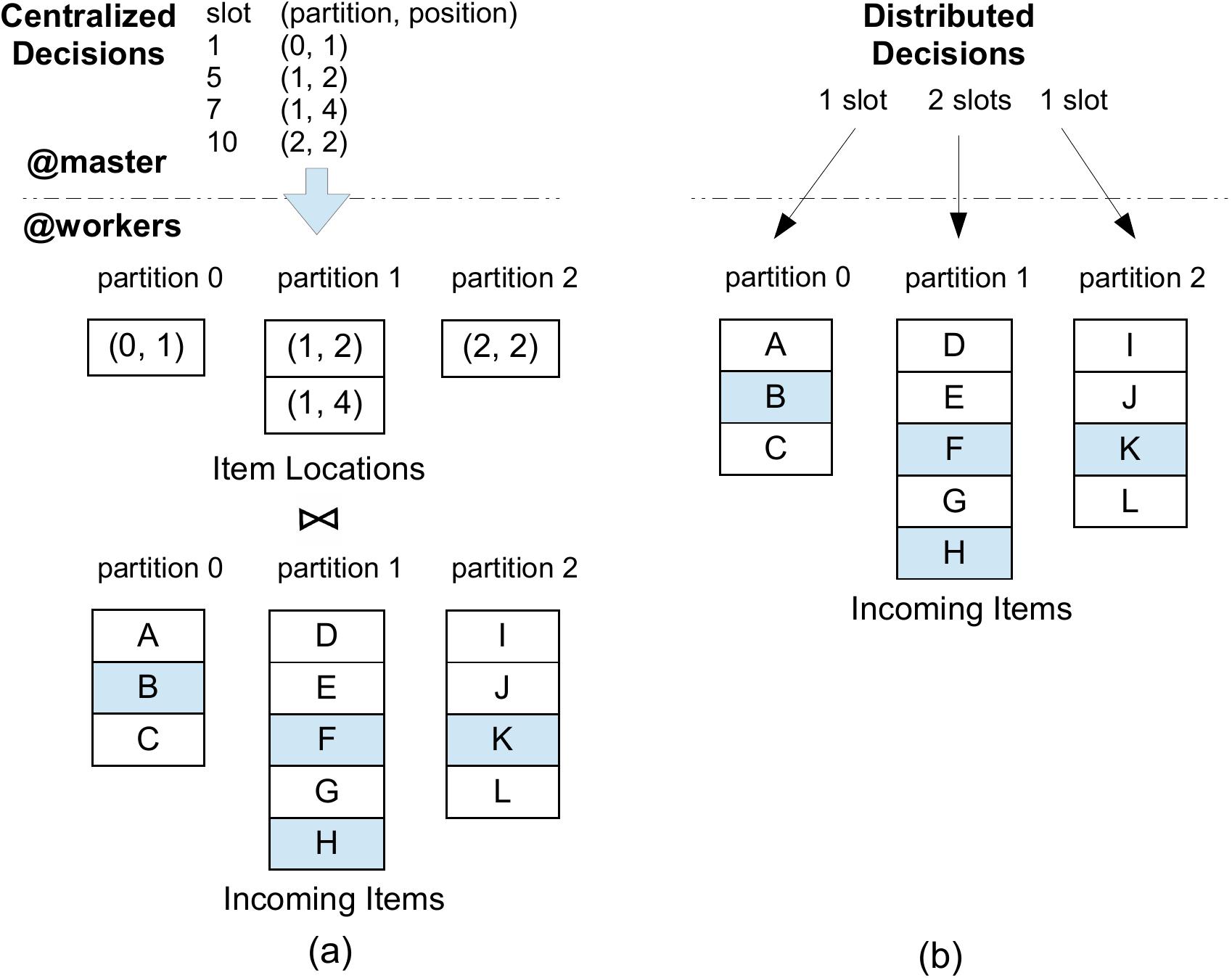}
	\SmallCrunch
	\caption{Retrieving insert items}
	\label{fig:insert}
	\BigCrunch
	\SmallCrunch
\end{figure}

\textbf{Distributed in-memory key-value store:}
One quite natural approach implements the reservoir using an off-the-shelf distributed in-memory key-value store, such as Redis~\cite{redis} or Memcached~\cite{memcached}. In this scheme, each item in the reservoir is stored as a key-value pair, with the slot number as the key and the item as the value. Inserts and deletes to the reservoir naturally translate into put and delete operations to the key-value store.

There are two major limitations to this approach. Firstly, the hash-based or range-based data-partitioning scheme used by a distributed key-value store yields reservoir partitions that do not correlate with the partitions of incoming batch. As illustrated in Figure~\ref{fig:reservoir}(a), when items from a given partition of an incoming batch are inserted into the reservoir, the inserts touch many (if not all) partitions of the reservoir, incurring heavy network I/O. Secondly, key-value stores incur needless concurrency-control overhead. For each batch, D-R-TBS already carefully coordinates the deletes and inserts so that no two delete or insert operations access the same slots in the reservoir and there is no danger of write-write or read-write conflicts. 


\textbf{Co-partitioned reservoir:}
In the alternative approach, we implement a distributed in-memory data structure for the reservoir so as to ensure that the reservoir partitions coincide with the partitions from incoming batches, as shown in Figure~\ref{fig:reservoir}(b). This can be achieved in spite of the unknown partitioning scheme of the streaming system. Specifically, the reservoir is initially empty, and all items in the reservoir are from the incoming batches. Therefore, if an item from a given partition of an incoming batch is always inserted into the corresponding ``local'' reservoir partition and deletes are also handled locally, then the co-partitioning and co-location of the reservoir and incoming batch partitions is automatic. For our experiments, we implemented the co-partitioned reservoir in Spark using the in-place updating technique for RDDs in~\cite{XieTSBH15}; see Appendix~\ref{sec:spark-impl}\longPaper.

Note that, at any point in time, a given slot number in the reservoir maps to a specific partition ID and a position inside the partition. Thus the slot number for a given full item may change over time due to reservoir insertions and deletions. This does not cause any statistical issues, because the functioning of the set-based R-TBS algorithm is oblivious to specific slot numbers.  


\subsection{Choosing Items to Delete and Insert}\label{sec:updates}

In order to bound the reservoir size, D-R-TBS requires careful coordination when choosing the set of items to delete from, and insert into, the reservoir. At the same time, D-R-TBS must ensure the statistical correctness of random number generation and random permutation operations in the distributed environment. We consider two possible approaches.

\textbf{Centralized decisions:}
In the most straightforward approach, the master makes centralized decisions about which items to delete and insert. For deletes, the driver generates slot numbers of the items in the reservoir to be deleted, which are then mapped to the actual data locations in a manner that depends on the representation of the reservoir (key-value store or co-partitioned reservoir). For inserts, the driver generates the slot numbers of the incoming items $\xB_t$ at time~$t$ that need to be inserted into the reservoir. Suppose that $\xB_t$ comprises $k\ge 1$ partitions. Each generated slot number $i\in\{1,2,\ldots,|\xB_t|\}$ is mapped to a partition $p_i$ of the $\xB_t$ (where $0\le p_i\le k-1$) and a position $r_i$ inside partition~$p_i$. Denote by $\mathcal{Q}$ the set of ``item locations'', i.e., the set of $(p_i,r_i)$ pairs. In order to perform the inserts, we need to first retrieve the actual items based on the item locations. This can be achieved with a join-like operation between $\mathcal{Q}$ and $\xB_t$, with the $(p_i,r_i)$ pair matching the actual location of an item inside $\xB_t$. To optimize this operation, we make $\mathcal{Q}$ a distributed data structure and use a customized partitioner to ensure that all pairs $(p_i,r_i)$ with $p_i=j$ are co-located with partition~$j$ of $\xB_t$ for $j=0,1,\ldots,k-1$. Then a co-partitioned and co-located join can be carried out between $\mathcal{Q}$ and $\xB_t$, as illustrated in Figure~\ref{fig:insert}(a) for $k=3$. The resulting set of retrieved insert items, denoted as $\mathcal{S}$, is also co-partitioned with $\xB_t$ as a by-product. After that, the actual deletes and inserts are then carried out depending on how reservoir is stored, as discussed below.

When the reservoir is implemented as a key-value store, the deletes can be directly applied based on the slot numbers. For inserts, the master takes each generated slot number of an item in $\xB_t$ and chooses a companion destination slot number in the reservoir into which the $\xB_t$ item will be inserted.  This destination reservoir slot might currently be empty due to an earlier deletion, or might contain an item that will now be replaced by the newly inserted batch item. After the actual items to insert are retrieved as described previously, the destination slot numbers are used to put the items into the right locations in the key-value store. 

When the co-partitioned reservoir is used, the delete slot numbers in the reservoir are mapped to $(p_i,r_i)$ pairs of partitions of the reservoir and positions inside the partitions. As with inserts, we again use a customized partitioner for the set of pairs $\mathcal{R}$ such that deletes are co-located with the corresponding reservoir partitions. Then a join-like operation on $\mathcal{R}$ and the reservoir performs the actual delete operations on the reservoir. For inserts, we simply use another join-like operation on the set of retrieved insert items $\mathcal{S}$ and the reservoir to add the corresponding insert items to the co-located partition of the reservoir. In this approach, we don't need the master to generate destination reservoir slot numbers for these insert items, because we view the reservoir as a set when using co-partitioned reservoir data structure.

\textbf{Distributed decisions:}
The above approach requires generating a large number of slot numbers inside the master, so we now explore an alternative approach that offloads the slot number generation to the workers while still ensuring the statistical correctness of the computation. This approach has the master choose only the number of deletes and inserts per worker according to appropriate multivariate hypergeometric distributions. For deletes, each worker chooses random victims from its local partition of the reservoir based on the number of deletes given by the master. For inserts, the worker randomly and uniformly selects items from its local partition of the incoming batch $\xB_t$ given the number of inserts. Figure~\ref{fig:insert}(b) depicts how the insert items are retrieved under this decentralized approach. We use the technique in~\cite{Haramoto} for parallel pseudo-random number generation.

Note that this distributed decision making approach works only when the co-partitioned reservoir data structure is used. This is because the key-value store representation of the reservoir requires a target reservoir slot number for each insert item from the incoming batch, and the target slot numbers have to be generated in such a way as to ensure that, after the deletes and inserts, all of the slot numbers are still unique and contiguous in the new reservoir. This requires a lot of coordination among the workers, which inhibits truly distributed decision making.

\SmallCrunch



\section{Experiments}\label{sec:exp}

In this section, we study the empirical performance of D-R-TBS and D-T-TBS, and demonstrate the potential benefit of using them for model retraining in online model management. We implemented D-R-TBS and D-T-TBS on Spark (refer to Appendix~\ref{sec:spark-impl}\longPaper for implementation details).

\textbf{Experimental Setup:}
All performance experiments were conducted on a cluster of 13 IBM System x iDataPlex dx340 servers. Each has two quad-core Intel Xeon E5540 2.8GHz processors and 32GB of RAM. Servers are interconnected using a 1Gbit Ethernet and each server runs Ubuntu Linux, Java 1.7 and Spark 1.6. One server is dedicated to run the Spark coordinator and, each of the remaining 12 servers runs Spark workers. There is one worker per processor on each machine, and each worker is given all 4 cores to use, along with 8~GB of dedicated memory. All other Spark parameters are set to their default values. We used Memcached 1.4.33 as the key-value store in our experiments.

For all experiments, data was streamed in from HDFS using Spark Streaming's microbatches. We report run time per round as the average over 100 rounds, discarding the first round from this average because of Spark startup costs. Unless otherwise stated, each batch contains 10 million items, the target reservoir size is 20 million elements, and the decay parameter is $\lambda = 0.07$. 

\begin{figure}[tbh]
	\centering
	\includegraphics[width=0.7\linewidth]{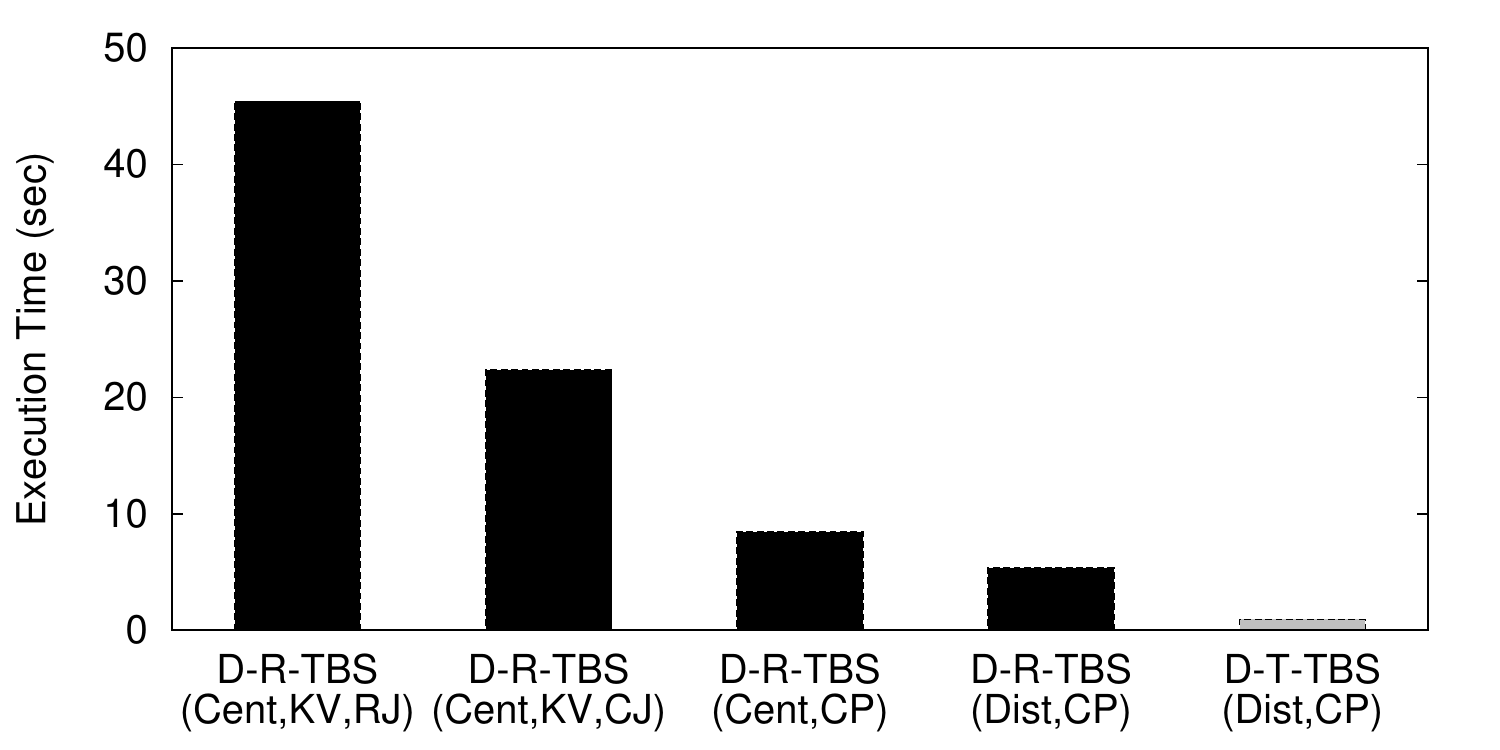}
	\BigCrunch
	\caption{Per-batch distributed runtime comparison}\label{fig:avg_times}
	\BigCrunch
\end{figure}
\begin{figure}[bth]
	\centerline{
		\epsfxsize=1.8in \epsffile{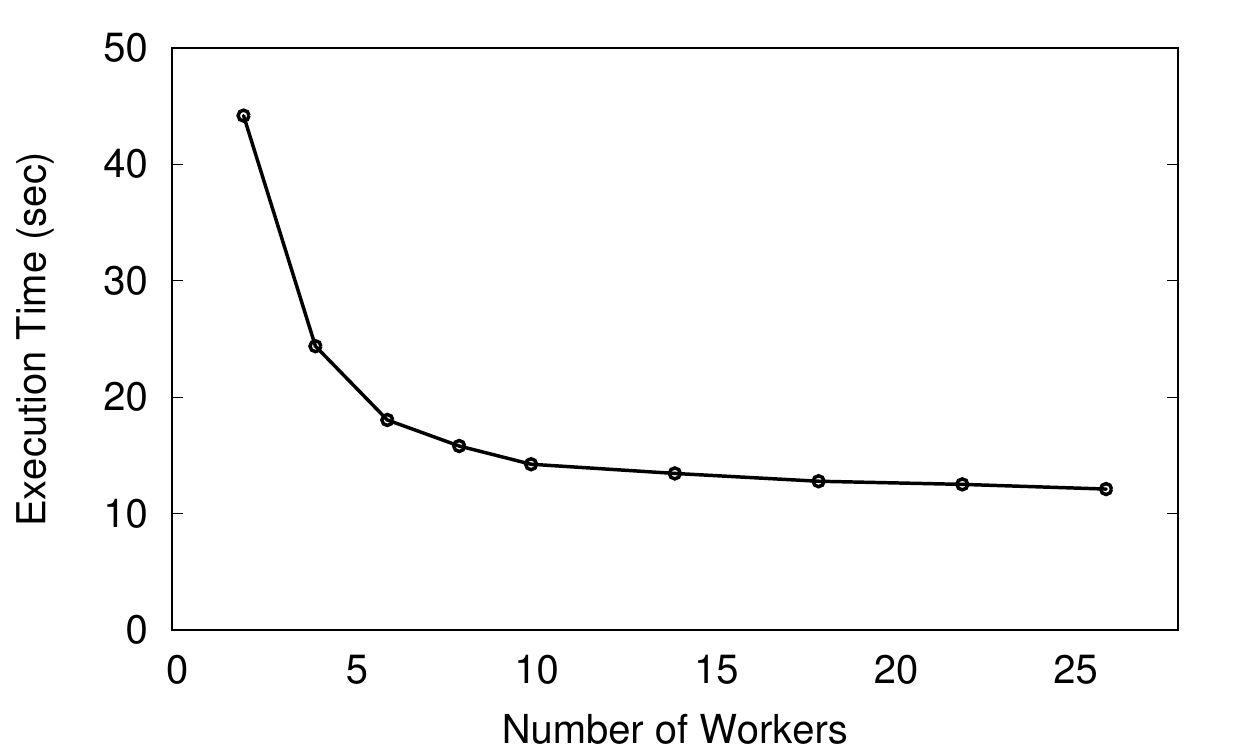}
		\hfill
		\epsfxsize=1.8in \epsffile{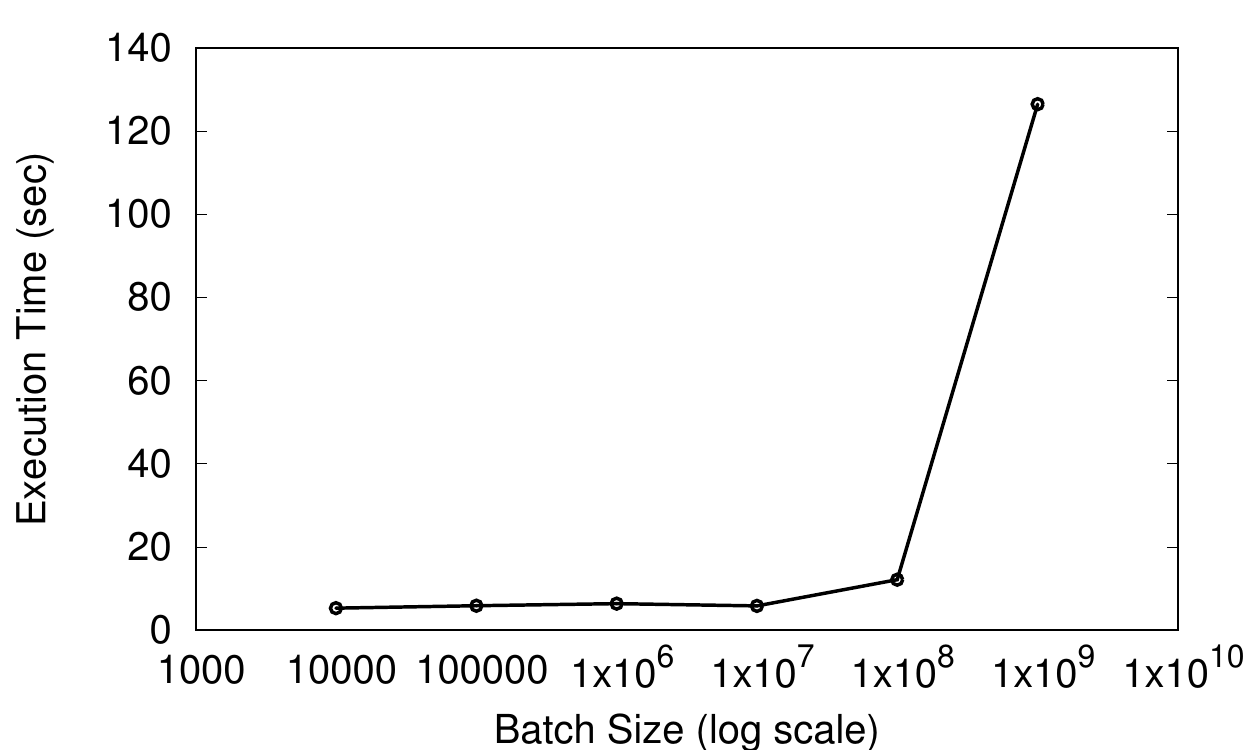}
	}
	\BigCrunch
	\centerline{
		\parbox{1.8in}{\caption{Scale out of D-R-TBS}
			\label{fig:scale_out}}
		\hfill
		\parbox{1.8in}{\caption{Scale up of D-R-TBS}
			\label{fig:scale_up}}
	}
	\BigCrunch
\end{figure}

\subsection{Runtime Performance}



\textbf{Comparison of TBS Implementations:}
Figure~\ref{fig:avg_times} shows the average runtime per batch for five different implementations of distributed TBS algorithms. The first four (colored black) are D-R-TBS implementations with different design choices: whether to use centralized or distributed decisions (abbreviated as "Cent" and "Dist", respectively) for choosing items to delete and insert, and whether to use key-value store for storing reservoir or co-partitioned reservoir (abbreviated as "KV" and "CP", respectively). The first two implementations both use the key-value store representation for reservoir together with the centralized decision strategy for determining inserts and deletes. They only differ in how the insert items are actually retrieved when subsampling the incoming batch. The first uses the standard repartition join (abbreviated as "RJ"), whereas the second uses the customized partitioner and co-located join (abbreviated as "CJ") as described in Section~\ref{sec:updates} and depicted in Figure~\ref{fig:insert}(a). This optimization effectively cuts the network cost in half, but the KV representation of reservoir still requires the insert items to be written across the network to their corresponding reservoir location. The third implementation employs the co-partitioned reservoir instead, resulting in an significant speedup of over 2.6x. The fourth implementation further employs the distributed decision for choosing items to delete and insert. This yields a further 1.6x speedup. We use this D-R-TBS implementation in the remaining experiments.

The fifth implementation (colored grey) in Figure~\ref{fig:avg_times} is D-T-TBS using co-partitioned reservoir and the distributed strategy for choosing delete and insert items. Since, D-T-TBS is embarrassingly parallelizable, it's much faster than the best D-R-TBS implementation. But, as we discussed in Section~\ref{sec:ttbs}, T-TBS only works under a very strong restriction on the data arrival rate, and can suffer from occasional memory overflows; see Figure~\ref{fig:breakdown}. 
In contrast, D-R-TBS is much more robust and works in realistic scenarios where it is hard to predict the data arrival rate.

\textbf{Scalability of D-R-TBS:}
Figure \ref{fig:scale_out} shows how D-R-TBS scales with the number of workers. We increased the batch size to 100 million items for this experiment. Initially, D-R-TBS scales out very nicely with the increasing number of workers. However, beyond 10 workers, the marginal benefit from additional workers is small, because the coordination and communication overheads, as well as the inherent Spark overhead, become prominent. For the same reasons, in the scale-up experiment in Figure~\ref{fig:scale_up}, the runtime stays roughly constant until the batch size reaches 10 million items and increases sharply at 100 million items. This is because processing the streaming input and maintaining the sample start to dominate the coordination and communication overhead. With 10 workers, R-TBS can handle a data flow comprising 100 million items arriving approximately every 14 seconds.

\subsection{Application: Classification using kNN}


We now demonstrate the potential benefits of the R-TBS sampling scheme for periodically retraining  representative ML models in the presence of evolving data. For each model and data set, we compare the quality of models retrained on the samples generated by R-TBS, a simple sliding window (SW), and uniform reservoir sampling (Unif). Due to limited space, we do not give quality results for T-TBS; we found that whenever it applies---i.e. when the mean batch size is known and constant---the quality is very similar to R-TBS, since they both use time-biased sampling. 

Our first model is a kNN classifier, where a class is predicted for each item in an incoming batch by taking a majority vote of the classes of the $k$ nearest neighbors in the current sample, based on Euclidean distance; the sample is then updated using the batch. To generate training data, we first generate 100 class centroids uniformly in a $[0,80]\times [0,80]$ rectangle. Each data item is then generated from a Gaussian mixture model and falls into one of the 100 classes. Over time, the data generation process operates in one of two ``modes". In the ``normal" mode, the frequency of items from any of the first 50 classes is five times higher than that of items in any of the second 50 classes. In the ``abnormal" mode, the frequencies are five times lower. Thus the frequent and infrequent classes switch roles at a mode change. We generate each data point by randomly choosing a ground-truth class $c_i$ with centroid $(x_i,y_i)$ according to relative frequencies that depend upon the current mode, and then generating the data point's $(x,y)$ coordinates independently as samples from $N(x_i,1)$ and $N(y_i,1)$. Here $N(\mu,\sigma)$ denotes the normal distribution with mean $\mu$ and standard deviation $\sigma$.

In this experiment, the batch sizes are deterministic with $b=100$ items, and $k=7$ neighbors for the kNN classifier. The reservoir size for both R-TBS and Unif is 1000, and SW contains the last 1000 items; thus all methods use the same amount of data for retraining.
(We choose this value because it achieves near maximal classification accuracies for all techniques. In general, we choose sampling and ML parameters to achieve good learning performance while ensuring fair comparisons.)
In each run, the sample is warmed up by processing $100$ normal-mode batches before the classification task begins. Our experiments focus on two types of temporal patterns in the data, as described below.

\textbf{Single change:} Here we model the occurrence of a singular event. The data is generated in normal mode up to $t=10$ (time is measured here in units after warm-up), then switches to abnormal mode, and finally at $t=20$ switches back to normal (Figure~\ref{fig:single}). As can be seen, the misclassification rate (percentage of incorrect classifications) with R-TBS, SW and Unif all increase from around 18\% to roughly 50\% when the distribution becomes abnormal. Both R-TBS and SW adapt to the change, recovering to around 16\% misclassification rate after $t=16$, with SW adapting slightly better. In comparison, Unif does not adapt at all. But, when the distribution snaps back to normal, the error rate of SW rises sharply to 40\% before gradually recovering, whereas R-TBS error rate stays low around 15\% throughout. These results prove that R-TBS is indeed more robust: although slightly more sluggish than SW in adapting to changes, R-TBS avoids wild fluctuations in classification error as with SW.

\begin{figure}[bht]
	\centering
	\subfigure[Single Event]{
		\label{fig:single}\includegraphics[width=0.45\linewidth]{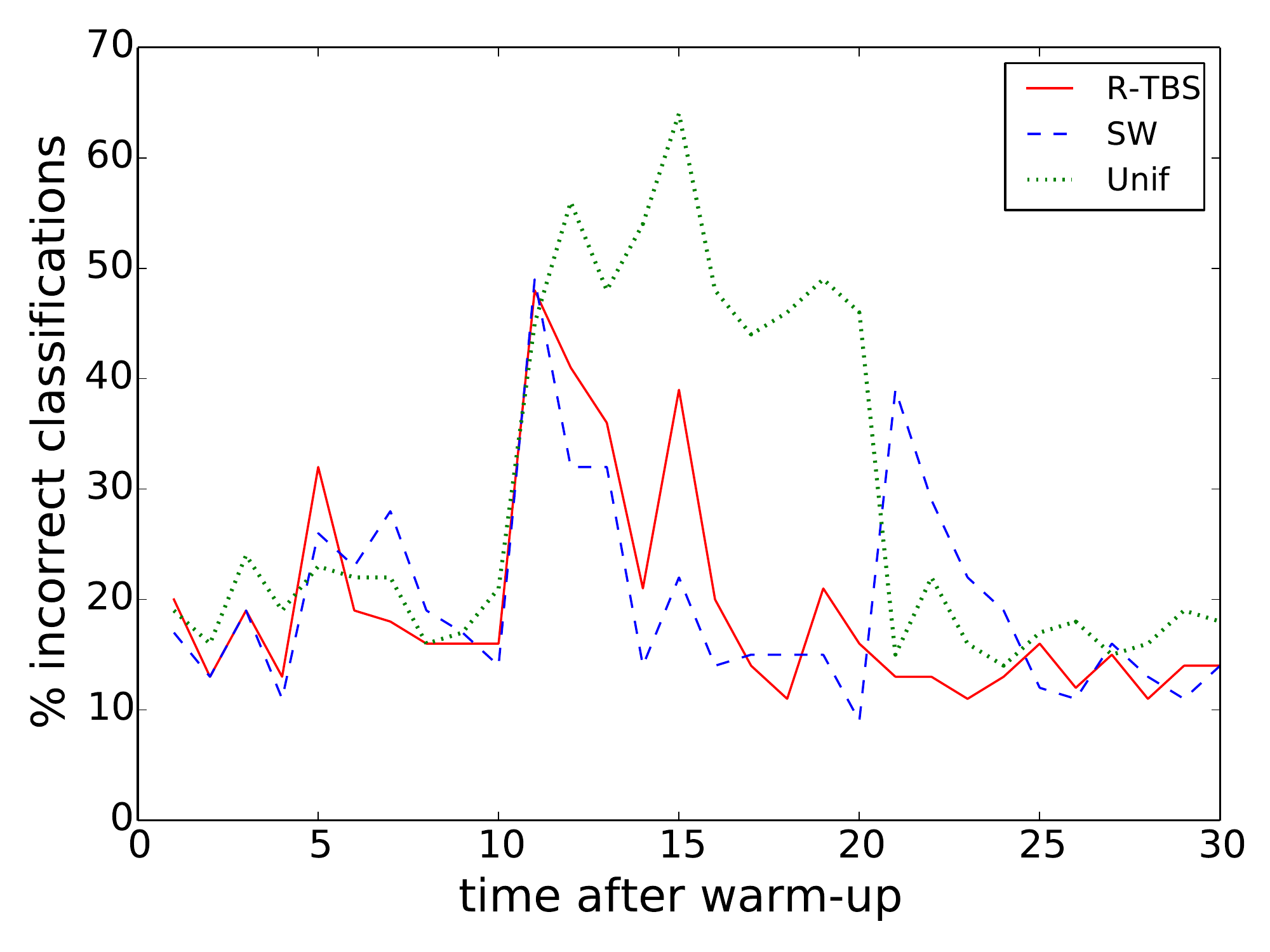}} 
	\subfigure[Periodic (10, 10)]{
		\label{fig:per1010}\includegraphics[width=0.45\linewidth]{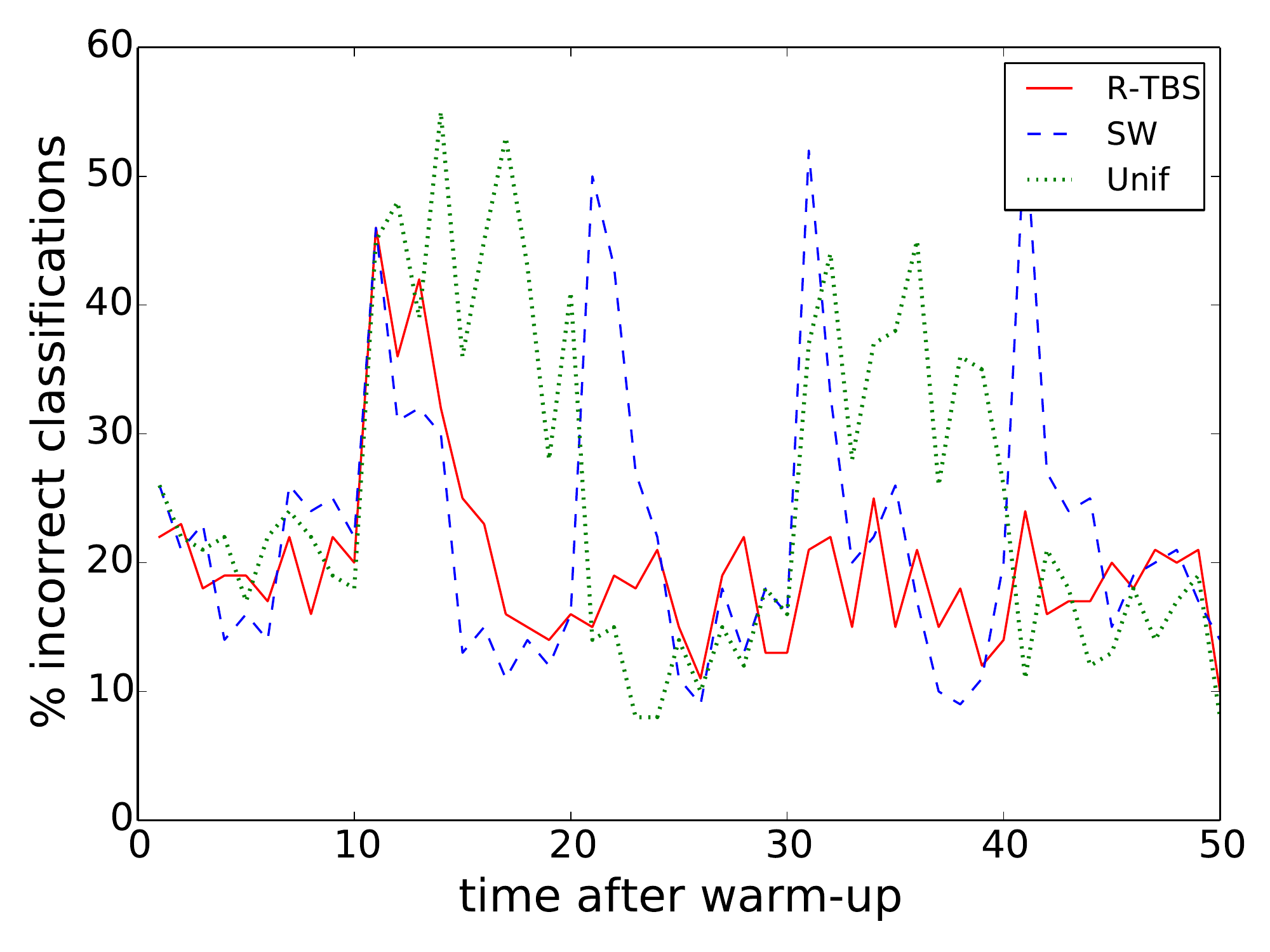}} 
	\BigCrunch
	\caption{\label{fig:accuracy}Misclassification rate (percent) for kNN}
	\SmallCrunch
\end{figure}

\textbf{Periodic change:} For this temporal pattern, the changes from normal to abnormal mode are periodic, with $\delta$ normal batches alternating with $\eta$ abnormal batches, denoted as $\text{Periodic}(\delta, \eta)$, or $P(\delta, \eta)$ for short. Figures~\ref{fig:per1010} shows the misclassification rate for $\text{Periodic}(10, 10)$. Experiments on other periodic patterns (in Appendix~\ref{sec:more-expmts}\longPaper) demonstrate similar results. The robust behavior of R-TBS described above manifests itself even more clearly in the periodic setting. Note, for example, how R-TBS reacts significantly better to the renewed appearances of the abnormal mode. Observe that the first 30 batches of $\text{Periodic}(10, 10)$ display the same behavior as in the single event experiment in Figure~\ref{fig:single}. We therefore focus primarily on the $\text{Periodic}(10, 10)$ temporal pattern for the remaining experiments.


\textbf{Robustness and Effect of Decay Parameter:} In the context of online model management, we need a sampling scheme that delivers high overall prediction accuracy and, perhaps even more importantly, robust prediction performance over time. Large fluctuations in the accuracy can pose significant risks in applications, e.g., in critical IoT applications in the medical domain such as monitoring glucose levels for predicting hyperglycemia events. To assess the robustness of the performance results across different sampling schemes, we use a standard risk measure called \emph{expected shortfall (ES)} \cite[p.~70]{es}. ES measures downside risk, focusing on worst-case scenarios. Specifically, the $z$\% ES is the average value of the worst $z$\% of cases. 

For each of 30 runs and for each sampling scheme, we compute the 10\% ES of the misclassification rate (expressed as a percentage) starting from $t=20$, since all three sampling schemes perform poorly (as would be expected) during the first mode change, which finishes at $t=20$. Table~\ref{tab:lambda} lists both the \emph{accuracy}, measured in terms of the average misclassification rate, and the \emph{robustness}, measured as the average 10\% ES, of the kNN classifier over 30 runs across different temporal patterns. To demonstrate the effect of the decay parameter $\lambda$ on model performance, we also include numbers for different $\lambda$ values in Table~\ref{tab:lambda}. 

In terms of accuracy, Unif is always the worst by a large margin. R-TBS and SW have similar accuracies, with R-TBS having a slight edge in most cases. On the other hand, for robustness, SW is almost always the worst, with ES ranging from 1.4x to 2.7x the maximum ES (over different $\lambda$ values) of R-TBS. Mostly, Unif is also significantly worse than R-TBS, with ES ratios ranging from 1.4x to 1.7x. The only exception is the single-event pattern: since the data remains in normal mode after the abnormal period, time biasing becomes unimportant and Unif performs well. In general, R-TBS provides both better accuracy and robustness in almost all cases. The relative performance of the sampling schemes in terms of accuracy and robustness tend to be consistent across temporal patterns. Table~\ref{tab:lambda} also shows that different $\lambda$ values affect the accuracy and robustness, however, R-TBS provides superior results over a fairly wide range of $\lambda$ values. 

 
\textbf{Varying batch size:} We now examine model quality when the batch sizes are no longer constant. Overall, the results look similar to those for constant batch size. For example, Figure~\ref{fig:unif} shows results for a Uniform(0,200) batch-size distribution, and Figure~\ref{fig:batchgrow} shows results for a deterministic batch size that grows at a rate of 2\% after warm-up. In both experiments, $\lambda=0.07$ and the data pattern is $\text{Periodic}(10, 10)$. These figures demonstrate the robust performance of R-TBS in the presence of varying data arrival rates. Similarly, the average accuracy and robustness over 30 runs resembles the results in Table~\ref{tab:lambda}. For example, pick $\lambda=0.07$ and a $\text{Periodic}(10, 10)$ pattern. Then, the misclassification rate under uniform/growing batch sizes is 1.16x/1.14x that of R-TBS for SW, and 1.47x/1.40x for Unif. In addition, the ES is 1.82x/1.98x that of R-TBS for SW, and 1.76x/1.78x for Unif.

\begin{table}[tbh]
\caption{Accuracy and robustness of kNN performance}\label{tab:lambda}
\BigCrunch
	\scriptsize{
		\begin{tabular}[0.3\linewidth]{|c|c|c|c|c|c|c|c|c|c|c|}
			\hline
			& \multicolumn{2}{c|}{
			\textbf{Single Event}} & \multicolumn{2}{c|}{\textbf{P(10,10)}} & \multicolumn{2}{c|}{\textbf{P(20,10)}} & \multicolumn{2}{c|}{\textbf{P(30,10)}} \\ \cline{2-9}
			$\lambda$  & Miss\% & ES  & Miss\% & ES &	Miss\% & ES & Miss\% & ES \\ \hline
			0.05 & 19.8 & \textbf{17.7}  & 18.2 & 24.2 &	17.9 & 28.2 & 15.5 & 31.6 \\ \hline
			0.07 & 19.1 & 18.7 &	17.4 & \textbf{23.2} &	17.2 & \textbf{28.1} & \textbf{14.9} & \textbf{31.0} \\ \hline
			0.10 & \textbf{18.0} & 20.0 &	\textbf{16.6} & 24.1 &	\textbf{16.6} & 29.9 &	15.1 & \textbf{31.0} \\ \hline
			SW & 19.2 & 53.3 & 19.0 & 49.8 &	18.8 & 47.3 &	16.5 & 44.5 \\ \hline
			Unif & 25.6 & 19.3 &	25.4 & 42.3 &	25.0 & 43.2 &	21.0 & 47.6 \\ \hline
		\end{tabular}
	}
	\BigCrunch
	\BigCrunch
\end{table}

\begin{figure}[tbh]
	\centering
	\subfigure[Uniform Batch Size]{
		\label{fig:unif}\includegraphics[width=0.45\linewidth]{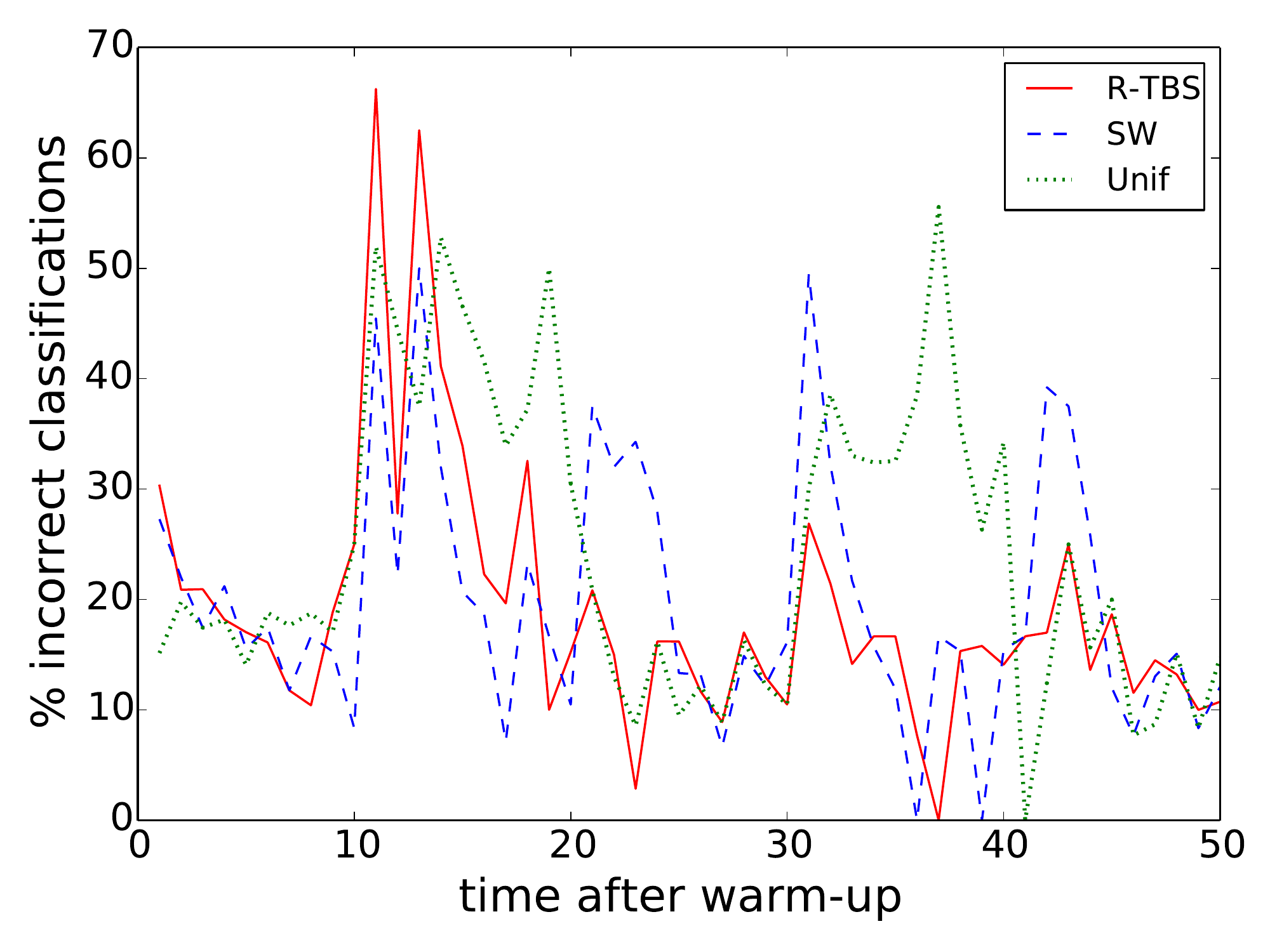}} 
	\subfigure[Growing Batch Size]{
		\label{fig:batchgrow}\includegraphics[width=0.45\linewidth]{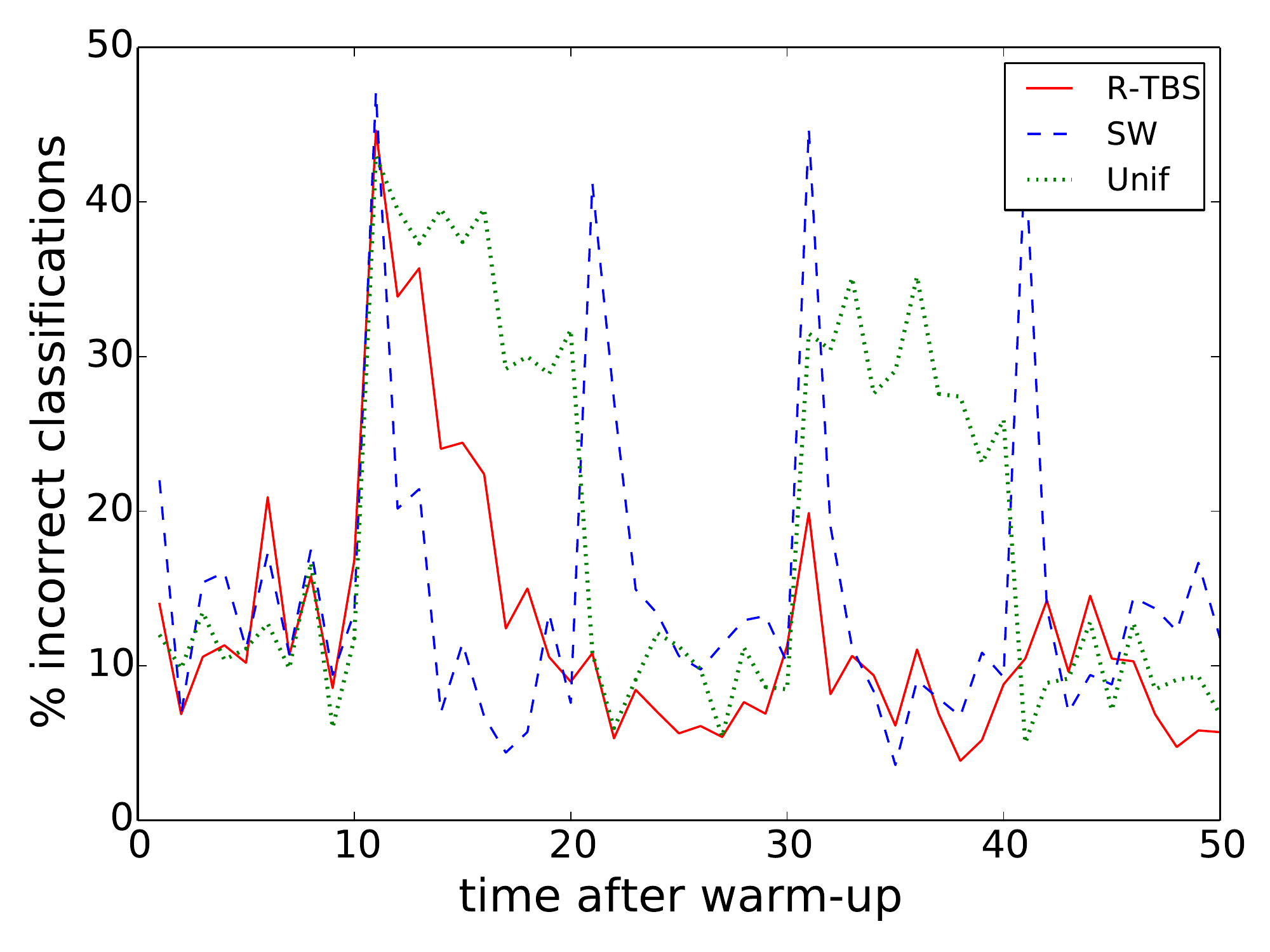}} 
	\BigCrunch
	\caption{\label{fig:rates} Varying batch sizes for kNN classifier}
	\BigCrunch
\end{figure}


\subsection{Application: Linear Regression}

We now assess the effectiveness of R-TBS for retraining regression models. The experimental setup is similar to kNN, with data generated in ``normal'' and ``abnormal'' modes. In both modes, data items are generated from the standard linear regression model $y = b_1x_1 + b_2x_2 + \eps$, with the noise term $\eps$ distributed according to a $N(0,1)$ distribution. In normal mode, $(b_1,b_2)=(4.2,-0.4)$ and in abnormal mode, $(b_1,b_2)=(-3.6,3.8)$. In both modes, $x_1$ and $x_2$ are generated according to $\text{Uniform}(0,1)$ distribution. As before, the experiment starts with a warm-up of 100 ``normal'' mode batches and each batch contains 100 items. 

\textbf{Saturated samples:} Figure~\ref{fig:regression} shows the performance of R-TBS, SW, and Unif for the $\text{Periodic}(10, 10)$ pattern with a maximum sample size of 1000 for each technique. We note that, for this sample size and temporal pattern, the R-TBS sample is always saturated. (This is also true for all of the prior experiments.) The results echo that of the previous section, with R-TBS exhibiting slightly better prediction accuracy on average, and significantly better robustness, than the other methods. The mean square errors (MSEs) across all data points for R-TBS, Unif, and SW are 3.51, 4.43, 4.02 respectively, and their 10\% ES of the MSEs are 6.04, 10.05, 10.94 respectively.

\begin{figure*}[tbh]
	\centering
	\begin{minipage}[b]{.65\linewidth}
		\subfigure[n=1000, Periodic(10,10)]{
			\label{fig:regression}\includegraphics[width=1.4in]{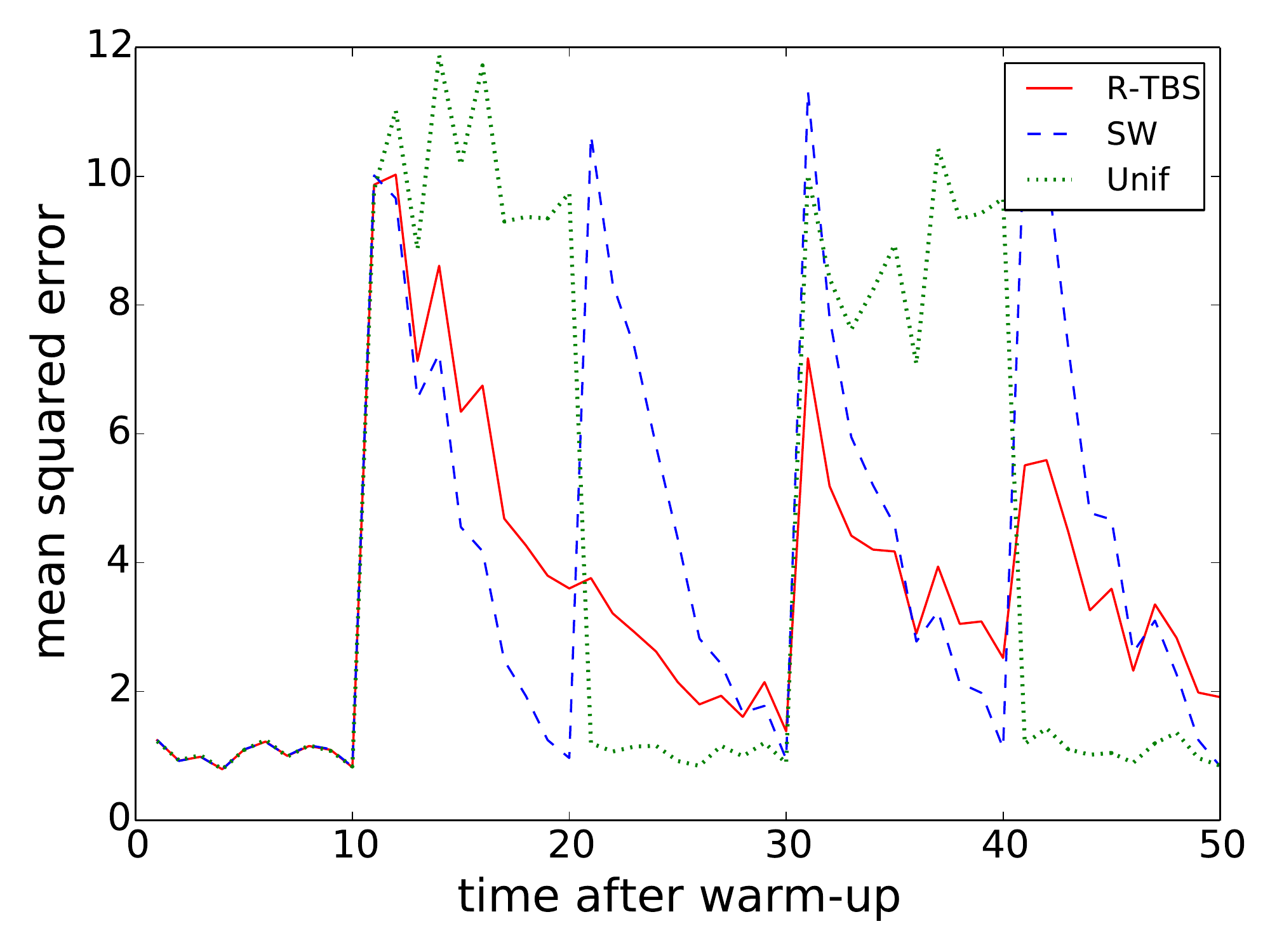}} 
		\subfigure[n=1600, Periodic(10,10)]{
			\label{fig:regression1500_10}\includegraphics[width=1.4in]{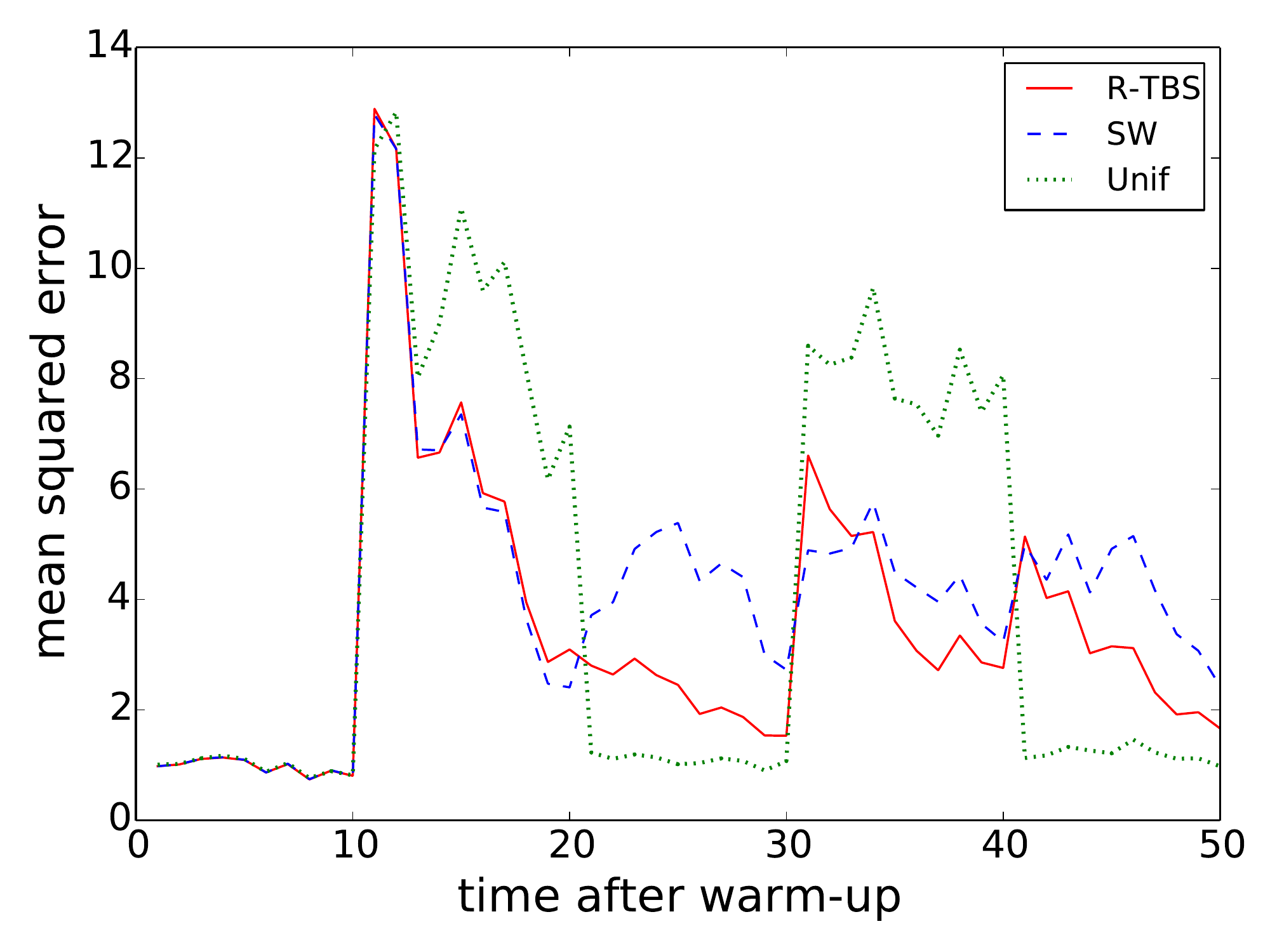}}
		\subfigure[n=1600, Periodic(16,16)]{
			\label{fig:regression1500_15}\includegraphics[width=1.4in]{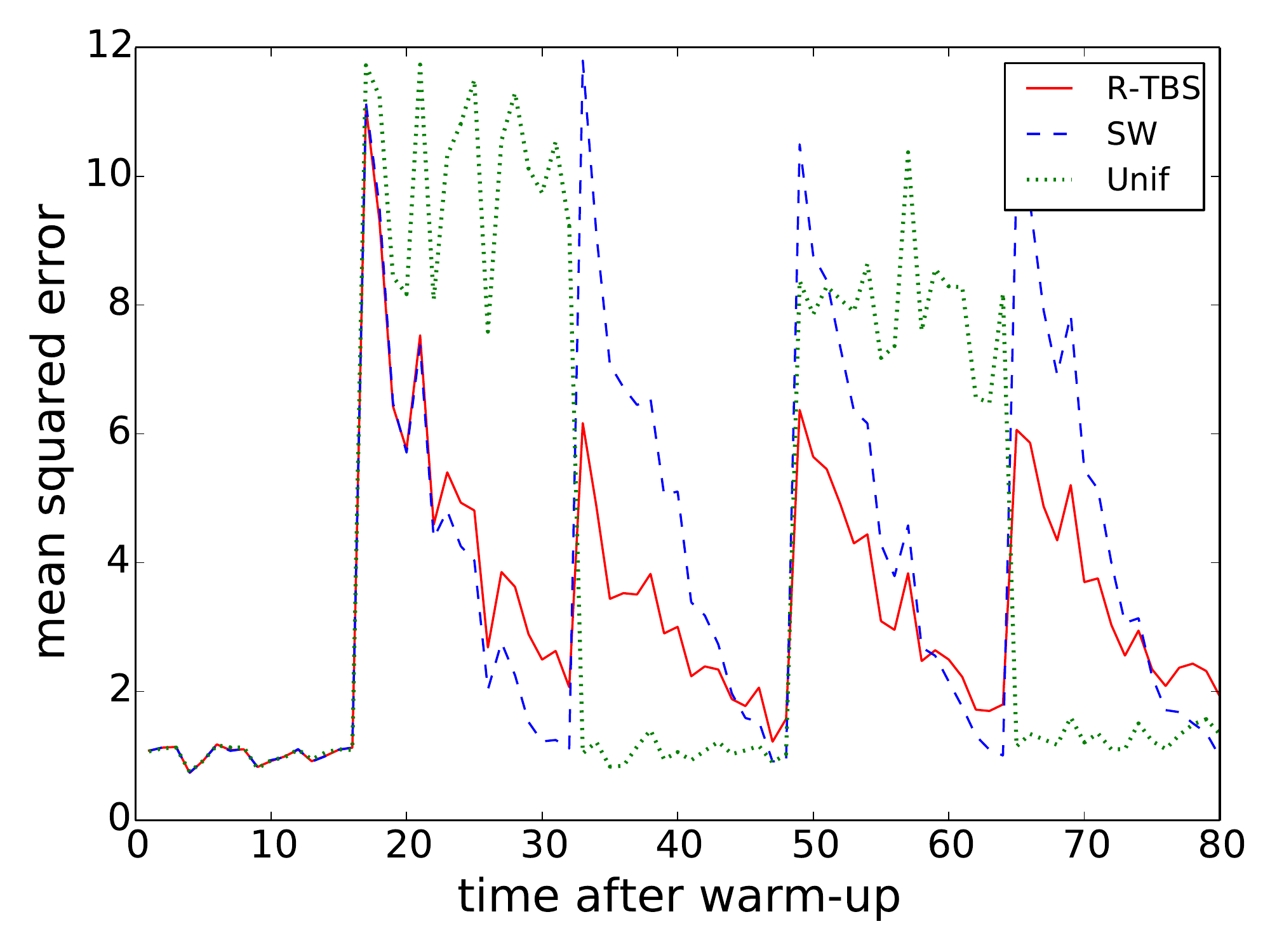}}  
		\BigCrunch
		\caption{Mean square error for linear regression}
	\end{minipage}\qquad
	\begin{minipage}[b]{.28\linewidth}
		\includegraphics[width=1.4in]{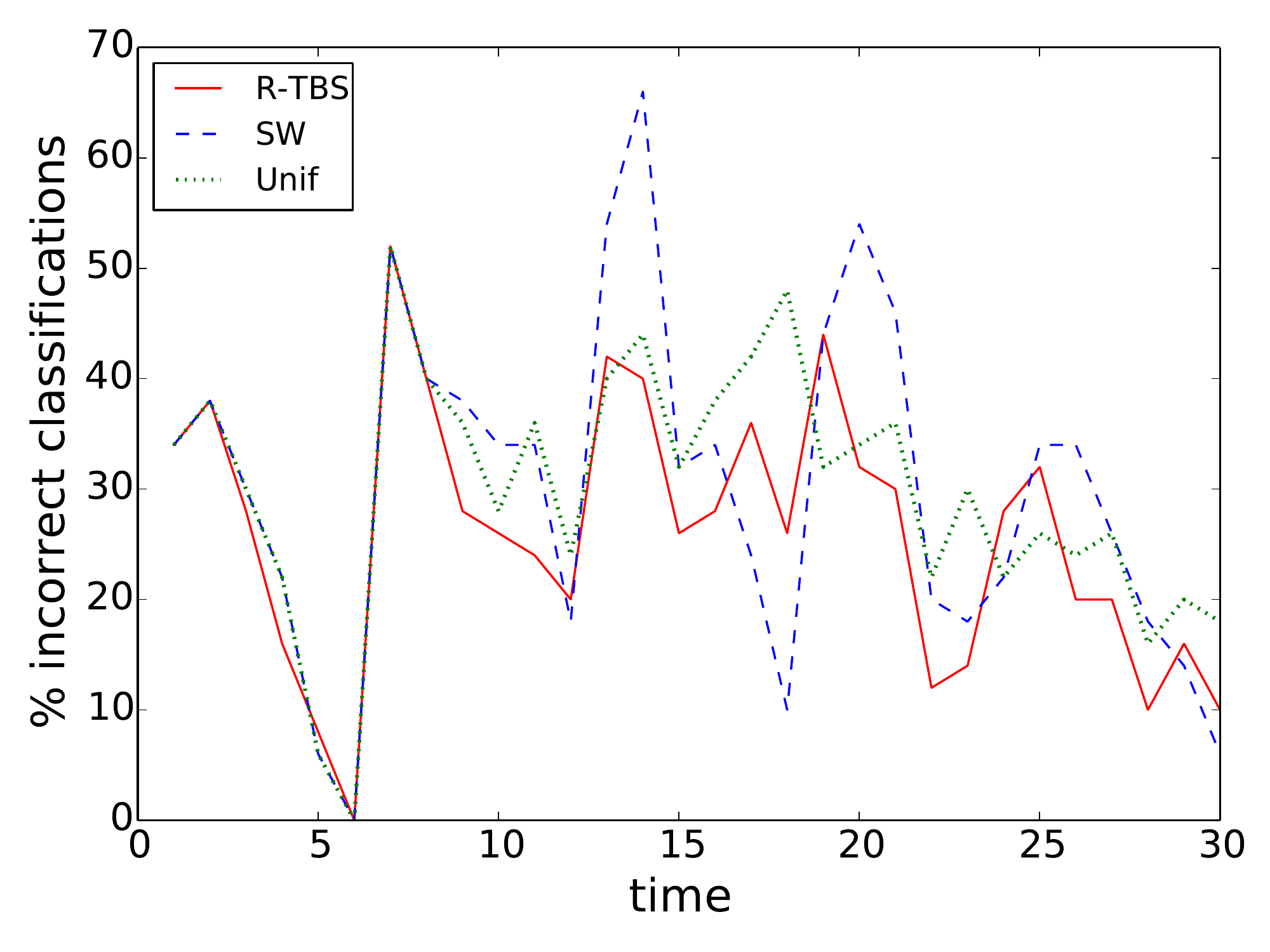} 
	\SmallCrunch
		\caption{Misclassification rate (percent) for Naive Bayes}\label{fig:NBaccuracy}
	\end{minipage}
	\BigCrunch
\end{figure*}


\textbf{Unsaturated Samples:} We now investigate the case of unsaturated samples for R-TBS. We increase the target sample size to $n=1600$. With a constant batch size of 100, and a decay rate $\lambda=0.07$, the reservoir of R-TBS is never full, stabilizing at 1479 items, whereas Unif and SW both have a full sample of 1600 items. 

For the Periodic$(10, 10)$ pattern, shown in Figure~\ref{fig:regression1500_10}, SW has a window size large enough to keep some data from older time periods (up to 16 batches ago), making SW's robustness comparable to R-TBS (ES of 5.86 for SW and 5.97 for R-TBS). However, this amalgamation of old data also hurts its overall accuracy, with MSE rising to 4.17, as opposed to 3.50 for R-TBS. In comparison, the shape of R-TBS remains almost unchanged from Figure~\ref{fig:regression}, and Unif behaves as poorly as before. When the pattern changes to Periodic$(16, 16)$ as shown in Figure~\ref{fig:regression1500_15}, SW doesn't contain enough old data, making its prediction performance suffer from huge fluctuations again, and the superiority of R-TBS is more prominent. In both cases, R-TBS provides the best overall performance, despite having a smaller sample size. This backs up our earlier claim that more data is not always better. A smaller but more balanced sample with good ratios of old and new data can provide better prediction performance than a large but unbalanced sample.

\subsection{Application: Naive Bayes}

In our final experiment, we evaluate the performance of R-TBS for retraining Naive Bayes models with the Usenet2 dataset (\url{mlkd.csd.auth.gr/concept_drift.html}), which was used in~\cite{Katakis2007} to study classifiers coping with recurring contexts in data streams. This dataset contains a stream of 1500 messages on different topics from the 20 News Groups Collections~\cite{Lichman2013}. They are sequentially presented to a simulated user who marks whether a message is interesting or not. The user's interest changes after every 300 messages. More details of the dataset can be found in~\cite{Katakis2007}.



Following~\cite{Katakis2007}, we use Naive Bayes with a bag of words model, and set the optimal parameters for SW with maximum sample size of 300 and batch size of 50. Since this dataset is rather small and contexts change frequently, we use the optimal value of $0.3$ for $\lambda$. We find through experiments that R-TBS displays higher prediction accuracy for all $\lambda$ in the range of $[0.1,0.5]$, so precise tuning of $\lambda$ is not critical. In addition, there is not enough data to warm up the models on different sampling schemes, so we report the model performance on all the 30 batches.
Similarly, we report 20\% ES for this dataset, due to the limited number of batches.

The results are shown in Figure~\ref{fig:NBaccuracy}. The misprediction rate for R-TBS, SW, and Unif are 26.5\%, 30.0\%, and 29.5\%; and the 20\% ES values are 43.3\%, 52.7\%, and 42.7\%. Importantly, for this dataset the changes in the underlying data patterns are less pronounced than in the previous two experiments. Despite this, SW fluctuates wildly, yielding inferior accuracy and robustness. In contrast, Unif barely reacts to the context changes. As a result, Unif is very slightly better than R-TBS with respect to robustness, but at the price of lower overall accuracy. Thus, R-TBS is generally more accurate under mild fluctuations in data patterns, and its superior robustness properties manifest themselves as the changes become more pronounced.

\section{Related Work}\label{sec:relwork}

\textbf{Time-decay and sampling:} Work on sampling with unequal probabilities goes back to at least Lahiri's 1951 paper~\cite{Lahiri51}.
A growing interest in streaming scenarios with weighted and decaying items began in the mid-2000's, with most of that work focused on computing specific aggregates from such streams, such as heavy-hitters, subset sums, and quantiles; see, e.g., \cite{AlonDLT05a,CohenS06,CormodeKT08}. The first papers on time-biased reservoir sampling with exponential decay are due to Aggarwal~\cite{Aggarwal06} and Efraimidis and Spirakis~\cite{efraimidisS06}; batch arrivals are not considered in these works. As discussed in Section~\ref{sec:intro}, the sampling schemes in~\cite{Aggarwal06} are tied to item sequence numbers rather than the wall clock times on which we focus; the latter are more natural when dealing with time-varying data arrival rates.

Cormode et al.~\cite{CormodeSSX09}
propose a \eat{specific} time biased reservoir sampling algorithm based on the A-Res weighted sampling scheme proposed in~\cite{efraimidisS06}. Rather than enforcing \eqref{eq:expratio}, the algorithm enforces the (different) A-Res biasing scheme.
In more detail, if $s_{i}$ denotes the element at slot $i$ in the reservoir, then the algorithm in~\cite{efraimidisS06} implements a scheme where an item $x$ is chosen to be at slot $i + 1$ in the reservoir with probability $w_x/(\sum_{j= 1}^x w_j - \sum_{j=1}^{i} w_{s_j})$. From the form of this equation, it becomes clear that resulting sampling algorithm violates \eat{the relative inclusion relationship in}\eqref{eq:expratio}. Indeed, Efraimidis~\cite{Efraimidis15} gives some numerical examples illustrating this point (in his comparison of the A-Res and A-Chao algorithms). Again, we would argue that the constraint on appearance probabilities in \eqref{eq:expratio} is easier to understand in the setting of model management than the foregoing constraint on initial acceptance probabilities. 

The closest solution to ours adapts the weight\-ed sampling algorithm of Chao~\cite{Chao82} to batches and time decay; we call the resulting algorithm B-Chao and describe it in Appendix~\ref{sec:chao}\longPaper. Unfortunately, as discussed, the relation in \eqref{eq:expratio} is violated both during the initial fill-up phase and whenever the data arrival rate becomes slow relative to the decay rate, so that the sample contains ``overweight'' items. Including overweight items causes over-representation of older items, thus potentially degrading predictive accuracy.
The root of the issue is that the sample size is nondecreasing over time. The R-TBS algorithm is the first algorithm to correctly (and optimally) deal with ``underflows'' by allowing the sample to shrink---thus handling data streams whose flow rates vary unrestrictedly over continuous time. The current paper also explicitly handles batch arrivals and explores parallel implementation issues. The VarOpt sampling algorithm of Cohen et al.~\cite{CohenDKLT11}---which was developed to solve the specific problem of estimating ``subset sums''---can also be modified to our setting. The resulting algorithm is more efficient than Chao, but as stated in \cite{CohenDKLT11}, it has the same statistical properties, and hence does not satisfy \eqref{eq:expratio}.

\textbf{Model management:} A key goal of our work is to support model management; see~\cite{GamaZBPB14} for a survey on methods for detecting changing data---also called ``concept drift'' in the setting of online learning---and for adapting models to deal with drift. As mentioned previously, one possibility is to re-engineer the learning algorithm. This has been done, for example, with support-vector machines (SVMs) by developing incremental versions of the basic SVM algorithm~\cite{CauwenberghsP00} and by adjusting the training data in an SVM-specific manner, such as by adjusting example weights as in Klinkenberg~\cite{Klinkenberg04}. Klinkenberg also considers using curated data selection to learn over concept drift, finding that weighted data selection also improves the performance of learners. Our approach of model retraining using time-biased samples follows this latter approach, and is appealing in that it is simple and applies to a large class of machine-learning models. The recently proposed Velox system for model management~\cite{CrankshawBGLZFG15} ties together online learning and statistical techniques for detecting concept drift. After detecting drift through poor model performance, Velox kicks off batch learning algorithms to retrain the model. Our approach to model management is complementary to the work in~\cite{CrankshawBGLZFG15} and could potentially be used in a system like Velox to help deployed models recover from poor performance more quickly. The developers of the recent MacroBase system~\cite{BailisGMNRS17} have incorporated a time-biased sampling approach to model retraining, for identifying and explaining outliers in fast data streams. MacroBase essentially uses Chao's algorithm, and so could potentially benefit from the R-TBS algorithm to enforce the inclusion criterion~\eqref{eq:expratio} in the presence of highly variable data arrival rates.
\SmallCrunch

\section{Conclusion}\label{sec:concl}

Our experiments with classification and regression algorithms, together with the prior work on graph analytics in \cite{XieTSBH15}, indicate the potential usefulness of periodic retraining over time-biased samples to help ML algorithms deal with evolving data streams without requiring algorithmic re-engineering. To this end we have developed and analyzed several time-biased sampling algorithms that are of independent interest. In particular, the R-TBS algorithm allows simultaneous control of both the item-inclusion probabilities and the sample size, even when the data arrival rate is unknown and can vary arbitrarily. R-TBS also maximizes the expected sample size and minimizes sample-size variability over all possible bounded-size algorithms with exponential decay. Using techniques from \cite{CormodeSSX09}, we intend to generalize these properties of R-TBS to hold under arbitrary forms of temporal decay.  

We have also provided techniques for distributed implementation of R-TBS and T-TBS, and have shown that use of time-biased sampling together with periodic model retraining can improve model robustness in the face of abnormal events and periodic behavior in the data. In settings where (i) the mean data arrival rate is known and (roughly) constant, as with a fixed set of sensors, and (ii) occasional sample overflows can be easily dealt with by allocating extra memory, we recommend use of T-TBS to precisely control item-inclusion probabilities. In many applications, however,  we expect that either (i) or (ii) will violated, in which case we recommend the use of R-TBS. Our experiments showed that R-TBS is superior to sliding windows over a range of $\lambda$ values, and hence does not require highly precise parameter tuning; this may be because time-biased sampling avoids the all-or-nothing item inclusion mechanism inherent in sliding windows. 


\bibliographystyle{ACM-Reference-Format}
\bibliography{tbs-mm}

\begin{appendix}
\appendix

\section{Bernoulli time-biased sampling}\label{sec:BTBS}

\begin{algorithm}[h]
\caption{Bernoulli time-biased sampling (B-TBS)}\label{alg:bernsamp}
{\footnotesize
$\lambda$: decay factor ($\ge 0$)
\BlankLine
Initialize: $S\gets S_0$; $p\gets e^{-\lambda}$\Comment*[r]{$p=$ retention prob.}
\For{$t\gets1,2,\ldots$}{
$M \gets \textsc{Binomial}(|S|,p)$\Comment*[r]{simulate $|S|$ trials}\label{ln:binom}
$S \gets \textsc{Sample}(S,M)$\Comment*[r]{retain $M$ random elements}
$S\gets S\cup \xB_t$\;\label{ln:accept}
output $S$}
}
\end{algorithm}

Al\-go\-ri\-thm~\ref{alg:bernsamp} implements the time-biased Bernoulli sampling scheme discussed in Section~\ref{sec:background}, adapted to batch arrivals. As described previously, at each time~$t$ we accept each incoming item $x\in \xB_t$ into the sample with probability~1 (line~\ref{ln:accept}). At each subsequent time $t'>t$, we flip a coin independently for each item currently in the sample: an item is retained in the sample with probability~$p$ and removed with probability~$1-p$. As for the T-TBS algorithm,  we simulate the $|S|$ coin tosses by directly generating the number of successes according to a binomial distribution.

From the intuitive description of the algorithm, we see that, at time~$t'=t+k$ (where $k\ge 0$), we have for $x\in \xB_t$ that 
\begin{equation}\label{eq:bern1}
\begin{split}
\prob{x\in S_{t'}}
&=\prob{x\in S_t}\times\prod_{i=1}^k \prob{x\in S_{t+i}\mid x\in S_{t+i-1}}\\
&= 1\times p^k=e^{-\lambda k} = e^{-\lambda(t'-t)},
\end{split}
\end{equation}
where we have used the fact that the sequence of samples is a set-valued Markov process, so that
\[
\begin{split}
&\prob{x\in S_{t+i}\mid x\in S_{t+i-1}, x\in S_{t+i-2},\ldots,x\in S_t}\\
&\quad = \prob{x\in S_{t+i}\mid x\in S_{t+i-1}} 
\end{split}
\]
for $1\le i\le k$.  This is essentially the algorithm used, e.g., in \cite{XieTSBH15} to implement time-biased edge sampling in dynamic graphs.

Since \eqref{eq:expratio} follows immediately from \eqref{eq:bern1}, we see that Algorithm~\ref{alg:bernsamp} precisely controls the relative inclusion probabilities via the decay rate $\lambda$.


\section{Batched Reservoir Sampling}\label{sec:BRS}

\begin{algorithm}[tbh]
\caption{Batched reservoir sampling (B-RS)}\label{alg:rs}
{\small
$n$: maximum sample size
\BlankLine
Initialize: $S\gets S_0$; $W\gets |S_0|$;\Comment*[r]{$|S_0|\le n$}
\For{$t=1,2,\ldots$}{
$C=\min(n,W+|\xB_t|)$\Comment*[r]{new sample size}\label{ln:bddsize}
$M \gets \textsc{HyperGeo}\bigl(C,|\xB_t|,W\bigr)$\;\label{ln:hyper}
\Comment{add $M$ elements to $S$,}
\Comment{overwrite $\max(|S|+M-n,0)$ elements}
$S\gets \textsc{Sample}\bigl(S,\min(n-M,|S|)\bigr)\cup \textsc{Sample}(\xB_t,M)$\;\label{ln:haccept}
$W\gets W+|\xB_t|$\;
output $S$}
}
\end{algorithm}

Algorithm~\ref{alg:rs} is the classical res\-er\-voir-sampling algorithm, modified to handle batch arrivals. (To our knowledge, this variant has not appeared previously in the literature.) Note that the sample size is bounded above by $n$ (line~\ref{ln:bddsize}). This algorithm, although bounding the sample size, does not allow time-biased sampling or, equivalently, only supports decay rate $\lambda= 0$. That is, at any given time, all items seen so far are equally likely to be in the sample; see \eqref{eq:expratio}. In the algorithm, \textsc{Sample} is defined as before and $\textsc{HyperGeo}(k,a,b)$ returns a sample from the hypergeometric$(k,a,b)$ distribution having probability mass function $p(n)=\binom{a}{n}\binom{b}{k-n}/\binom{a+b}{k}$ if $\max(0,k-b)\le n\le\min(a,k)$ and $p(n)=0$ otherwise; see \cite{KacS85} for a discussion of efficient implementations of \textsc{HyperGeo}.

As with the standard reservoir algorithm, Algorithm~\ref{alg:rs} enforces a stronger property at each time point $t$ than merely requiring that all marginal inclusion probabilities are equal: it is a \emph{uniform} sampling scheme. Specifically, if $U_t$ denotes, as before, the set of all items seen through time~$t$, then all possible samples of size $C_t=\min(n,W_t)$ are equally likely, where $W_t=|U_t|$  is the number of items seen through time~$t$.

To see this, observe that this property holds trivially for $S_0$. Assuming for induction that $S_{t-1}$ is a uniform sample from $U_{t-1}$, suppose that we execute $|\xB_t|$ steps of the standard reservoir algorithm. As shown, e.g., in \cite{HaasStr16}, the resulting sample $S'_t$ is a uniform sample of $U_t$. Thus the number of $\xB_t$ items in $S'_t$ has a hypergeometric$(|S_t|,|\xB_t|,W_{t-1})$ distribution, whereas the remaining items in the sample comprise a uniform sample from $U_{t-1}$. Algorithm~\ref{alg:rs} simulates this sequential execution of the standard reservoir algorithm by directly generating $S_t$ so that it has the same distribution as $S'_t$. In detail, the correct number of items $M=|S_t\cap \xB_t|$ to include from $\xB_t$ is selected according to the appropriate hypergeometric distribution as above, and then the $M$ items to insert are drawn uniformly from $\xB_t$. Next, the remaining $|S_t|-M$ uniformly sampled items from $U_{t-1}$ are generated by uniformly subsampling from $S_{t-1}$. This works because a uniform subsample of a uniform sample is itself uniform.

\section{Proofs}\label{sec:proofs}

\textbf{Proof of Theorem~\ref{th:recurr}} Let the random variable $B$ have the common distribution of $\{B_t\}_{t\ge 1}$ and write $p_B(k)=\prob{B=k}$ for $k\ge 0$. Observe that $\{C_t\}_{t\ge 1}$ is a time-homogeneous irreducible aperiodic Markov chain on the nonnegative integers~\cite{bremaud99} with state-transition probabilities given for $i,j\ge 0$ by
\[
p_{ij}=\prob{C_{t+1}=j\mid C_t=i}=\sum_{j\ge 0}p_B(k)\prob{\Delta_{i,k}=j-i}
\]
or, more compactly,  $p_{ij}=\prob{\Delta_{i,B}=j-i}$. Here each $\Delta_{i,k}$ is independent of $B$ and is distributed as the difference of independent binomial$(k,q)$ and binomial$\bigl(i,(1-p)\bigr)$ random variables. Thus the assertion in (i) is equivalent to the assertion that the chain $\{C_t\}_{t\ge 1}$ is positive recurrent; see~\cite[Thm.~2.7.3 and Chap.~3]{bremaud99}. To prove positive recurrence, define functions $V(i)=(i-n)^2$ and $h(i)=ip(1-p)-(i-n)^2(1-p^2)+n(1-p)(1-q)+\sigma_B^2q^2$ for $i\ge 0$, and fix a positive constant $\epsilon$. Here $\sigma^2_B$ denotes the common variance of the batch sizes; by assumption, $\sigma^2_B<\infty$. Then $\lim_{i\to\infty}h(i)=-\infty$, so that there exists an integer $i_\epsilon$ such that $h(i)<-\epsilon$ for all $i>i_\epsilon$. Next, define the finite set $F=\{\,i : i\le i_\epsilon\,\}$. Using basic properties of the binomial distribution together with the fact that $bq=n(1-p)$, we find that, for any $t\ge 1$,
\begin{equation}\label{eq:foster1}
\begin{split}
&\max_{i\in F}\mean[V(C_{t+1})\mid C_t=i]
= \max_{i\in F} \mean[(i+\Delta_{i,B_t}-n)^2]\\
&\quad= \max_{i\in F}\bigl((i-n)^2 + h(i)\bigr) <\infty
\end{split}
\end{equation}
and
\begin{equation}\label{eq:foster2}
\begin{split}
&\mean[V(C_{t+1})-V(C_t)\mid C_t=i]\\
&\quad =\mean[(i+\Delta_{i,B_t}-n)^2-(i-n)^2] = h(i) <-\epsilon
\end{split}
\end{equation}
for $i\not\in F$.
The assertion in~(i) now follows from \eqref{eq:foster1}, \eqref{eq:foster2}, and Foster's Theorem; see, e.g., \cite[Thm.~5.1.1]{bremaud99}.

To prove the assertion in~(ii), observe that
\[
\begin{split}
&\mean[C_t] = \mean\bigl[\mean[C_t\mid C_{t-1},B_t]\bigl]\\
&\quad=\mean[C_{t-1}+qB_t-(1-p)C_{t-1}) ]
=p\mean[C_{t-1}] + n(1-p).
\end{split}
\]
Iterating the above calculation yields assertion~(ii), since $\mean[C_0]=C_0$ by assumption. A similar computation, but for $\mean[C_t^2]$, shows that 
\begin{equation}\label{eq:varC}
\var[C_t]=\alpha n+\sigma^2_Bq^2/(1-p^2)+O(p^t)
\end{equation}
for $t>0$, where $\alpha=(1+p-q)/(1+p)$ and $\sigma^2_B$ is the common variance of the $B_t$'s.

To establish (iii), observe that, by (i), the chain $\{C_t\}_{t\ge 0}$ is ergodic, and so has a stationary distribution $\pi$ \cite[Thm.~3.3.1]{bremaud99}. This distribution is also a limiting distribution of the chain; in other words, $C_t\Rightarrow C_\infty$ for any fixed initial state, where $\Rightarrow$ denotes convergence in distribution and $C_\infty$  has distribution $\pi$ \cite[Thm.~4.2.1]{bremaud99}. Moreover, \eqref{eq:varC} implies that $\sup_t\mean[C_t^2]<\infty$, so that $\{C_t\}_{t\ge 0}$ is uniformly integrable and thus $\mean[C_\infty]=\lim_{t\to\infty}E[C_t]=n$~\cite[p.~338]{Billingsley95}. Finally, by the strong law of large numbers for Markov chains---see, e.g., \cite[Thm.~3.4.1]{bremaud99}, $\lim_{t\to\infty}(1/t)\sum_{i=0}^t C_i=E[C_\infty]=n$ w.p.1.

To prove (iv), we actually prove a stronger result. Recall that the \emph{cumulant generating function (cgf)} of a random variable $X$ is defined by $K_X(s)=\log\mean[e^{sX}]$ for all real~$s$ such that the right side is finite. Denote by $K_B$ the common cgf of the $B_t$'s and set
\[
g^+_t(n,\eps)=\min_{s>1}K_B\bigl(n(s-1)/b\bigr)+p^tn(s-1)-(1+\eps)n\ln s
\]
and
\[
g^-_t(n,\eps)=\min_{s\in(0,1)}K_B\bigl(n(s-1)(1-p^t)/b\bigr)-(1-\eps)n\ln s.
\]
We now show that
\begin{equation}\label{eq:ineqa}
\prob{C_t \le (1-\eps)n}\le e^{g^-_t(n,\eps)}
\end{equation}
for $\eps,t>0$ and
\begin{equation}\label{eq:ineqb} 
\prob{C_t \le (1-\eps)n}\le e^{g^-_t(n,\eps)}
\end{equation}
for $\eps\in [0,1]$ and $t\ge \ln\eps/\ln p$, whether or not the batch size distribution has finite support. Then (v) follows after noting that if $\prob{B_t\le \beta}=1$, then $K_B(s)\le \beta s$ for $s>0$. To prove \eqref{eq:ineqa} and  \eqref{eq:ineqb}, we require the following technical lemma.

\begin{lemma}\label{lem:ineq}
Let $X$ be a nonnegative random variable, $k$ a positive integer, and $a_1,a_2,\ldots,a_k$ a set of constants such that either (i) $a_j\in (0,1)$ for $1\le j\le k$ or (ii) $a_j>1$ for $1\le j\le k$, and set $G(s)=\mean[s^X]$ for $s> 0$. Then 
\[
\prod_{j=1}^k G(a_j)\le G\Bigl(\prod_{j=1}^k a_j\Bigl).
\]
\end{lemma}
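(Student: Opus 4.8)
The plan is to take logarithms and recognize the claim as a superadditivity property of the cumulant generating function. Writing $b_j=\ln a_j$ and $f(t)=\ln G(e^t)=\ln\mean[e^{tX}]=K_X(t)$, the asserted inequality $\prod_j G(a_j)\le G(\prod_j a_j)$ is equivalent, after taking logs of both sides, to
\[
\sum_{j=1}^k f(b_j)\le f\Bigl(\sum_{j=1}^k b_j\Bigr),
\]
where in case (i) every $b_j<0$ and in case (ii) every $b_j>0$; in either case all the $b_j$ share a common sign. Since $X\ge 0$ and $s>0$ we have $G(s)\in(0,\infty]$, so $f$ is well defined (with the value $+\infty$ allowed), and $f(0)=\ln\mean[1]=0$.

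First I would record the two facts about $f$ that drive the argument: $f$ is convex---being a cgf, this follows from H\"older's inequality---and $f(0)=0$. From these I would prove the two-term version, namely $f(x)+f(y)\le f(x+y)$ whenever $x$ and $y$ have the same sign. The key observation is that when $x$ and $y$ are of the same sign, the identity $x=\tfrac{x}{x+y}(x+y)+\tfrac{y}{x+y}\cdot 0$ exhibits $x$ as a \emph{genuine} convex combination of $x+y$ and $0$, since the weights $x/(x+y)$ and $y/(x+y)$ are then nonnegative and sum to one. Convexity gives $f(x)\le\tfrac{x}{x+y}f(x+y)$ and, symmetrically, $f(y)\le\tfrac{y}{x+y}f(x+y)$; adding these and using $f(0)=0$ yields $f(x)+f(y)\le f(x+y)$. (The case where the right side is $+\infty$ is trivial, and in case (i) everything is finite, since $s\in(0,1)$ and $X\ge0$ force $s^X\le 1$ and hence $G\le 1$.)

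I would then finish by induction on $k$. The base case $k=1$ is an identity. For the inductive step, the partial sum $\sum_{j=1}^{k-1}b_j$ has the same sign as $b_k$---this is precisely where the common-sign hypothesis is used---so the two-term inequality applies to $\sum_{j=1}^{k-1}b_j$ and $b_k$, giving
\[
\sum_{j=1}^k f(b_j)\le f\Bigl(\sum_{j=1}^{k-1}b_j\Bigr)+f(b_k)\le f\Bigl(\sum_{j=1}^k b_j\Bigr),
\]
where the first inequality is the induction hypothesis. Re-exponentiating recovers the lemma.

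The main obstacle is not any hard estimate but rather the bookkeeping needed to keep the hypotheses of the convexity step valid: one must verify that the common-sign assumption is exactly what guarantees both that the weights in the convex combination are nonnegative and that the partial sums never change sign during the induction. A secondary point requiring a word of care is the possibility that $G(a_j)$ or $G(\prod_j a_j)$ is infinite in case (ii); here I would note that $G$ is nondecreasing on $[1,\infty)$ (since $s^X$ is nondecreasing in $s$ for $X\ge 0$), so either the right-hand side is infinite and the bound is trivially true, or it is finite and then every intermediate value of $f$ is finite as well, legitimizing the manipulations.
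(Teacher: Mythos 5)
Your proof is correct and follows essentially the same route as the paper's: take logarithms, recognize $\ln G(e^t)$ as the cumulant generating function $K_X(t)$, and combine its convexity with $K_X(0)=0$ to get superadditivity over same-signed arguments. The only difference is that the paper cites a reference for the superadditivity step, whereas you prove it directly via the convex-combination identity $x=\frac{x}{x+y}(x+y)+\frac{y}{x+y}\cdot 0$ and induction, which is a welcome self-contained filling-in of that cited step.
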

\begin{proof}
For $1\le j\le k$, set $\alpha_j=\ln a_j$ so that $\ln G(a_j)=\ln \mean[e^{\alpha_jX}]=K_X(\alpha_j)$, where $K_X$ is the cgf of $X$. The function $K_X$, being a cgf, is convex~\cite[p.~106]{BoydV04}. Because $K_X(0)=0$, it follows that $K_X$ is superadditive---see~\cite[Th.~5]{Bruckner62} for case~(ii), with the same argument holding for case~(i).
Thus
\[
\sum_{j=1}^k \ln G(a_j)=\sum_{j=1}^k K_X(\alpha_j)\le K_X\Bigl(\sum_{j=1}^k \alpha_j\Bigr)
=\ln G\Bigl(\prod_{j=1}^k a_j\Bigr).
\]
Exponentiation now yields the desired result.
\end{proof}

If $X$ is a binomial$(m,r)$ random variable, then a standard calculation shows that $\mean[s^{\theta X}]=(rs^\theta+1-r)^m$ for $s>0$ and real $\theta$.
It follows that, w.p.1,
\[
\mean[s^{\Delta_{i,B}}\mid B]=(qs+1-q)^B\bigl((1-p)/s+p\bigr)^i
\]
for $s>0$, where $\Delta_{i,k}$ and $B$ are defined as before. Set $m_t(u)=\mean[u^{C_t}]$ and $G(u)=\mean[u^B]$ for $u\ge 1$. Now fix $t\ge 0$ and $s>1$, and observe that, since $B_t$ and $C_{t-1}$ are independent,
\[
\begin{split}
m_t(s)
&=\mean\bigl[\mean[s^{C_t}\mid C_{t-1},B_t]\bigr]\\
&=\mean\bigl[\mean[s^{C_{t-1}+\Delta_{C_{t-1},B_t}}\mid C_{t-1},B_t]\bigr]\\
&=\mean\bigl[s^{C_{t-1}}(qs+1-q)^{B_t}\bigl((1-p)/s+p\bigr)^{C_{t-1}}\bigr]\\
&= G(qs+1-q)m_{t-1}\,\bigl((1-p)+ps)\bigr).
\end{split}
\]
Iterating and using the fact that $m_0(s)=s^n$ by assumption, we have
\begin{equation}\label{eq:mts}
m_t(s)=\bigl(p^t(s-1)+1\bigr)^n  \theta_t(s),
\end{equation}
where
\begin{equation}\label{eq:thetabd}
\theta_t(s)=\prod_{j=0}^{t-1}G\bigl(qp^j(s-1)+1\bigr)\le G\bigl(z(s)\bigr),
\end{equation}
with $z(s)=\prod_{j=0}^{t-1}\bigl(qp^j(s-1)+1\bigr)$; the inequality follows from Lemma~\ref{lem:ineq}, case~(ii). Next, observe that
\begin{equation}\label{eq:lnz}
\begin{split}
&\ln z(s)\le \sum_{j=0}^\infty\ln\bigl(qp^j(s-1)+1\bigr)
\le \sum_{j=0}^\infty qp^j(s-1)\\
&\quad=q(s-1)/(1-p)
=n(s-1)/b,
\end{split}
\end{equation}
where we have used both the fact that $\ln(1+x)\le x$ for all $x\ge -1$ and the identity $bq=n(1-p)$. By \eqref{eq:mts}--\eqref{eq:lnz}, we have
\[
\begin{split}
&m_t(s)\le\exp\Bigl(K_B\bigl(n(s-1)/b\bigr)+n\ln\bigl(1+p^t(s-1)\bigr)\Bigr)\\
&\ \le\exp\Bigl(K_B\bigl(n(s-1)/b\bigr)+p^tn(s-1)\Bigr),
\end{split}
\]
and, using Markov's inequality~\cite[p.~80]{Billingsley95} , we have
\[
\begin{split}
&\prob{C_t\ge (1+\eps) n}=\prob{s^{C_t}\ge s^{(1+\eps) n}}
\le m_t(s)/s^{(1+\eps)n}\\
&\ \le \exp\Bigl(K_B\bigl(n(s-1)/b\bigr)+p^tn(s-1)-(1+\eps)n\ln s\Bigr).
\end{split}
\]
Minimizing the right side of the above inequality with respect to $s$ yields \eqref{eq:ineqa}. The proof for \eqref{eq:ineqb} is similar, and uses case~(i) of Lemma~\ref{lem:ineq}.


\vskip\baselineskip
\textbf{Proof of Theorem~\ref{th:downsamp}} We first assume that $\pi=\{i^*\}$, so that there exists a partial item in $L$, and prove the result for $i=i^*$ and then for $i\not= i^*$. We then prove the result when $\pi=\emptyset$. 

\textit{Proof for $i=i^*$}: Observe that when $\pi=\{i^*\}$, we have $\prob{i^*\in S}=\frc(C)$. First suppose that $\floor{C'}=0$, so that $\frc(C')=C'$. Either the partial item $i^*$ is swapped and ejected in lines~\ref{ln:swap0} and \ref{ln:killA} or is retained as a partial item: $\pi'=\{i^*\}$. Thus
\[
\begin{split}
&\prob{i^*\in S'}=\prob{i^*\in S'\mid i^*\in L'}\prob{i^*\in L'}\\
&\quad=\frc(C')\prob{\text{no swap}}=\frc(C')\bigl(\frc(C)/C\bigr)\\
&\quad=(C'/C)\frc(C)=(C'/C)\prob{i^*\in S}.
\end{split}
\]
Next suppose that $0<\floor{C'}=\floor{C}$. Then the partial item may or may not be converted to a full item via the swap in line~\ref{ln:convert}. Denoting by
$r=\bigl(1-(C'/C)\frc(C)\bigr)/\bigl(1-\frc(C')\bigr)$ the probability that this swap does not occur, we have
\[
\begin{split}
&\prob{i^*\in S'}=\prob{i^*\in S'\mid i^*\in\pi'}\prob{i^*\in\pi'}\\
&\hskip0.8in+\prob{i^*\in S'\mid i^*\not\in\pi'}\prob{i^*\not\in\pi'}\\
&\quad=\frc(C')\cdot\prob{\text{no swap}}+1\cdot \prob{\text{swap}}\\
&\quad=1-r\bigl(1-\frc(C')\bigr)\\
&\quad=(C'/C)\frc(C)=(C'/C)\prob{i^*\in S}.
\end{split}
\]
Finally, suppose that $\floor{C'}<\floor{C}$. Either the partial item $i^*$ is swapped into $A$ in line~\ref{ln:swap} or ejected in line~\ref{ln:move}. Thus
\[
\begin{split}
\prob{i^*\in S'}&=\prob{\text{swap}}\\
&=(C'/C)\frc(C)=(C'/C)\prob{i^*\in S},
\end{split}
\]
establishing the assertion of the lemma for $i=i^*$ when the partial item~$i^*$ exists.

\textit{Proof for $i\not=i^*$:} Still assuming the existence of $i^*$, set $Y_j=1$ if item~$j$ belongs to $S'$ and $Y_j=0$ otherwise. Also set $p_j=\prob{j\in S'}=\mean[Y_j]$. Since all full items in $S$ are treated identically, we have $p_j\equiv p$ for  $j\in A$, and
\[
\mean[|S'|]=\mean\Bigl[\sum_{j\in A}Y_j+Y_{i^*}\Bigr]=\sum_{j\in A}\mean[Y_j]+\mean[Y_{i^*}]=\floor{C}p+p_{i^*}
\]
so that, using \eqref{eq:meansize}, 
\[
\begin{split}
&\prob{j\in S'}\\
&\ =(\mean[|S'|]-p_{i^*})/\floor{C}=\bigl(C'-(C'/C)\frc(C)\bigr)/\floor{C}\\
&\ =(C'/C)\bigl(C-\frc(C)\bigr)/\floor{C}=C'/C=(C'/C)\prob{j\in S}
\end{split}
\]
for any full item $j\in A$.

\textit{Proof when $\pi=\emptyset$:} We conclude the proof by observing that, if $\pi=\emptyset$, then
$
C'=\mean[|S'|]=\sum_{j\in A}p_j=\floor{C}p=Cp
$
and again $\prob{j\in S'}=C'/C=(C'/C)\prob{j\in S}$.


\vskip\baselineskip
\textbf{Proof of Theorem~\ref{th:rtbsIncl}}
The proof of \eqref{eq:inclRTBS} is by induction on $t$, and the various steps are given below.

\textit{Base case:} We start with $t=1$. If $B_1>n$, then, in lines~\ref{ln:uUpdateW} and \ref{ln:uUpdateA}, R-TBS forms an initial latent sample $L_0=(\xB_1,\emptyset,B_1)$ and then runs \textsc{Dsample} to bring the sample weight down to $n$ (line~\ref{ln:overshoot}). If we apply \eqref{eq:getsample} to $L_0$ to obtain a sample~$S_0$, then $\prob{i\in L_0}=\prob{i\in S_0}=1$ for $i\in\xB_1$. It now follows from Theorem~\ref{th:downsamp} that $\prob{i\in S_1}=(n/B_1)\prob{i\in S_0}=n/B_1=C_1\bigl(w_1(i)/W_1\bigr)$ since $w_1(i)=1$. Similarly, if $B_1\le n$, then $\prob{i\in S_1}=1=C_1\bigl(w_1(i)/W_1\bigr)$, since $C_1=W_1$ and $w_1(i)=1$.

\textit{Induction, unsaturated case:} Now assume for induction that \eqref{eq:inclRTBS} holds for time $t-1$. Suppose that $W_{t-1}<n$ so that $C_{t-1}=W_{t-1}$. After decaying $W_{t-1}$ to $W'_t$ and downsampling $L_{t-1}$ to $L'_t=(A'_t,\pi'_t,W'_t)$ in lines~\ref{ln:decayW} and \ref{ln:dsample}, the elements of $\xB_t$ are included to form a latent sample $L''_t=(A'\cup\xB_t,\pi'_t,W'_t+B_t)$ in lines~\ref{ln:uUpdateW} and \ref{ln:uUpdateA}. Note that the sample weight of $L'_t$ is $C'_t=W'_t$. If there is no overshoot, then for $i\in\xB_t$ we have $\prob{i\in S_t}=1=C_t\bigl(w_t(i)/W_t\bigr)$ since $C_t=W_t=W'_t+B_t$ and $w_t(i)=1$. For $i\in U_{t-1}$, we have
\[
\begin{split}
\prob{i\in S_t}
&=(W'_t/W_{t-1})\prob{i\in S_{t-1}}\\
&=(W'_t/W_{t-1})C_{t-1}\bigl(w_{t-1}(i)/W_{t-1}\bigr)\\
&=(W'_t/W_{t-1})w_{t-1}(i)=w_t(i)=C_t\bigl(w_t(i)/W_t\bigr),
\end{split}
\]
where the first equality follows from Theorem~\ref{th:downsamp}, the second follows from the induction hypothesis, and the last follows because $C_t=W_t=W'_t+B_t$. If there is an overshoot, then $L''_t$ is downsampled in line~\ref{ln:overshoot}. For $i\in\xB_t$, we have $\prob{i\in S_t}=n/(W'_t+B_t)= C_t/W_t=C_t\bigl(w_t(i)/W_t\bigl)$ by Theorem~\ref{th:downsamp}. Moreover, for $i\in U_{t-1}$, we have $\prob{i\in S_t}=w_t(i)\bigr(n/(W'_t+B_t)\bigr)=C_t\bigl(w_t(i)/W_t\bigr)$, where the first equality follows from our prior calculations plus Theorem~\ref{th:downsamp}, and the second equality follows from the fact that $C_t=n$ and $W_t=W'_t+B_t$.

\textit{Induction, saturated case:} Suppose that $W_{t-1}\ge n$, so that $C_{t-1}=n$. Also suppose that $W_t\ge n$ after decaying the weight $W_{t-1}$ to $W'_t$ and updating the weight to $W_t=W'_t+B_t$ in line~\ref{ln:newWeight}. Then a random number of batch items are inserted into the sample, replacing existing items (lines~\ref{ln:bround} and \ref{ln:sUpdateA}). For $i\in\xB_t$, the mean-preserving property of stochastic rounding implies that
\[
\begin{split}
\prob{i\in S_t}&=\mean\bigl[\prob{i\in S_t\mid m}\bigr]=\mean[m/B_t]=\mean[m]/B_t\\
&=n/W_t=C_t\bigl(w_t(i)/W_t\bigr)
\end{split}
\]
since $C_t=n$ and $w_t(i)=1$. Similarly, for $i\in U_{t-1}$, we have
\[
\begin{split}
\prob{i\in S_t}&=\frac{n-\mean[m]}{n}\left(\frac{nw_{t-1}(i)}{W_{t-1}}\right)=\frac{W_t-B_t}{W_t}\left(\frac{nw_{t-1}(i)}{W_{t-1}}\right)\\
&=\frac{W'_t}{W_t}\left(\frac{nw_{t-1}(i)}{W_{t-1}}\right)=n\frac{w_t(i)}{W_t}=C_t\frac{w_t(i)}{W_t}.
\end{split}
\]
Now suppose that $W_t<n$ after $W_{t-1}$ is updated to $W_t=W'_t+B_t$, so that there is an undershoot. Then $L_{t-1}=(A_{t-1},\emptyset,n)$ is downsampled to form $L_t=(A_t,\pi_t,W'_t)$ in line~\ref{ln:undershoot} and all items in $\xB_t$ are then inserted as full items (line~\ref{ln:fillup}). For $i\in\xB_t$, we have $\prob{i\in S_t}=1=C_t\bigl(w_t(i)/W_t)$ since $C_t=W_t=W'_t+B_t$ and $w_t(i)=1$. For $i\in U_{t-1}$ we have
\[
\begin{split}
\prob{i\in S_t}&=\frac{W_t-B_t}{n}\left(\frac{nw_{t-1}(i)}{W_{t-1}}\right)=\frac{W'_t}{W_{t-1}}w_{t-1}(i)\\
&=w_t(i)=C_t\big(w_t(i)/W_t\bigr)
\end{split}
\]
since $C_t=W_t=W'_t+B_t$. The proof is now complete.



\section{Chao's Algorithm}\label{sec:chao}

In this section, we provide pseudocode for a batch-oriented, time-decayed version of Chao's algorithm~\cite{Chao82} for maintaining a weighted reservoir sample of $n$ items, which we call B-Chao. In the algorithm, the function $\textsc{Get1}(x,A)$ randomly chooses an item $i$ in a set $A$, and then sets $x\gets i$ and  $A\gets A\setminus\{x\}$. We explain the function \textsc{Normalize} below.

\begin{algorithm}[t]
\caption{Batched version of Chao's scheme (B-Chao)}\label{alg:chao}
{\footnotesize
$\lambda$: decay factor ($\ge 0$)\;
$n$: reservoir size\;
\BlankLine
Initialize: $S\gets S_0$; $W\gets |S_0|$; $A\gets\emptyset$; $V\gets\emptyset$\Comment*[r]{$|S_0|\le n$}
\For{$t\gets1,2,\ldots$}{
  \Comment{update weights}
  $W\gets e^{-\lambda}W$\Comment*[r]{$W=$ agg. weight of non-overweight items}
  \lFor(\Comment*[f]{$V$ holds overwt items}){$(z,w_z)\in V$}{$w_z\gets e^{-\lambda}w_z$}
  \For{$j\gets 1,2,\ldots,|\xB_t|$}{
    $\textsc{Get1}(x,\xB_t)$ \Comment*[r]{get new item to process}
    \If(\Comment*[f]{reservoir not full yet}){$|S|<n$}{
      $S\gets S\cup \{x\}$; $W\gets W+1$\;
      }
    \Else(\Comment*[f]{reservoir is full}){
      $\textsc{Normalize}(x,V,A,W,\pi_x)$ \Comment*[r]{categorize items}
     \If{$\textsc{\emph{Uniform}}()\le \pi_x$}{
        \Comment{accept $x$ and choose victim to eject}
        $\alpha=0$; $y\gets$ {\bf null}; $U\gets\textsc{Uniform}()$\;
        \For(\Comment*[f]{attempt to choose from $A$}){$(z,w_z)\in A$}{
          $\alpha\gets\alpha + \bigl(1-\frac{(n-|V|)w_z}{W}\bigr)/\pi_x$\;
          \If{$U\le \alpha$}{$A\gets A\setminus\{(z,w_z)\}$; $y\gets z$; {\bf break}}
          }
        \lIf(\Comment*[f]{remove vic.$\in S$}){$y==$ {\bf null}}{$\textsc{Get1}(y,S)$}
        \lIf{$(x,1)\notin V$}{$S\gets S\cup\{x\}$}
        }
      \Comment{if no longer overweight, stop tracking}
      $S\gets S\cup \{z:(z,w_z)\in A\}$; $A\gets\emptyset$
      }
    }
  output $S\cup \{z:(z,w_z)\in V\}$
  }
}
\end{algorithm}

\begin{algorithm}[t]
\caption{Normalization of appearance probabilities}\label{alg:normp}
{\footnotesize
$x$: newly arrived item (has weight $=1$)\;
$V$: set of items that remain overweight (and their weights)\;
$A$: set of items that become non-overweight (and their weights)\;
$W$: aggregate weight of non-overweight items\;
$\pi_x$: inclusion probability for $x$\;
$n$: reservoir size\;
\BlankLine
$W\gets W+1+\sum_{(z,w_z)\in V}w_z$ \Comment*[r]{agg. wt. of new \& sample items}
\If(\Comment*[f]{$x$ is not overweight}){$n/W\le 1$}{
  $A\gets V$; $V\gets\emptyset$ \Comment*[r]{no item is now overweight}
  $\pi_x\gets n/W$
  }
\Else(\Comment*[f]{$x$ is overweight}){
  $\pi_x\gets 1$;
  $W\gets W-1$\;
  $D\gets \{(x,1)\}$ \Comment*[r]{$D=$ set of overweight items so far}
  \Repeat{$(n-|D|)w_z/W\le1$\Comment*[f]{first non-overweight item}}{
    $(z,w_z)\gets\textsc{GetMax}(V)$\;
    \If(\Comment*[f]{$z$ remains overweight}){$(n-|D|)w_z/W>1$}{
      $D\gets D\cup\{(z,w_z)\}$; $W\gets W-w_z$
      }
    \Else(\Comment*[f]{$z$ no longer overweight}){
      $A\gets A\cup \{(z,w_z)\}$}
      }
    $A\gets A\cup V$; $V\gets D$ \Comment*[r]{no more overweight items in $V$}
  }
}
\end{algorithm}

Note that the sample size increases to~$n$ and remains there, regardless of the decay rate. During the initial period in which the sample size is less than $n$, arriving items are included with probability~1; if more than one batch arrives before the sample fills up, then clearly the relative inclusion property in \eqref{eq:expratio} will be violated since all items will appear with the same probability even though the later items should be more likely to appear. Put another way, the weights on the first $n$~items are all forced to equal~1.

After the sample fills up, B-Chao encounters additional technical issues due to ``overweight'' items. In more detail, observe that $\mean[|S|]=\sum_{i\in S}\pi_i$, where $\pi_i=P[i\in S]$. At any given moment we require that $\mean[|S|]=\sum_{i\in S}\pi_i=n$. If we also require for each~$i$ that $\pi_i\propto w_i$, then we must have  $\pi_i=n w_i/W$, where as usual $W=\sum_{i\in S}w_i$. It is possible, however, that $w_i/W>1/n$, and hence $\pi_i>1$, for one or more items $i\in S$. Such items are called \emph{overweight}. As in \cite{Chao82}, B-Chao handles this by retaining the most overweight item, say $i$,  in the sample with probability~1. The algorithm then looks at the reduced sample of size $n-1$ and weight $W-w_i$, and identifies the item, say $j$, having the largest weight $w_j$. If item~$j$ is overweight in that the modified relative weight $w_j/(W-w_i)$ exceeds $1/(n-1)$, then it is included in the sample with probability~1 and the sample is again reduced. This process continues until there are no more overweight items, and can be viewed as a method for categorizing items as overweight or not, as well as normalizing the appearance probabilities to all be less than 1. The \textsc{Normalize} function in Algorithm~\ref{alg:chao} carries out this procedure; Algorithm~\ref{alg:normp} gives the pseudocode. In Algorithm~\ref{alg:normp}, the function $\textsc{GetMax}(V)$ returns the pair $(z,w_z)\in V$ having the maximum value of $w_z$ and also sets $V\gets V\setminus\{(z,w_z)\}$; ties are broken arbitrarily. An efficient implemmentation would represent $V$ as a priority queue.

When overweight items are present, it is impossible to both maintain a sample size equal to $n$ and to maintain the property in \eqref{eq:expratio}. Thus, as discussed in Section~2.1 of \cite{Chao82}, the algorithm only enforces the relationship in \eqref{eq:expratio} for items that are not overweight. When the decay rate $\lambda$ is high, newly arriving items are typically overweight, and transform into non-overweight items over time due to the arrival of subsequent items. In this setting, recently-arrived items are overrepresented. The R-TBS algorithm, by allowing the sample size to decrease, avoids the over\-weight-item problem, and thus the violation of the relative inclusion property \eqref{eq:expratio}, as well as the complexity arising from the need to track overweight items and their individual weights (as is done in the pseudocode via $V$). We note that prior published descriptions of Chao's algorithm tend to mask the complexity and cost incurred by the handling of overweight items; R-TBS is lightweight compared to B-Chao.


\section{Implementation of D-R-TBS on Spark}\label{sec:spark-impl}

Spark is a natural platform for such implementations because it supports streaming, machine learning, and efficient distributed data processing, and is widely used.  Efficient implementation is relatively straightforward for T-TBS but decidedly nontrivial for R-TBS because of both Spark's idiosyncrasies and the coordination needed between nodes. The key is to leverage the in-place updating technique for RDDs introduced in \cite{XieTSBH15}.

\subsection{Spark Overview}
Spark is a general-purpose distributed processing framework based on a functional programming paradigm. Spark provides a distributed memory abstraction called a Resilient Distributed Dataset (RDD). An RDD is divided into partitions that are then distributed across the cluster for parallel processing. RDDs can either reside in the aggregate main memory of the cluster or in efficiently serialized disk blocks. An RDD is immutable and cannot be modified, but a new RDD can be constructed by transforming an existing RDD. Spark utilizes both lineage tracking and checkpointing of RDDs for fault tolerance. A Spark program consists of a single driver and many executors. The driver of a Spark program orchestrates the control flow of an application, while the executors perform operations on the RDDs, creating new RDDs.

\subsection{Distributed Data Structures}

Since we leverage Spark Streaming for data ingestion, the incoming batch $\xB_t$ at time $t$ is naturally stored as an RDD. For the reservoir, we have two alternatives: key-value store and co-partitioned reservoir, as discussed in Section~\ref{sec:reservoir}. The integration of a non-native key-value store with Spark usually incurs extra overhead. For example, to apply a Spark ML algorithm to the sample requires exporting all items in the key-value store into an RDD. Furthermore, when there is a failure, computation has to restart from the last checkpoint to ensure the consistency of the reservoir. We now describe how we implement the co-partitioned reservoir in Spark.




\subsubsection{Co-partitioned Reservoir Implementation in Spark}

We would like to utilize the distributed fault-tolerant data structure, RDD, in Spark for the co-partitioned reservoir. However, storing the co-partitioned reservoir as a vanilla RDD also is not ideal. Because RDDs are immutable, the large numbers of reservoir inserts and deletes at each time point would trigger the constant creation of new RDDs, quickly saturating memory. 

An alternative approach employs the RDD with the in-place update technique in~\cite{XieTSBH15} to ensure that partitions for incoming batches coincide with reservoir partitions. The key idea is to share objects across different RDDs. In particular, we store the reservoir as an RDD, each partition of which contains only one object, a (mutable) vector containing the items in the corresponding reservoir partition. A new RDD created from an old RDD via a batch of inserts and deletes references the same vector objects as the old RDD. We keep the lineage of RDDs intact by notifying Spark of changes to old RDDs (by calling the \textsc{Unpersist} function), so that in case of failure, old RDDs (with old samples) can be recovered from checkpoints, and Spark's recovery mechanism based on lineage will regenerate the sample at the point of failure.

\subsection{Choosing Items to Delete and Insert}

Section~\ref{sec:updates} has detailed the centralized and distributed decisions for choosing items to delete and insert. Here, we add some Spark-related details for the centralized decisions.

All of the transient large data structures are stored as RDDs in Spark; these include the set of item locations for the insert items $\mathcal{Q}$, the set of retrieved insert items $\mathcal{S}$, and the set of item locations for the delete items $\mathcal{R}$. To ensure the co-partitioning of these RDDs with the incoming batch RDD (and the reservoir RDD when the co-partitioned reservoir is used), we use a customized partitioner. For the join operations between RDDs, we use by default a standard repartition-based join in Spark. However, when RDDs are co-partitioned and co-located, we implement a customized join algorithm that performs only local joins on corresponding partitions.

\section{Extra Experiments}\label{sec:more-expmts}

\begin{figure}[tbh]
 \centering
	 \subfigure[Periodic (20, 10)]{
	   \label{fig:per2010}\includegraphics[width=0.45\linewidth]{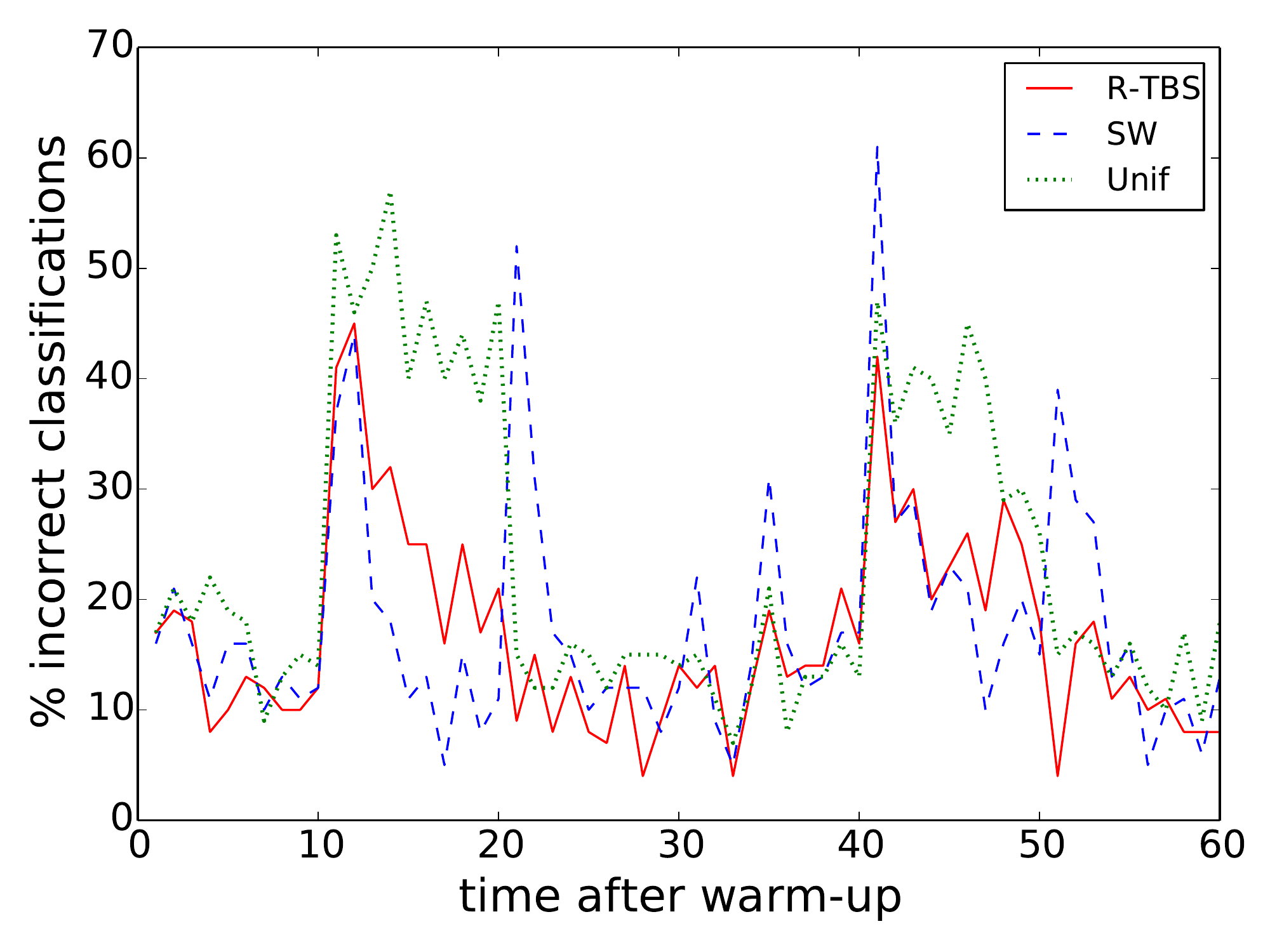}}
	 \subfigure[Periodic (30, 10)]{
	   \label{fig:per3010}\includegraphics[width=0.45\linewidth]{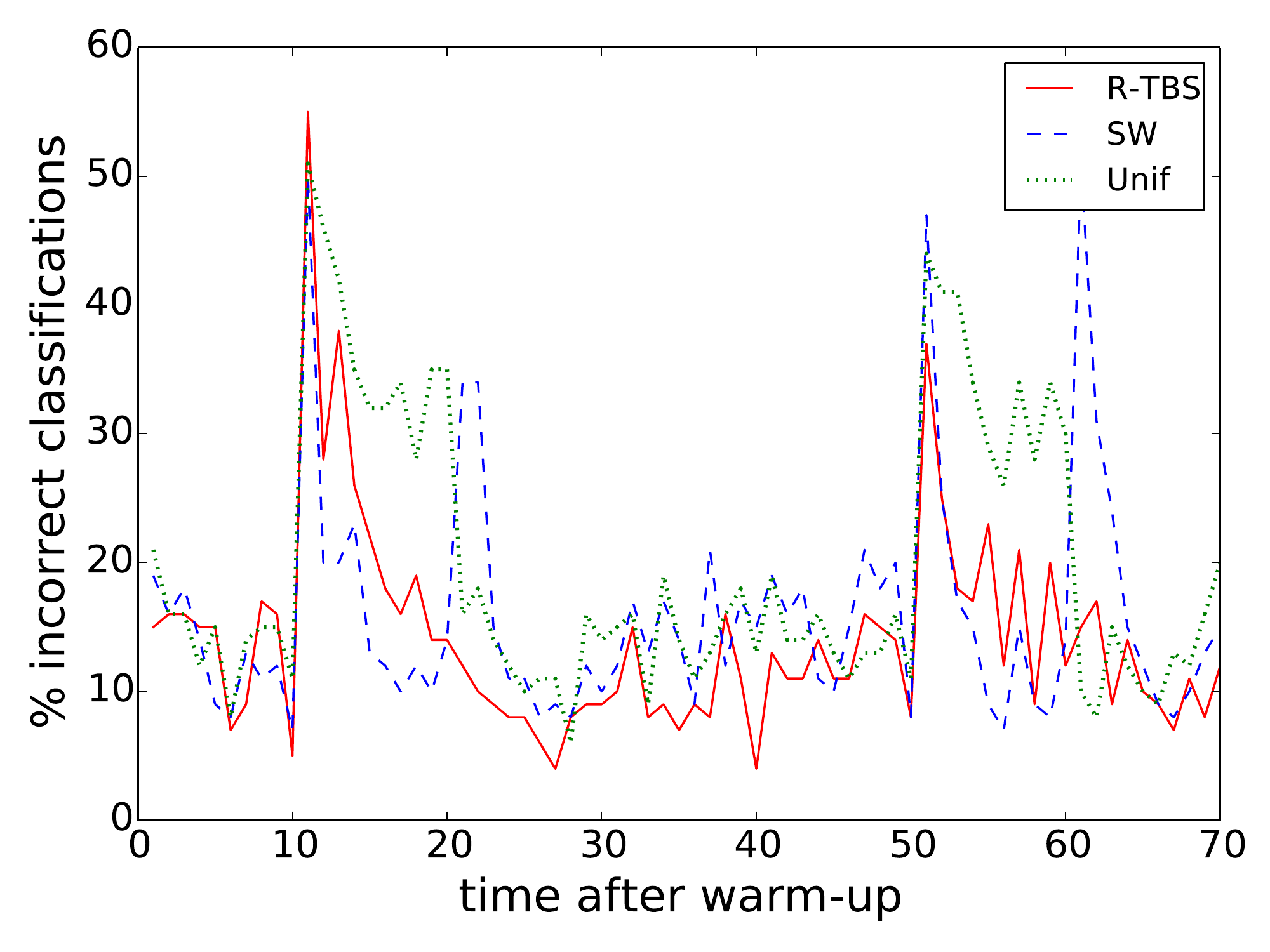}}
	\BigCrunch
  \caption{Misclassification rate (percent) for kNN}
  \BigCrunch
\end{figure}
\end{appendix}


\end{document}